\documentclass{article}

\RequirePackage{amsthm,amsmath,amsfonts,amssymb}
\RequirePackage[authoryear]{natbib}%
\RequirePackage[colorlinks,citecolor=blue,urlcolor=blue]{hyperref}
\RequirePackage{graphicx}
\usepackage[capitalize]{cleveref}
\usepackage{xr}
\externaldocument{supplementary}

\usepackage{bm}
\usepackage{subfig}
\usepackage{tabularx}
\usepackage{autonum}

\usepackage[margin=1in]{geometry}
\usepackage[capitalize]{cleveref}
\Crefname{assumption}{Assumption}{Assumptions}
\Crefname{appsec}{Appendix}{Appendices}

\usepackage{tabularx}
\newcolumntype{b}{X}
\newcolumntype{s}{>{\hsize=.25\hsize}X}
\newcolumntype{m}{>{\hsize=.75\hsize}X}

\newtheorem{theorem}{Theorem}
\newtheorem{lemma}{Lemma}
\newtheorem{proposition}{Proposition}
\newtheorem{corollary}{Corollary}
\theoremstyle{definition}
\newtheorem{example}{Example}
\newtheorem{assumption}{Assumption}
\newtheorem{definition}{Definition}
\newtheorem{remark}{Remark}

\newcommand{\retainlabel}[1]{\label{#1}\sbox0{\ref{#1}}}

\newcommand{\Fsminone}{\mathcal{F}_{s-1}}

\newcommand{\adpt}{\mathrm{adpt}} %
\newcommand{\svs}{\mathrm{svs}} %
\newcommand{\N}{\mathcal{N}} %
\newcommand{\skg}{\mathrm{skg}}  %

\newcommand{\bias}{\mathrm{bias}}

\newcommand{\YNsvs}{\widetilde{Y}^{\svs}}
\newcommand{\Ynpsvs}{Y^{\svs}}
\newcommand{\YNadpt}{\widetilde{Y}_\lambda}
\newcommand{\Ynpadpt}{Y_\lambda}

\newcommand{\muhatN}{\widetilde{\mu}_\lambda(s)} %
\newcommand{\muhatnp}{\widehat{\mu}_{\lambda, s}}

\newcommand{\Deltaqvar}{\Delta^{\mathrm{qvar}}}
\newcommand{\muhatnpzero}{\widehat{\mu}^{0}_{\lambda, s}}

\newcommand{\rmlSPRT}{\mathrm{rmlSPRT}}
\newcommand{\nmSPRT}{\mathrm{nmSPRT}}

\newcommand{\Natints}{\mathbb{N}}

\newcommand{\eg}{\emph{e.g.}}

\usepackage{appendix}

\title{%
Near-Optimal, Near-Calibrated, Non-Parametric Sequential Tests and Confidence Sequences with Possibly Dependent Observations}

\author{Aur\'elien Bibaut$^1$,\quad  Nathan Kallus$^{1,2}$,\quad 
~Michael Lindon$^1$}
\date{
$^1$Netflix\qquad 
$^2$Cornell University
}

\allowdisplaybreaks

\begin{document}

\maketitle

\begin{abstract}
Sequential tests and their implied confidence sequences, which are valid at arbitrary stopping times, promise flexible statistical inference and on-the-fly decision making. However, strong guarantees are limited to parametric sequential tests that under-cover in practice or concentration-bound-based sequences that over-cover and have suboptimal rejection times. In this work, we consider classic delayed-start normal-mixture sequential probability ratio tests, and we provide the first asymptotic type-I-error and expected-rejection-time guarantees under general non-parametric data generating processes, where the asymptotics are indexed by the test's burn-in time. The type-I-error results primarily leverage a martingale strong invariance principle and establish that these tests (and their implied confidence sequences) have type-I error rates asymptotically equivalent to the desired (possibly varying) $\alpha$-level. The expected-rejection-time results primarily leverage an identity inspired by It\^o's lemma and imply that, in certain asymptotic regimes, the expected rejection time is asymptotically equivalent to the minimum possible among $\alpha$-level tests. We show how to apply our results to sequential inference on parameters defined by estimating equations, such as average treatment effects. Together, our results establish these (ostensibly parametric) tests as general-purpose, non-parametric, and near-optimal. We illustrate this via numerical simulations and a real-data application to A/B testing at Netflix.
\end{abstract}

\tableofcontents

\section{Introduction}

Inference based on randomized experiments forms the basis of important decisions in an incredibly diverse range of domains, from medicine \citep{Schulzc332} to development economics \citep{banerjee2015miracle} to technology business \citep{amatriain2012netflix}. Toward the aim of making \emph{better} and \emph{faster} decisions, it is particularly helpful for statistical procedures to be \emph{flexible} \citep{grunwald}. Experimental designs which require a pre-specified sample size can be quite rigid in practice. For example, unless some oracle information is known about the effect size, such designs will ultimately include some over- or under-experimentation; collecting more samples than necessary when the treatment effect is under-estimated, and not collecting enough when over-estimated. Sequential designs, on the other hand, enable \emph{faster} decision-making as they support the ability to analyze data as and when it arrives, enabling us to stop experimenting when the data strongly supports a conclusion. The ability to continuously monitor experiments in turn leads to better decisions, as it prevents bad practices that arise from attempting to use more rigid statistical procedures in applications where resources and schedules often change dynamically.
 
\subsection{The sequential testing framework}\label{subsec:seq_testing_problem}

The sequential testing problem can in general be modelled by the following framework, which encompasses most settings from the literature we refer to.  The analyst receives a stream of random data points $X_1,X_2,\ldots$, all lying in a certain set $\mathcal{X}$, adapted to a filtration $\mathfrak{F}=(\mathcal{F}_t)_{t \geq 1}$. For any $t \geq 1$, we denote $\psi_t = E[X_t \mid \mathcal{F}_{t-1}]$ the conditional mean of $X_t$ given previous observations. We consider the problem of testing the statistical hypothesis $H_0$ that $\psi_t = \psi'_0$ for every $t \geq 1$, for some $\psi'_0 \in \mathbb{R}$, against the composite alternative $\widetilde{H}_{\backslash 0} : \cup_{\psi'_1 \neq \psi'_0} \widetilde{H}(\psi'_1)$, where $\widetilde{H}(\psi'_1) : \lim_{t \to \infty} \psi_t = \psi'_1$. In words, $\widetilde{H}_{\backslash 0}$ is the hypothesis that $\psi_t$ has a limit, and that this limit is not $\psi'_0$.
A sequential test is an $\mathfrak{F}$-adapted stopping time $\tau$. If $\tau = t$ for some $t < \infty$, we say that the sequential time rejects $H_0$ at $t$, while if $\tau = \infty$, we say that the test doesn't reject $H_0$.

\subsection{Challenges of sequential testing and limitations of the state of the art}

Analyzing data in this fashion requires special inference that explicitly accounts for the sequential nature of the decision-making process. It is well known that repeated application of classical significance tests to accumulating sets of data results in procedures with drastically inflated type-I error rates \citep{armitage}. Even under the null hypothesis, the absolute value of the $t$-statistic applied to an independent and identically distributed (i.i.d.) sequence is guaranteed to fall into the rejection region at some point, regardless of the chosen $\alpha$-level \citep{strassen1964invariance}. An analyst intent on disproving any one hypothesis can keep collecting data until this occurs.

For these reasons, a long line of literature has explored sequential testing and confidence sequences. However, we claim that the existing literature still does not provide all the guarantees necessary for the widespread use of confidence sequences and sequential tests in statistical practice. We believe indeed that practitioners would want to know three things before adopting them.
\begin{itemize}
\item Is type-I error, or equivalently, coverage guaranteed to hold at the predetermined level $\alpha$?
\item Is the procedure sample efficient? In particular, is its expected sample size unimprovable among all level-$\alpha$ sequential tests?
\item Is the procedure non-parametric? Indeed, practitioners aren't (or at least shouldn't be) willing to make parametric assumptions, which we believe are unrealistic in most settings.
\end{itemize}

In our understanding, the prevailing view is that practitioners have essentially two choices at their disposal. One is to use parametric-model-based confidence sequences which tend to under-cover and over-reject. The other is to use concentration-inequalities-based sequences, which tend to over-cover and under-reject. A prominent example of parametric confidence sequences is the mixture SPRT boundaries for Brownian data sequences introduced by \cite{robbins1970boundary} and further refined in \citep{robbins1970statistical}. Notable contributions to the field of concentration-inequality-based sequences include \cite{howard2021} and \cite{howardquantile}.

These two types of sequences are not actually the only options but the alternatives may be less well-known as of now. These alternatives are asymptotic in nature and are based on weak and strong invariance principles (WIPs and SIPs). \cite{robbins1970boundary} propose a sequence of delayed-start mixture SPRT boundaries and show, via Donsker's WIP, that coverage tends to $\alpha$ as the delay before the start of monitoring diverges to infinity. More recently, \cite{bibaut2021sequential} use \cite{mcleish1974dependent}'s WIP to provide a sequence of delayed-start confidence sequences for dependent data. \cite{smith} propose a novel definition of asymptotic confidence sequences and demonstrate how SIPs allow to construct such asymptotic sequences. The advantage of these WIP- and SIP-based
sequences (or sequences of sequences) is that they hold the promise of tight type-I error control (as opposed to over or anti-conservativity) under minimal assumptions -- typically mere moment assumptions. We discuss more in detail why going for an asymptotic WIP-based or SIP-based solution is the right strategy in the next subsection.

Nevertheless, we claim that, as of the time of writing of this article, no existing work simultaneously proposes confidence sequences and an analysis showing that these satisfy the aforementioned three criteria. In particular, we discuss in \cref{sec:modern_IP-based_CSs} how the current work compares to \cite{smith} in that respect.

\subsection{The case for an asymptotic procedure}

In fixed-sample-size inference, arguably the most used procedure is to construct approximate central-limit-theorem- (CLT) based confidence intervals. The central limit theorem and extensions thereof only require a second-order moment (or a Lindeberg or Lyapunov condition), and allow for an extremely wide range of data-generating processes. The coverage error one makes by using a CLT-based confidence intervals is controlled by Berry-Esseen bounds and decreases rapidly, scaling as the inverse square root of the sample size (provided the observations have a third moment).

Comparatively, parametric-model-based confidence sequences and nonparametric concentration bounds aren't as commonly used in applied statistical practice for the same reasons we mentioned for confidence sequences: the former become anti-conservative as soon as the parametric assumptions do not hold, while the latter are over-conservative in general. 

This therefore motivates looking for a CLT-like solution to sequential inference and testing. 
Fortunately, there exist sequential extensions of CLT results: these are precisely the WIPs and the SIPs we mentioned above. These guarantee convergence (in distribution and almost surely, respectively) of partial sum processes to Gaussian processes, under weak moment assumptions.

\subsection{Aim of the current paper}\label{subsec:paper_aim}

We aim to justify the use of existing sequential tests and the corresponding confidence sequences, namely the delayed-start versions \citep{robbins_siegmund1974} of the running maximum likelihood SPRT (rmlSPRT) \citep{robbins1972class, robbins_siegmund1974}, and the normal mixture SPRT (nmSPRT). In other words, we give guarantees supporting that it is safe, efficient, and advisable to use these. We show that, under mild non-parametric assumptions, sequences of these sequential tests indexed by the burn-in time (1) have type-I error converging asymptotically to the prespecified level, and (2) have expected stopping time converging to the lower bounds known in the i.i.d. case both in the regime where $\psi_1' \to \psi_0'$, and in the regime where $\alpha \to 0$.

\subsection{Contributions and organization of this article}

We show in \cref{sec:type-I_error} that the type-I error of delayed-start nmSPRTs and rmlSPRTs converges to $\alpha$ as burn-in time goes to infinity. We show in \cref{sec:expected_rejection_time} that, in our non-parametric dependent setting, in the asymptotic regime where $\alpha \to 0$, the expected rejection time is asymptotically equivalent, up to a small constant, to the lower bound known for simple-vs-simple SPRTs under parametric i.i.d. data. We also show that, in the asymptotic regime where the effect size $\psi'_1 - \psi'_0$ converges to 0, the expected stopping time converges, up to a constant factor, to the lower bound known in the parametric i.i.d. setting. We propose in \cref{sec:application} a sample splitting and variance stabilization method to conduct estimating-equations-based inference under sequentially collected data. In \cref{sec:tuning_lambda}, we study heuristically  the optimal choice of the hyperparameter in nmSPRTs, and we propose a heuristic procedure that auto-tunes this hyperparameter. In \cref{sec:numerics} we conduct extensive simulation studies so as to demonstrate numerically the validity of our theoretical results and the empirical robustness of our heuristic auto-tuned nmSPRTs, and in \cref{sec:realdata} we study a real-data application to A/B testing for quality control of the Netflix client application. In \cref{sec:historical_progression} we review the historical and logical progression from \cite{wald}'s simple-vs-simple SPRT to the results we present in this paper. In \cref{sec:lit} we further discuss the related literature.

\section{Setup}

We work in the sequential setting introduced in \cref{subsec:seq_testing_problem}. For any $t \geq 1$, we denote $S_t = \sum_{s=1}^t X_s$. Our methods are designed to work with observations for which the conditional variance given the past converges to one. We will make rigorous what are the specific variance convergence requirements in the upcoming sections, but it might be helpful to think throughout of $\mathrm{Var}(X_t \mid \mathcal{F}_{t-1})$ as being approximately one.

For any $t \geq 1$, and $\lambda > 0$, we define the rmlSPRT and nmSPRT statistics as 
\begin{align}
Y_t^{\rmlSPRT} = \frac{1}{2} \left(\frac{S_t^2}{t} - \log t \right) \qquad \text{and} \qquad Y_{\lambda, t}^{\nmSPRT} = \frac{1}{2} \left( \frac{S_t^2}{t+\lambda} - \log \frac{t + \lambda}{\lambda} \right). \label{eq:test_statistics_definition}
\end{align}
We refer the reader to \cref{sec:historical_progression} for the motivation behind the expressions of the test statistics and the rationale of their names.

So as to avert any confusion, the reader should keep in mind that while we present the guarantees for both test statistics along one another, the analyst will opt for one of the two and monitor only the chosen one (as opposed for instance to monitoring which of the two crosses a given threshold first, or any variation thereof).
Let $t_0 \geq 1$ be the burn-in period, that is the time the analyst waits before starting to monitor whether the test statistic crosses a certain rejection threshold. We will justify that for burn-in period $t_0$ (and parameter $\lambda$ for the nmSPRT), the ``right'' rejection thresholds for the rmlSPRT and the nmSPRT are $-\log \widetilde{\alpha}_1(\alpha)$ and $-\log \widetilde{\alpha}_{2,\lambda / t_0}(\alpha)$, respectively, where $\widetilde{\alpha}_1(\alpha)$ solves $h_1(-\log \widetilde{\alpha}_1) = \alpha$ and, for any $\eta > 0$, $\widetilde{\alpha}_{2,\eta}(\alpha)$ solves $h_{2,\eta}(- \log \widetilde{\alpha}_{2,\eta}) = \alpha$,
where
\begin{align}
h_1(a) =& 2 \exp(-a) \sqrt{\frac{a}{\pi}} + 2 \left(1 - \Phi\left(\sqrt{2 a}\right)\right), \\
\text{and} \qquad  h_{2,\eta}(a) =& \exp(-a) \left\lbrace 2 \Phi\left(\sqrt{2 \eta \left( a + \frac{1}{2} \log \frac{1+\eta}{\eta} \right)}  \right) - 1 \right\rbrace \\
&+ 2 \left\lbrace 1-\Phi \left(\sqrt{2 (1+\eta) \left( a + \frac{1}{2} \log \frac{1+\eta}{\eta} \right) } \right) \right\rbrace.
\end{align}
The combination of a test statistic, a burn-in period, and a rejection level specifies a sequential test. Formally we define the the $t_0$-burn-in rmlSPRT and nmSPRT as the stopping times
\begin{align}
\tau^{\rmlSPRT}(\alpha, t_0) =& \inf \left\lbrace t \geq t_0 : Y_t^{\rmlSPRT} \geq -\log \widetilde{\alpha}_1(\alpha) \right\rbrace, \\
\text{and} \qquad \tau^{\nmSPRT}(\alpha, \lambda, t_0) =& \inf \left\lbrace t \geq t_0 : Y_{\lambda,t}^{\nmSPRT} \geq -\log \widetilde{\alpha}_{2, \lambda / t_0}(\alpha) \right\rbrace.
\end{align}
For any $t \geq 1$, we define $\mathcal{C}^{\rmlSPRT}_{\alpha, t_0}(t)$ (resp. $\mathcal{C}^{\nmSPRT}_{\alpha, \lambda, t_0}(t)$) as the set of values $\psi'_0 \in \mathbb{R}$ for which the rmlSPRT (resp. the nmSPRT) doesn't reject at $t$ the hypothesis $H_0: \psi_t = \psi'_0 \ \forall t$. It is then straightforward to observe that the sequences $(\mathcal{C}^{\rmlSPRT}_{\alpha, t_0}(t))_{t \geq 1}$ and $(\mathcal{C}^{\nmSPRT}_{\alpha, \lambda, t_0}(t))_{t \geq 1}$ have coverage equal to 1 minus the level of the corresponding tests. Inverting the expressions of the test statistics yields that, for $t\geq t_0$, $\mathcal{C}^{\rmlSPRT}_{\alpha, t_0}(t) = [\pm c^{\rmlSPRT}_{\alpha, t_0}(t)]$ and $\mathcal{C}^{\nmSPRT}_{\alpha, \lambda, t_0}(t) = [\pm c^{\nmSPRT}_{\alpha, \lambda, t_0}(t)]$ with
\begin{align}
c^{\rmlSPRT}_{\alpha, t_0}(t) =& \sqrt{t \left(-2\log \widetilde{\alpha}_1(\alpha) + \log \frac{t}{t_0}\right)} \label{eq:delayed_rmle_CS} \\
\text{and} \qquad c^{\nmSPRT}_{\alpha, \lambda, t_0}(t) =& \sqrt{(t+\lambda) \left( -2 \log \widetilde{\alpha}_{2, \lambda/t_0}(\alpha) + \log \frac{t+\lambda}{\lambda}  \right)}. \label{eq:delayed_nmSPRT_CS}
\end{align}
We now present two applications that exemplify our setting and the use of the delayed-start confidence sequences.
\begin{example}[Sample mean]\label{ex:mean}
Suppose we sequentially observe an i.i.d. scalar sequence $Z_1,Z_2,\dots$ with common mean $\theta =E[Z_1]$. So as to ensure that the conditional variance of the data we use as input of the testing procedure converges to 1, we use a variance stabilization device. Specifically we set $\widehat \sigma^2_t$ to be the empirical variance of $Z_1,\ldots,Z_t$, that is $\widehat \sigma^2_t=(t-1)^{-1} \sum_{s=1}^t (Z_s - \bar{Z}_t)^2$, with $\bar{Z}_t = t^{-1} \sum_{s=1}^t Z_t$.
Then, to test $\theta=\theta_0$ for a given $\theta_0$, we set $X_t=\omega_t(Z_t-\theta_0)$, where $\omega_t=(\widehat \sigma_{t-1} \vee t^{-\iota})^{-1}$, for some $\iota > 0$ such that $t^{-\iota}$ is a time-decreasing threshold that ensures that $\omega_t$ is finite. To construct a confidence sequence for $\theta$, we consider all $\theta_0$'s not rejected at time $t$. Namely, letting $\Gamma_t=\sum_{s=1}^t\omega_s$, $\widehat \theta_t=\Gamma_t^{-1}\sum_{s=1}^t\omega_s Z_s$, the delayed-start running MLE and the normal-mixture SPRT confidence sequences are 
\begin{align}
&\mathbb{R} \text{ for } t < t_0 \qquad \text{ and } \qquad  \left[\widehat{\theta}_t \pm \Gamma_t^{-1} c^{\rmlSPRT}_{\alpha,t_0}(t) \right] \qquad \text{for } t \geq t_0,\\
\text{ and } \qquad &\mathbb{R} \text{ for } t < t_0 \qquad \text{ and } \qquad \left[\widehat{\theta}_t \pm \Gamma_t^{-1} c^{\nmSPRT}_{\alpha,\lambda, t_0}(t) \right] \qquad \text{for } t \geq t_0,
\end{align}
respectively.
Our results provide guarantees about the probability that $\theta$ is ever excluded from this interval at \emph{any} time and the expected time until any one $\theta' \neq \theta$ is excluded.
\end{example}

\begin{example}[Bernoulli trial with covariates]\label{ex:abtest} Suppose we conduct a randomized controlled trial with two arms, in which we enroll patients sequentially. Upon enrolling the $t$-th patient, we collect a vector of pre-treatment covariates $L_t \in \mathbb{R}^d$ (potentially $d=0$ if we do not collect or opt not to use covariates), we randomly sample their treatment arm allocation $A_t \in \{0,1\}$ from a Bernoulli distribution with mean $p \in (0,1)$, and we then observe their health outcome $U_t \in \mathbb{R}$. We suppose that the triples $(L_1, A_1, U_1),\allowbreak (L_2, A_2, U_2),\allowbreak \ldots$ are i.i.d. Let the statistical parameter of interest be $\theta = E[U_1 \mid A_1=1] - E[U_1 \mid A_1=0]$. For inference on $\theta$, set 
\begin{align}
Z_t = \widehat{\eta}_{t-1}(1, L_t) - \widehat{\eta}_{t-1}(0, L_t) + \frac{A_t - p}{p(1-p)}\left(U_t - \widehat{\eta}_{t-1}(A_t,L_t)\right)
\end{align}
where $\widehat{\eta}_{t-1}$ is an $\mathcal{F}_{t-1}$-measurable estimate of the outcome regression function $\eta:(a,l) \mapsto E[U_1 \mid A_1 = a, L_1 = l]$, set $\widehat \sigma^2_{t-1}$ to be an $\mathcal{F}_{t-1}$ measurable estimate of $\mathrm{Var}(Z_t \mid \widehat{\eta}_{t-1})$ and apply the approach from \cref{ex:mean}.
\end{example} 

\section{Type I error}\label{sec:type-I_error}

Characterizing the type-I error of the delayed start rmlSPRT and nmSPRT is a priori easier if the data distribution under the null is known. As invariance principles guarantee convergence of partial sum processes of general nonparametric centered data sequences to Wiener (a.k.a. Brownian) processes, we start our type-I error study by characterizing the distributional properties of the test statistics under Brownian data. Let $W = (W(t))_{t \geq 0}$ be a Wiener process adapted to a filtration $\widetilde{\mathfrak{F}} = (\widetilde{\mathcal{F}}(t))_{t \geq 0}$.
Let $\widetilde{Y}_t^{\rmlSPRT}$ and $\widetilde{Y}_{\lambda, t}^{\nmSPRT}$ be the Brownian-data counterparts of $Y_t^{\rmlSPRT}$ and $Y_{\lambda, t}^{\nmSPRT}$:
\begin{align}
\widetilde{Y}_t^{\rmlSPRT} = \frac{1}{2} \left(\frac{W(t)^2}{t} - \log t \right) \qquad \text{and} \qquad \widetilde{Y}_{\lambda, t}^{\nmSPRT} = \frac{1}{2} \left( \frac{W(t)^2}{t+\lambda} - \log \frac{t + \lambda}{\lambda} \right).
\end{align}
The following result shows that choices $\widetilde{\alpha}_1(\alpha)$ and $\widetilde{\alpha}_{2, \lambda/t_0}(\alpha)$ above ensure that $-\log \widetilde{\alpha}_1(\alpha)$ and $-\log \widetilde{\alpha}_{2, \lambda / t_0}(\alpha)$ are $(1-\alpha)$-boundaries for the Brownian test statistics $\widetilde{Y}_t^{\rmlSPRT}$ and $\widetilde{Y}_{\lambda, t}^{\nmSPRT}$ when monitored continuously in time at all $t \geq t_0$.
\begin{lemma}\label{lemma:typeIerr_delayed_Wiener}
For any $t_0 \geq 1$ and $\lambda > 0$,
\begin{align}
&P\left[ \sup_{t \geq t_0} \widetilde{Y}_t^{\rmlSPRT} + \frac{1}{2} \log t_0 \geq -\log \widetilde{\alpha}_1(\alpha)  \right] = \alpha\\
 \text{ and } & P\left[ \sup_{t \geq t_0} \widetilde{Y}_{\lambda, t}^{\nmSPRT} \geq -\log \widetilde{\alpha}_{2, \lambda/t_0}(\alpha) \right] = \alpha.
\end{align}
Equivalently,
\begin{align}
&P\left[ \forall t \geq t_0,\ W(t) \in \mathcal{C}^{\rmlSPRT}_{\alpha, t_0}(t) \right] =  1-\alpha \\
\text{ and } & P\left[ \forall t \geq t_0,\ W(t) \in \mathcal{C}^{\nmSPRT}_{\alpha, \lambda, t_0}(t) \right] =  1-\alpha.
\end{align}
\end{lemma}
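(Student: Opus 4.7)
The plan is to recognize, in each of the two cases, that the exponential of the relevant test statistic (after a suitable constant rescaling) equals a non-negative continuous martingale $M$ on $[t_0,\infty)$ that converges to $0$ almost surely at infinity. For any such martingale, stopping at $\tau = \inf\{t\geq t_0 : M_t \geq x\}$ and invoking bounded convergence on the stopped process $M_{t\wedge\tau}$ (which is bounded by $x$ and converges a.s.\ to $x\,\mathbf{1}\{\tau < \infty\}$) yields the Doob hitting identity
\begin{equation*}
P\!\left(\sup_{t\geq t_0} M_t \geq x \mid \widetilde{\mathcal{F}}(t_0)\right) \;=\; (M_{t_0}/x)\wedge 1.
\end{equation*}
The unconditional hitting probability thus reduces to the one-dimensional Gaussian integral $E[(M_{t_0}/x)\wedge 1]$ over $W(t_0)$, and I will match this termwise against the defining formulas of $h_1$ and $h_{2,\eta}$. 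The ``Equivalently'' half of the lemma then follows immediately from the definitions of $\mathcal{C}^{\rmlSPRT}_{\alpha, t_0}(t)$ and $\mathcal{C}^{\nmSPRT}_{\alpha, \lambda, t_0}(t)$ and the inversion computation preceding the statement.

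For the nmSPRT, I take $M_t = \exp(\widetilde{Y}^{\nmSPRT}_{\lambda,t}) = \sqrt{\lambda/(t+\lambda)}\,\exp(W(t)^2/(2(t+\lambda)))$. Completing the square in $\theta$ shows that $M_t$ equals the integral of the Girsanov martingale $\theta \mapsto e^{\theta W(t) - \theta^2 t/2}$ against the $N(0,1/\lambda)$ density; consequently $M$ is itself a genuine continuous martingale with $E M_t = 1$, and the law of the iterated logarithm gives $M_t\to 0$ a.s. Writing $\eta := \lambda/t_0$ and $Z := W(t_0)/\sqrt{t_0}\sim N(0,1)$, splitting $E[(M_{t_0}e^{-a})\wedge 1]$ at $\{M_{t_0}\geq e^a\}$, completing the square in the bulk integral, and performing the linear substitution $u = z\sqrt{\eta/(1+\eta)}$ to bring the quadratic exponent back to standard Gaussian form, one obtains exactly the two summands of $h_{2,\eta}(a)$. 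Substituting $a = -\log\widetilde{\alpha}_{2,\lambda/t_0}(\alpha)$ returns $\alpha$.

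For the rmlSPRT, the natural candidate $\widetilde{M}_t := \sqrt{t_0/t}\,\exp(W(t)^2/(2t)) = \exp(\widetilde{Y}^{\rmlSPRT}_t + \tfrac12\log t_0)$ is only a continuous local martingale (a direct It\^o calculation shows the drift vanishes) and has infinite expectation $E\widetilde{M}_{t_0} = E e^{Z^2/2} = \infty$, so the nmSPRT argument cannot be applied directly. This is the main obstacle. I will bypass it by conditioning on $\widetilde{\mathcal{F}}(t_0)$ and using the ``improper prior'' mixture identity
\begin{equation*}
\widetilde{M}_t / \widetilde{M}_{t_0} \;=\; \int_{\mathbb{R}} e^{\theta(W(t)-W(t_0)) - \theta^2(t-t_0)/2}\,\widetilde{\pi}_{t_0}(\theta)\,\mathrm{d}\theta, \qquad t \geq t_0,
\end{equation*}
where $\widetilde{\pi}_{t_0}$ denotes the $N(W(t_0)/t_0,\,1/t_0)$ density, obtained by formally completing the square in $\theta$ inside the improper $\int e^{\theta W(t) - \theta^2 t/2}\,\mathrm{d}\theta$. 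Each integrand is a true Girsanov martingale on $[t_0,\infty)$ with initial value $1$, so by Fubini the mixture is a true continuous non-negative martingale starting at $1$ at $t_0$ and, by the LIL, vanishing a.s.\ at $\infty$. The Doob identity above then gives $P(\sup_{t\geq t_0}\widetilde{M}_t \geq e^a \mid \widetilde{\mathcal{F}}(t_0)) = (\widetilde{M}_{t_0}e^{-a})\wedge 1$. Splitting the resulting expectation at $\widetilde{M}_{t_0} = e^a$ (equivalently $W(t_0)^2 = 2at_0$), the tail contributes $2(1-\Phi(\sqrt{2a}))$; in the bulk, the factor $\exp(W(t_0)^2/(2t_0))$ cancels exactly against the Gaussian density of $W(t_0)$, leaving a flat integral of value $2e^{-a}\sqrt{a/\pi}$. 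Summing yields $h_1(a)$, and substituting $a = -\log\widetilde{\alpha}_1(\alpha)$ returns $\alpha$.
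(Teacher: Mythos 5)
Your proposal is correct and follows essentially the same route as the paper's proof: condition at time $t_0$, apply the exact boundary-crossing identity (Ville's equality) for a continuous non-negative martingale that vanishes at infinity, and reduce the unconditional crossing probability to the one-dimensional Gaussian integrals $E[\exp(-(a-\tfrac12 Z^2)_+)]$ and $E[\exp(-(a_\eta-\tfrac{Z^2}{2(1+\eta)})_+)]$ defining $h_1$ and $h_{2,\eta}$. The only differences are cosmetic (you skip the paper's time rescaling $W_{t_0}(s)=W(st_0)/\sqrt{t_0}$) and that your improper-prior mixture representation for the rmlSPRT explicitly justifies the true (not merely local) conditional martingale property, a point the paper asserts without proof.
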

The claims on $\widetilde{Y}^{\nmSPRT}_{\lambda, t}$ and $\mathcal{C}^{\nmSPRT}_{\alpha, \lambda, t_0}$ are a direct consequence of Theorem 2 in \cite{robbins1970boundary}. The claims on $\widetilde{Y}_t^{\rmlSPRT}$ and $\mathcal{C}^{\rmlSPRT}_{\alpha, t_0}$ follow from similar techniques. Version 7 of \cite{smith} also includes a proof of these. We include a proof of \cref{lemma:typeIerr_delayed_Wiener} in the appendix for self-containedness.

We now show that type-I error converges along any joint sequence of rmlSPRTs or nmSPRTs and of values of $\alpha$, such that the burn-in time $t_0$ diverges to $\infty$. The guarantees will hold even with discrete-time, non-normal, dependent observations. The proof hinges on constructing a joint probability space for both these observations and the continuous-time Brownian motion, in which $\sup_{t \geq t_0} |\widetilde{Y}_t^{\rmlSPRT} - Y_t^{\rmlSPRT}|$ and $\sup_{t \geq t_0} |\widetilde{Y}_{\lambda, t}^{\nmSPRT} - Y_{\lambda,t}^{\nmSPRT}|$ are small. 
Almost-sure approximations of partial sums by Brownian motions under lax conditions are the object of strong invariance principles.
We will specifically leverage the SIP result from theorem 1.3 of \cite{strassen1967}. To do so, we impose the following conditions. 
\begin{assumption}\label{asm:sip} Let $V_t=\sum_{s=1}^tv_s$. For some non-decreasing function $f:[0,\infty) \to (0,\infty)$ such that $f(t) = O(t (\log t)^{-4} (\log \log t)^{-2})$, it holds that
\begin{align}
\sum_{t=1}^\infty \frac{1}{f(V_t)}E\left[\left(X_t- \psi_t\right)^2 \bm{1} \left\lbrace (X_t-\psi_t)^2>f(V_t) \right\rbrace \mid \mathcal{F}_{t-1} \right]<\infty, \label{eq:sip_tails} 
\end{align}
and
\begin{align}
 V_t - t = o \left( \sqrt{t f(t)} \log t \right) \qquad \text{almost surely.} \retainlabel{eq:sip_var}
\end{align}
\end{assumption}
For now, we take \cref{asm:sip} as a primitive assumption. In \cref{sec:application}, we will discuss simple sufficient conditions for it for the case of stabilized estimating equations. 
Condition \eqref{eq:sip_tails} is akin to (but different from) a martingale Lindeberg condition, and in \cref{sec:application} we satisfy it by assuming a moment higher than 2 exists.

A SIP guarantee is, of course, asymptotic, so we need to consider regimes where the stopping time is large.
One such regime is when we set a long burn-in period $t_0$, essentially to wait for the data to look normal. The following theorem provides a type-I error guarantee (equivalently, a coverage guarantee) along sequences $((t_{0,m}, \alpha_m, \lambda_m))_{m\geq 1}$ such that  $t_{0,m} \to \infty$.
\begin{theorem}\label{thm:typeI_error}
Suppose that \cref{asm:sip} holds. Then, for any sequence $((t_{0,m},\allowbreak \alpha_m,\allowbreak \lambda_m))_{m \geq 1}$ such that $t_{0,m} \to \infty$, $\lambda_m > 0$, $\alpha_m \in [0,1]$ $\forall m \geq 1$, 
\begin{align}
&\lim_{m \to \infty} \Pr\left[ \exists t \geq t_{0,m},\ Y^{\rmlSPRT}_t + \frac{1}{2} \log t_{0,m} \geq -\log \widetilde{\alpha}_1(\alpha_m)  \right] / \alpha_m = 1,\\
 \text{ and }  &\lim_{m \to \infty} \Pr \left[\exists t \geq t_{0,m},\  Y^{\nmSPRT}_{\lambda_m, t} \geq -\log \widetilde{\alpha}_{2, \lambda_m / t_{0,m}}(\alpha_m) \right] / \alpha_m = 1.
\end{align}
Equivalently,
\begin{align}
&\lim_{m \to \infty} \Pr \left[ \exists t \geq t_{0,m},\ S_t \not\in \mathcal{C}^{\rmlSPRT}_{\alpha_m, t_{0,m}}(t) \right] / \alpha_m = 1,\\
\text{ and } &\lim_{m \to \infty} \Pr \left[ \exists t \geq t_{0,m},\ S_t \not\in \mathcal{C}^{\nmSPRT}_{\alpha_m, \lambda_m, t_{0,m}}(t) \right]  / \alpha_m = 1.
\end{align}
\end{theorem}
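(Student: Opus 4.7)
The plan is exactly the one sketched in the paragraph preceding the theorem: invoke a strong invariance principle to couple the partial-sum process $(S_t)$ with a Wiener process $(W(t))$ on a common probability space, transfer the Brownian guarantee of \cref{lemma:typeIerr_delayed_Wiener} to the discrete statistics, and then control the resulting multiplicative error. Concretely, under \cref{asm:sip}, Theorem 1.3 of \cite{strassen1967} produces, on an enlargement of the probability space, a standard Brownian motion $W$ such that $S_t - W(V_t)$ is almost surely of order $o$ of some slowly growing function; combined with the variance-stabilization condition $V_t - t = o(\sqrt{t f(t)} \log t)$ and the modulus of continuity of $W$, this upgrades to an a.s.\ bound on $|S_t - W(t)|$ that is small enough (given $f(t) = O(t (\log t)^{-4} (\log \log t)^{-2})$) to make the ratio $(S_t - W(t))/\sqrt{t}$ vanish faster than $(\log t)^{-1}(\log \log t)^{-1/2}$.

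\textbf{Transferring to the test statistics.} Both test statistics are quadratic in the partial sum, so on the coupling one has
\begin{align}
Y_t^{\rmlSPRT} - \widetilde{Y}_t^{\rmlSPRT} &= \frac{(S_t - W(t))(S_t + W(t))}{2 t}, \\
Y_{\lambda, t}^{\nmSPRT} - \widetilde{Y}_{\lambda, t}^{\nmSPRT} &= \frac{(S_t - W(t))(S_t + W(t))}{2 (t + \lambda)}.
\end{align}
Bounding $|S_t + W(t)|$ by the law of the iterated logarithm (so that it is $O(\sqrt{t \log \log t})$ a.s.) and using the SIP bound on $|S_t - W(t)|$ derived in the previous step, one gets $\sup_{t \geq t_{0,m}} |Y_t^{\rmlSPRT} - \widetilde{Y}_t^{\rmlSPRT}| \to 0$ and similarly for the nmSPRT, almost surely as $t_{0,m} \to \infty$. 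I would state this as an intermediate lemma, since it does the bulk of the analytic work.

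\textbf{From an additive to a multiplicative conclusion.} Let $\Delta_m$ denote a deterministic sequence such that the $\sup$ of the difference is $\leq \Delta_m$ outside an event of probability $o(\alpha_m)$, with $\Delta_m \to 0$; the previous step plus Egorov- or Borel--Cantelli-type reasoning yields such $\Delta_m$ (at the cost of choosing $\Delta_m$ to go to zero slowly enough). Then the event that the discrete statistic crosses its threshold at some $t \geq t_{0,m}$ is sandwiched between the events that the Brownian statistic crosses $-\log \widetilde{\alpha}_1(\alpha_m) \pm \Delta_m$ at some $t \geq t_{0,m}$. By \cref{lemma:typeIerr_delayed_Wiener} these bracketing probabilities equal $h_1\big(-\log \widetilde{\alpha}_1(\alpha_m) \pm \Delta_m\big)$ (and analogously $h_{2, \lambda_m / t_{0,m}}$ in the nmSPRT case). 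Dividing by $\alpha_m = h_1(-\log \widetilde\alpha_1(\alpha_m))$ and using the fact that $h_1$ and $h_{2, \eta}$ are smooth with $\log h_1(a) \sim -a$ and $\log h_{2, \eta}(a) \sim -a$ as $a \to \infty$, a local Taylor expansion gives that $h_1(\cdot \pm \Delta_m)/h_1(\cdot) = 1 + O(\Delta_m)$ uniformly in $\alpha_m$, so the ratio converges to $1$.

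\textbf{Main obstacle.} The main technical difficulty is the last step: upgrading the additive almost-sure closeness of $Y_t$ and $\widetilde{Y}_t$ to a \emph{multiplicative} convergence of crossing probabilities that is uniform in $\alpha_m$ (in particular accommodating $\alpha_m \to 0$). This requires quantitatively controlling both the tail of the SIP coupling (to get $\Delta_m$ with a failure probability that is $o(\alpha_m)$) and the relative sensitivity of the boundary-crossing functions $h_1$ and $h_{2, \eta}$ to perturbations of their argument. Everything else, including the SIP invocation and the reduction to Brownian statistics, is essentially bookkeeping around the cited theorem of Strassen and \cref{lemma:typeIerr_delayed_Wiener}.
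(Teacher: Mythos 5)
Your proposal is correct and follows essentially the same route as the paper: the SIP coupling of $S_t$ with $W(t)$, the almost-sure vanishing of $\sup_{t\geq t_{0,m}}|Y_t-\widetilde{Y}_t|$ via the quadratic factorization and the law of the iterated logarithm, and then conversion of this additive closeness into multiplicative closeness of the crossing probabilities. The only cosmetic difference is in the last step: the paper applies the transformation $-\log h_1$ (resp.\ $-\log h_{2,\eta}$) to the suprema, observes that the Brownian version is exactly $\mathrm{Exp}(1)$, and uses the $1$-Lipschitzness of $-\log h$ (proved by differentiating under the integral sign) to get the uniform factor $\exp(-o(1))$, which is precisely the uniform bound $h_1(a\pm\Delta_m)/h_1(a)=1+O(\Delta_m)$ you invoke; note that this uniformity genuinely requires the Lipschitz derivative bound rather than just the asymptotic $\log h_1(a)\sim -a$.
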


\section{Representation of boundaries as inverted running-effect-size-estimate SPRTs}\label{sec:representation}

Before presenting our representation results for the rmlSPRT and nmSPRT statistics, let us first discuss known related facts. Let $\widetilde{S}(t) = \widetilde{\psi} t + W(t)$ be a time-continuous Brownian observation sequence with constant drift $\widetilde{\psi}$. \cite{robbins_siegmund1974} show via It\^o's lemma that, under input data process $\widetilde{S}$, the nmSPRT test statistic $\widetilde{Y}^{\nmSPRT}_{\lambda, t} = \frac{1}{2}\{\widetilde{S}^2(t) / (t+\lambda) - \log ((t+\lambda)/\lambda)\}$ can be represented alternatively as $\widetilde{Y}^{\nmSPRT}_{\lambda, t} = \int_{0}^t \widetilde{\psi}_{\lambda}(s) d\widetilde{S}(s) - \frac{1}{2} \widetilde{\psi}_{\lambda}^2(s) ds$, where $\widetilde{\psi}_{\lambda}(t) = \widetilde{S}(t) / (t + \lambda)$ is the posterior mean of the drift $\widetilde{\psi}$ under prior $\mathcal{N}(0, 1/\lambda)$. We refer to this latter representation as a \emph{running-posterior-mean SPRT}. In the same article, they discuss the stopping time properties of a related test of which the test statistic is $\int_0^t \widetilde{\psi}_0(s) d\widetilde{S}(s) - \frac{1}{2}\widetilde{\psi}^2_0(s) ds$. Observing that $\widetilde{\psi}_0(t)$ is the maximum likelihood estimate of $\widetilde{\psi}$, we refer to this latter test statistic as a \emph{running-maximum-likelihood-estimate SPRT}. As \cite{robbins_siegmund1974} observe, these running-effect-size-estimate SPRT representations are amenable to rejection time analysis. This motivates us to look for such representations of the rmlSPRT and nmSPRT statistics.

While we cannot directly apply It\^o's lemma in our nonparametric discrete-time setting, we derive a finite-differences equivalent of a certain It\^o-derived identity, namely that for $h_{\lambda}(x,t) = \frac{1}{2} x^2 / (t+\lambda)$, $dh_\lambda(\widetilde{S}(t),t) = \widetilde{\psi}_\lambda(t) d\widetilde{S}(t) - \frac{1}{2} \widetilde{\psi}_\lambda^2(t) + \frac{1}{2} (t+\lambda)^{-1}$.
The following lemma is our discrete analog of this identity.
\begin{lemma}\label{lemma:discrete_Ito_identity}
Let $(z_s)_{s \in \Natints}$ be a sequence of real numbers, and let $\lambda \geq 0$ and $s \in \mathbb{N}$. Whenever $s+\lambda >0$, it holds that
\begin{align}
    &\frac{1}{2} \left(\frac{z_{s+1}^2}{s+1+\lambda} -  \frac{z_s^2}{s+\lambda}\right) = \frac{s+\lambda}{s+1+\lambda} \left( \frac{z_s}{s+\lambda}(z_{s+1} - z_s) - \frac{1}{2} \left(\frac{z_s}{s+\lambda}\right)^2 \right) + \frac{1}{2} \frac{(z_{s+1} - z_s)^2}{s+1+\lambda}.
\end{align}
\end{lemma}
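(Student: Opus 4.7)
My plan is to treat the statement as a pure algebraic identity at a single index $s$ and verify it by reducing both sides to a common canonical form. No probabilistic machinery is needed: there is no expectation, no filtration, and no limiting argument; the lemma is just a bookkeeping claim about how the quadratic form $z^2/(t+\lambda)$ evolves under a unit time step.

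First I would introduce shorthands $a := s+\lambda$ and $b := s+1+\lambda$. The integer-increment structure gives $b = a + 1$, and the hypothesis $s+\lambda > 0$ guarantees both denominators are nonzero. I would also set $\delta := z_{s+1} - z_s$ and use $z_{s+1}^2 = z_s^2 + 2 z_s \delta + \delta^2$. Substituting this expansion into the left-hand side produces three terms in $z_s^2$, $z_s \delta$, and $\delta^2$ with denominators $2a$, $b$, and $2b$ respectively.

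Next I would distribute the prefactor $(s+\lambda)/(s+1+\lambda) = a/b$ on the right-hand side and add the remaining $\tfrac{1}{2}\delta^2/b$ piece, obtaining $z_s \delta/b - z_s^2/(2ab) + \delta^2/(2b)$. Matching the coefficients of $z_s \delta$ and $\delta^2$ against the left-hand side is immediate; the only coefficient requiring attention is that of $z_s^2$, for which the identity reduces to $\tfrac{1}{2b} - \tfrac{1}{2a} + \tfrac{1}{2ab} = 0$, i.e.\ $a - b + 1 = 0$. This is precisely the unit-step relation $b = a + 1$, so the two sides coincide.

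There is no real obstacle. The one interpretive point I would emphasize in the write-up is that in the continuous-time analog used by \cite{robbins_siegmund1974} one finds an explicit It\^o correction $\tfrac{1}{2}(t+\lambda)^{-1} dt$; in the discrete identity above, that correction is instead absorbed into the prefactor $(s+\lambda)/(s+1+\lambda)$ multiplying the drift-like increment, which is exactly what the relation $a - b + 1 = 0$ enforces. This is why the statement is a faithful finite-difference analog of the It\^o identity even though no $(s+\lambda)^{-1}$ term is written out separately.
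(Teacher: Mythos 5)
Your proof is correct and is essentially the same as the paper's: both are direct algebraic verifications obtained by expanding $z_{s+1}^2$ around $z_s$ and using the unit-step relation $(s+1+\lambda)=(s+\lambda)+1$ to cancel the $z_s^2$ terms. The paper chains the equalities forward rather than matching coefficients, but the computation is identical in substance.
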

As a direct corollary of \cref{lemma:discrete_Ito_identity} we obtain the following representation results for the rmlSPRT and nmSPRT statistics defined in \eqref{eq:test_statistics_definition}. Let $X^0_t = X_t - E[X_t | \mathcal{F}_{t-1}]$, $S^0_t = \sum_{s=1}^t X^0_s$, $\breve{X}_t = \psi + X^0_t$, $\breve{S}_t = \sum_{s=1}^t \breve{X}_s$. where $\psi = 0$ under $H_0$ and $\psi= \lim_t \psi_t$ under $H_{\backslash 0}$.

\begin{theorem}\label{thm:rep_non_anticip_magle}
For any $\lambda > 0$ and $t \geq 1$, it holds that 
\begin{align}
Y^{\rmlSPRT}_t =& \sum_{s=0}^{t-1} \rho_{0,s} \left( \breve{\psi}_{0,s} \breve{X}_{s+1} - \frac{1}{2} \breve{\psi}_{0,s}^2 \right) + \frac{1}{2} \left(\sum_{s=1}^t \frac{\breve{X}_s^2}{s} - \log t  \right) + R^{\bias}_{0,t} \\
\text{and} \qquad Y^{\nmSPRT}_{\lambda, t} =& \sum_{s=0}^{t-1} \rho_{\lambda, s} \left( \breve{\psi}_{\lambda,s} \breve{X}_{s+1} - \frac{1}{2} \breve{\psi}_{\lambda,s}^2 \right) + \frac{1}{2} \left( \sum_{s=1}^t \frac{\breve{X}_s^2}{s + \lambda} - \log \frac{t + \lambda}{\lambda} \right) + R^{\bias}_{\lambda, t},
\end{align}
where $\rho_{\lambda, s} = (s+\lambda) / (s + 1 + \lambda)$, $\breve{\psi}_{0,0} = 0$, $\breve{\psi}_{\lambda, s} = \breve{S}_s / (\lambda + s)$ for any $s, \lambda \geq 0$ such that $s+\lambda > 0$, and $R^\bias_{\lambda, t} = (S_t^2 - \breve{S}_t^2) / (2(t+\lambda))$.
\end{theorem}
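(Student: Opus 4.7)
The plan is to apply the discrete It\^o identity of \cref{lemma:discrete_Ito_identity} termwise with $z_s = \breve{S}_s$, telescope the left-hand side, and then translate from $\breve{S}$ to $S$ using the definition of $R^{\bias}_{\lambda,t}$. Concretely, with $z_s = \breve{S}_s$ we have $z_s/(s+\lambda) = \breve{\psi}_{\lambda,s}$ and $z_{s+1}-z_s = \breve{X}_{s+1}$, so the identity reads
\begin{align}
\tfrac{1}{2}\!\left(\tfrac{\breve{S}_{s+1}^2}{s+1+\lambda} - \tfrac{\breve{S}_s^2}{s+\lambda}\right) = \rho_{\lambda,s}\!\left(\breve{\psi}_{\lambda,s}\breve{X}_{s+1} - \tfrac{1}{2}\breve{\psi}_{\lambda,s}^2\right) + \tfrac{1}{2}\tfrac{\breve{X}_{s+1}^2}{s+1+\lambda}.
\end{align}

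For the nmSPRT case ($\lambda>0$), I would sum this identity over $s=0,\ldots,t-1$; the left-hand side telescopes to $\tfrac{1}{2}\breve{S}_t^2/(t+\lambda)$ using $\breve{S}_0=0$, and after reindexing $s\mapsto s+1$ in the last term on the right I get $\tfrac{1}{2}\sum_{s=1}^t \breve{X}_s^2/(s+\lambda)$. Then I use the algebraic identity $\tfrac{1}{2}S_t^2/(t+\lambda) = \tfrac{1}{2}\breve{S}_t^2/(t+\lambda) + R^{\bias}_{\lambda,t}$ to swap $\breve{S}_t^2$ for $S_t^2$ and subtract $\tfrac{1}{2}\log((t+\lambda)/\lambda)$ from both sides, which produces exactly $Y^{\nmSPRT}_{\lambda,t}$ on the left and the claimed decomposition on the right.

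The rmlSPRT case ($\lambda=0$) requires just a small bookkeeping adjustment because the identity is not applicable at $s=0$ (the denominator $s+\lambda$ vanishes). I would instead sum over $s=1,\ldots,t-1$, telescope to $\tfrac{1}{2}\breve{S}_t^2/t - \tfrac{1}{2}\breve{X}_1^2$, and absorb the boundary term $\tfrac{1}{2}\breve{X}_1^2$ into the reindexed sum $\tfrac{1}{2}\sum_{s=2}^t \breve{X}_s^2/s$ to obtain $\tfrac{1}{2}\sum_{s=1}^t \breve{X}_s^2/s$. Since $\breve{\psi}_{0,0}=0$ by definition, the $s=0$ term of $\sum_{s=0}^{t-1}\rho_{0,s}(\breve{\psi}_{0,s}\breve{X}_{s+1}-\tfrac12\breve{\psi}_{0,s}^2)$ is zero (in fact doubly so, since $\rho_{0,0}=0$ too), so extending the index range from $s\geq 1$ to $s\geq 0$ is cosmetic. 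Again replacing $\breve{S}_t^2/t$ by $S_t^2/t - 2R^{\bias}_{0,t}$ and subtracting $\tfrac{1}{2}\log t$ yields the claimed representation.

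This is essentially a one-step computation, so there is no significant obstacle; the only subtlety is the $s=0$ boundary handling in the $\lambda=0$ case, which is dispatched by the convention $\breve{\psi}_{0,0}=0$ built into the statement. Everything else is telescoping and algebra given \cref{lemma:discrete_Ito_identity}.
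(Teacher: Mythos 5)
Your proof is correct and follows exactly the route the paper intends: the paper presents \cref{thm:rep_non_anticip_magle} as a direct corollary of \cref{lemma:discrete_Ito_identity}, obtained by the same substitution $z_s=\breve{S}_s$, telescoping, reindexing of the quadratic-variation term, and the $R^{\bias}_{\lambda,t}$ swap from $\breve{S}_t^2$ to $S_t^2$. Your explicit treatment of the $s=0$ boundary in the $\lambda=0$ case (absorbing $\tfrac{1}{2}\breve{X}_1^2$ into the sum and noting the $s=0$ summand vanishes) is exactly the bookkeeping the paper leaves implicit.
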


\cref{thm:rep_non_anticip_magle} shows that the test statistics $Y^{\rmlSPRT}_t$ and $Y_{\lambda,t}^{\nmSPRT}$ can be represented, up to some remainder terms, as sums of log-likelihood ratios of which the numerator is the likelihood under an \textit{non-anticipating} (that is, for the $s$-th term, $\mathcal{F}_{s-1}$-measurable) estimate of the parameter. The \textit{non-anticipating martingale} terminology was introduced by \cite{lorden2005nonanticipating} to refer to the device introduced by \cite{robbins1970statistical} and \cite{robbins_siegmund1974} to construct their test statistics.

The log non-anticipating martingales can be further expanded to yield the representations in the following theorem.

\begin{theorem}\label{thm:expanded_representation}
For any $\lambda > 0$ and $t \geq 1$, it holds that 
\begin{align}
Y^\rmlSPRT_t =& \frac{1}{2} \psi^2 t + M^{(0)}_t + M^{(1)}_{0,t} + R^\bias_{0,t} - R^\adpt_{0,t} + \Deltaqvar_{0,t} + \frac{1}{2} (X^0_1)^2\\
\text{and} \qquad Y^\nmSPRT_{\lambda, t} =& \frac{1}{2} \psi^2 t + M^{(0)}_t + M^{(1)}_{\lambda,t} - M^\skg_{\lambda,t} + R^\bias_{\lambda,t} - R^\adpt_{\lambda,t} + \Deltaqvar_{\lambda,t} - R^{\skg,1}_{\lambda, t} + R^{\skg,2}_{\lambda,t}
\end{align}
where
\begin{align}
M^{(0)}_t =& \psi S^0_t, \qquad M^{(1)}_{\lambda, t} = \sum_{s=0}^{t-1} \rho_{\lambda, s} \psi^0_{\lambda, s} X^0_{s+1}, \qquad M^\skg_{\lambda, t} = \psi \lambda \sum_{s=1}^t \frac{X^0_s}{s+\lambda},\\
\psi^0_{\lambda, s} =& \frac{S^0_s}{s + \lambda}, \\
R^\adpt_{\lambda,t} =& \frac{1}{2} \sum_{s=0}^{t-1} \rho_{\lambda, s} (\psi^0_{\lambda, s})^2,\\
\Deltaqvar_{\lambda,t} =&\begin{cases}
 \frac{1}{2} \left(\sum_{s=1}^t \frac{(X^0_s)^2}{s + \lambda} - \log \frac{t +\lambda}{\lambda} \right) \text{ if } \lambda > 0, \\
\frac{1}{2} \left( \sum_{s=2}^t \frac{(X^0_s)^2}{s} - \log t \right) \text{ if } \lambda =0,
\end{cases}\\
R^{\skg,1}_{\lambda, t} =& \frac{1}{2} \psi^2 \frac{\lambda t}{t + \lambda} \qquad \text{and} \qquad R^{\skg, 2}_{\lambda, t} = \psi \lambda \sum_{s=0}^{t-1} \frac{S^0_s}{(s+\lambda)(s+1+\lambda)}.
\end{align}
\end{theorem}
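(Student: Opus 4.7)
The plan is to derive both decompositions by direct algebraic expansion of the representations from \cref{thm:rep_non_anticip_magle}, then collect terms to match the targets. Using $\breve{S}_s = \psi s + S^0_s$, the key substitutions are
\begin{align}
\breve{\psi}_{\lambda,s} = \psi + \psi^0_{\lambda,s} - \frac{\psi\lambda}{s+\lambda} \qquad\text{and}\qquad \breve{X}_{s+1} = \psi + X^0_{s+1}.
\end{align}

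The first step is to expand the non-anticipating log-likelihood $\breve{\psi}_{\lambda,s}\breve{X}_{s+1} - \tfrac{1}{2}\breve{\psi}_{\lambda,s}^2$ with these substitutions. Two pairs of cross-terms cancel at the per-$s$ level (the $\pm\psi\psi^0_{\lambda,s}$ and $\mp\psi^2\lambda/(s+\lambda)$ cross products), leaving only $\tfrac{1}{2}\psi^2 + \psi X^0_{s+1} + \psi^0_{\lambda,s}X^0_{s+1} - \tfrac{1}{2}(\psi^0_{\lambda,s})^2 - \psi\lambda X^0_{s+1}/(s+\lambda) + \psi\lambda\psi^0_{\lambda,s}/(s+\lambda) - \tfrac{1}{2}\psi^2\lambda^2/(s+\lambda)^2$. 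Multiplying by $\rho_{\lambda,s}$, using the identity $\rho_{\lambda,s}/(s+\lambda) = 1/(s+1+\lambda)$, and summing over $s = 0,\ldots,t-1$ then recovers the named quantities: the $\tfrac{1}{2}\psi^2$-piece telescopes (via $\rho_{\lambda,s} = 1 - 1/(s+1+\lambda)$) into $\tfrac{1}{2}\psi^2 t$ minus a harmonic tail; the $\psi^0_{\lambda,s}X^0_{s+1}$ and $(\psi^0_{\lambda,s})^2$ pieces are by definition $M^{(1)}_{\lambda,t}$ and $2R^\adpt_{\lambda,t}$; the $-\tfrac{1}{2}\psi^2\lambda^2/(s+\lambda)^2$ piece collapses via the partial-fraction telescoping $1/((s+\lambda)(s+1+\lambda)) = 1/(s+\lambda) - 1/(s+1+\lambda)$ to exactly $-R^{\skg,1}_{\lambda,t}$; and the $\psi\lambda\psi^0_{\lambda,s}/(s+\lambda)$ piece becomes $R^{\skg,2}_{\lambda,t}$ after multiplying by $\rho_{\lambda,s}$.

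The second step is to expand $\breve{X}_s^2 = \psi^2 + 2\psi X^0_s + (X^0_s)^2$ inside $\tfrac{1}{2}\sum_{s=1}^t \breve{X}_s^2/(s+\lambda)$ and combine with the previous sum. The $\tfrac{1}{2}\psi^2$ harmonic tails cancel; the linear $\psi X^0$ contributions from the two sources combine with the $-\psi\lambda X^0_{s+1}/(s+\lambda)$ piece from the first step, after reindexing $s \mapsto s-1$, into exactly $\psi S^0_t - \psi\lambda\sum_{s=1}^t X^0_s/(s+\lambda) = M^{(0)}_t - M^\skg_{\lambda,t}$; and the remaining $\tfrac{1}{2}(X^0_s)^2/(s+\lambda)$ sum combined with $-\tfrac{1}{2}\log((t+\lambda)/\lambda)$ is $\Deltaqvar_{\lambda,t}$. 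Adding $R^\bias_{\lambda,t}$ from the starting identity yields the claim for $Y^\nmSPRT_{\lambda,t}$. The rmlSPRT case follows the same script with $\lambda = 0$: all $\psi\lambda/(s+\lambda)$-weighted terms vanish, so $M^\skg$, $R^{\skg,1}$ and $R^{\skg,2}$ drop out; the conventions $\breve{\psi}_{0,0} = 0$ and $\rho_{0,0} = 0$ harmlessly dispose of the undefined $s=0$ term; and the mismatch between the lower limits of the two resulting sums over $(X^0_s)^2/s$ produces the explicit residual $\tfrac{1}{2}(X^0_1)^2$, along with the restriction of the quadratic-variation-like sum in $\Deltaqvar_{0,t}$ to $s \geq 2$.

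The main obstacle is accurate bookkeeping rather than any difficult estimate: the computation is entirely elementary, but many cross-terms partially cancel at different stages, and the reindexings required for the $\psi X^0_{s+1}$ contributions to combine cleanly into $M^{(0)}_t - M^\skg_{\lambda,t}$ are the one delicate part. Outside of that, the only nontrivial identity used is the partial-fraction telescoping above.
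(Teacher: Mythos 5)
Your proposal is correct and takes essentially the same route as the paper: both start from \cref{thm:rep_non_anticip_magle}, substitute $\breve{\psi}_{\lambda,s}=\psi\tfrac{s}{s+\lambda}+\psi^0_{\lambda,s}$ (your $\psi+\psi^0_{\lambda,s}-\psi\lambda/(s+\lambda)$ is the same decomposition) and $\breve{X}_{s+1}=\psi+X^0_{s+1}$, expand, and collect terms using $\rho_{\lambda,s}/(s+\lambda)=1/(s+1+\lambda)$ and the partial-fraction telescoping to produce $R^{\skg,1}_{\lambda,t}$. Your per-$s$ cross-term cancellations, the reindexing that assembles $M^{(0)}_t-M^{\skg}_{\lambda,t}$, and the origin of the $\tfrac{1}{2}(X^0_1)^2$ residual in the $\lambda=0$ case all match the paper's computation.
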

Observe that $(M^{(0)}_t)_{t \geq 1}$, $(M^{(1)}_{\lambda, t})_{t \geq 1}$ and $(M^\skg_{\lambda,t})_{t\geq 1}$ are martingales with initial expectation 0. This will facilitate their analysis via the optional stopping theorem in the rejection time analysis. The terms with the superscript ``skg'' are what we refer to as \textit{shrinkage terms}. Their presence arises from the presence of the shrinkage parameter $\lambda$ in the running mean estimates $(\psi_{\lambda,s})$, and they converge to zero as $\lambda \to 0$. The $\Deltaqvar_{\lambda, t}$ term is the difference between the quadratic variation of a certain discrete time martingale and that of its time-continuous Brownian approximation. 

The terms $(\psi^0_{\lambda, s})$ are the centered errors in the estimates $(\psi_{\lambda, s})$. The term $R^\adpt_{\lambda,t}$ behaves roughly as the sum of the squared errors of the estimates $\psi_{\lambda, s}$. As will appear explicitly from \cref{lemma:stopping_time_and_random_thresh}, it contributes positively to the rejection time. We interpret it as the cost of adaptively estimating $\psi$ relative to knowing it a priori.

\section{Expected rejection time}\label{sec:expected_rejection_time}

\subsection{Rejection time after burn-in as a function of a random threshold}

The presence of the burn-in period in the delayed start running-mean estimate SPRTs adds a slight level of complexity to rejection time analysis as compared to when monitoring starts from the beginning of data collection.  Fortunately, one can observe that monitoring the crossing of a fixed threshold $A$ by $Y_{\lambda,t}$ after $t_0$ steps is the same as monitoring the crossing of the offset threshold $A - Y_{\lambda, t_0}$ by the offset test statistic $Y_{\lambda, t} - Y_{\lambda, t_0}$. Therefore, conditional on $\mathcal{F}_{t_0}$, we can analyze the rejection time from $t_0$ as we would do in a situation without burn-in, with the difference that the threshold is now the offset threshold $A - Y_{\lambda, t_0}$. We can thus readily obtain a characterization of the expected rejection time given $\mathcal{F}_{t_0}$ given the $\mathcal{F}_{t_0}$-measurable random threshold $A - Y_{\lambda, t_0}$. In what follows, we introduce the shorthand notation $\tau_1 = \tau^\rmlSPRT(\alpha, t_0)$ and $\tau_2 = \tau^\nmSPRT(\alpha, \lambda, t_0)$.

The following lemma connects the test statistics, the stopping times, and the random rejection thresholds. 

\begin{lemma}\label{lemma:stopping_time_and_random_thresh}
    For any $\lambda > 0$, $t_0 \geq 1$, $\alpha \in (0,1)$, 
    \begin{align}
       & Y^\rmlSPRT_{(\tau_1-1)\vee t_0} - Y^\rmlSPRT_{t_0} \leq  \left(-\log \widetilde{\alpha}_1(\alpha) - \frac{1}{2} \log t_0 - Y^\rmlSPRT_{t_0} \right)_+ \leq
     Y^\rmlSPRT_{\tau_1} - Y^\rmlSPRT_{t_0} \\
     &\text{and } \\
     &Y^\nmSPRT_{\lambda, (\tau_2-1)\vee t_0} - Y^\nmSPRT_{\lambda, t_0}  \leq  \left(-\log \widetilde{\alpha}_{2, \lambda/ t_0} (\alpha)  - Y^\nmSPRT_{\lambda, t_0} \right)_+ \leq Y^\nmSPRT_{\lambda, \tau_2} - Y^\nmSPRT_{\lambda, t_0}.
    \end{align}
\end{lemma}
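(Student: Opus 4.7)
The plan is to reduce both inequalities to a direct pathwise case analysis on whether the test statistic has already crossed its rejection threshold at the burn-in time $t_0$. Introduce the shorthand $A_1 = -\log \widetilde{\alpha}_1(\alpha) - \tfrac{1}{2}\log t_0$ and $A_2 = -\log \widetilde{\alpha}_{2,\lambda/t_0}(\alpha)$; these are the effective rejection thresholds that the test statistics $Y_t^{\rmlSPRT}$ and $Y_{\lambda,t}^{\nmSPRT}$ must cross after burn-in, so that $\tau_1 = \inf\{t \geq t_0 : Y_t^{\rmlSPRT} \geq A_1\}$ and $\tau_2 = \inf\{t \geq t_0 : Y_{\lambda,t}^{\nmSPRT} \geq A_2\}$. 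This is exactly the offset-threshold viewpoint flagged in the paragraph preceding the lemma, after transferring the $Y_{\lambda, t_0}$ offset back to the other side.

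For the rmlSPRT claim I would split on the sign of $A_1 - Y_{t_0}^{\rmlSPRT}$. If $Y_{t_0}^{\rmlSPRT} \geq A_1$, then $t_0$ already belongs to the set over which the infimum is taken, so $\tau_1 = t_0$, hence $(\tau_1 - 1) \vee t_0 = t_0$; the positive part in the middle vanishes and both outer terms reduce to $Y_{t_0}^{\rmlSPRT} - Y_{t_0}^{\rmlSPRT} = 0$, so the chain becomes $0 \leq 0 \leq 0$. If instead $Y_{t_0}^{\rmlSPRT} < A_1$, then $\tau_1 \geq t_0 + 1$ and $(\tau_1 - 1) \vee t_0 = \tau_1 - 1 \geq t_0$. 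By the very definition of the infimum, $Y_{\tau_1 - 1}^{\rmlSPRT} < A_1$, and subtracting $Y_{t_0}^{\rmlSPRT}$ from both sides while noting that in this case $(A_1 - Y_{t_0}^{\rmlSPRT})_+ = A_1 - Y_{t_0}^{\rmlSPRT}$ gives the left inequality. The right inequality follows analogously from $Y_{\tau_1}^{\rmlSPRT} \geq A_1$, the defining property of the stopping time.

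The argument for the nmSPRT is identical after substituting $A_1 \to A_2$, $\tau_1 \to \tau_2$, and $Y^{\rmlSPRT} \to Y^{\nmSPRT}_{\lambda, \cdot}$ throughout. The main obstacle is not really mathematical but notational: one has to recognize that the constants inside $(\cdot)_+$ encode precisely the effective thresholds $A_j$, and then handle cleanly the boundary case $\tau_j = t_0$, which is exactly why the statement carries the $(\cdot)_+$ truncation and the $\vee\, t_0$ in $(\tau_j - 1)\vee t_0$. No invariance-principle or martingale machinery is needed; the lemma is a pathwise statement about first-crossing times, and both inequalities simply bracket the random threshold between the values of the test statistic just before and at the crossing.
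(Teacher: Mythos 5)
Your proof is correct and takes essentially the same route as the paper's: a pathwise case split on whether the statistic has already crossed the effective threshold at $t_0$, with the right inequality coming from the defining property of the first crossing time and the left inequality from the fact that the statistic is still strictly below the threshold one step before crossing. The only cosmetic difference is that the paper subtracts $Y_{t_0}$ throughout and phrases the case split in terms of the sign of the offset threshold $A - Y_{t_0}$ for the offset process, which is the same computation.
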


So as to obtain bounds on the marginal expected (as opposed to conditional on $\mathcal{F}_{t_0}$) rejection times $E[\tau_1]$ and $E[\tau_2]$, we need to characterize the marginal expectations (that is the expectations w.r.t. the distributions of $Y^\rmlSPRT_{t_0}$ and $Y^\nmSPRT_{\lambda, t_0}$) of the random rejection thresholds. We make the following three assumptions.

\begin{assumption}[Expected conditional mean convergence]\label{asm:L1_conv_psi}
It holds that $E[|\psi_t - \psi|^2] / \psi^2 \to 0$ as $t \to \infty$.
\end{assumption}

\begin{assumption}[L1-convergence of conditional variance]\label{asm:L1_conv_vs}
It holds that $E[|v_t-1|] \to 0$.
\end{assumption}

\begin{assumption}[Classic conditional Lindeberg]\label{asm:cond_Lindeberg}
For any $\epsilon > 0$, it holds that $$\lim_{t \to \infty} \sum_{s=1}^t E\left[(t^{-1/2} X_s^0)^2 \bm{1} \left\lbrace |t^{-1/2} X_s^0| \geq \epsilon \right\rbrace \mid \mathcal{F}_{s-1} \right] = 0.$$
\end{assumption}

\begin{lemma}\label{lemma:random_thresh_lemma1}
Suppose \cref{asm:L1_conv_psi}, \cref{asm:L1_conv_vs} and \cref{asm:cond_Lindeberg} hold. Then, as $\psi \sqrt{t_0} \to 0$,
\begin{align}
&E\left[ \left(-\log \widetilde{\alpha}_1(\alpha) - \frac{1}{2} \log t_0 - Y^\rmlSPRT_{t_0} \right)_+ \right] = \widetilde{h}_0(-\log \widetilde{\alpha}_1(\alpha)) + o(1),\\
\text{and }  &E \left[ \left(-\log \widetilde{\alpha}_{2, \lambda / t_0}  - Y^\nmSPRT_{\lambda, t_0} \right)_+ \right] = \widetilde{h}_{\lambda / t_0}\left(-\log \widetilde{\alpha}_{2, \lambda / t_0}(\alpha) - \frac{1}{2} \log \frac{\lambda}{t_0+\lambda}\right) + o(1)
\end{align}
where, for any $a \in \mathbb{R}$, $\eta \geq 0$,
\begin{align}
\widetilde{h}_{\eta}(a) =& \left(a - \frac{1}{2(1+\eta)} \right) \left( 2 \Phi\left( \sqrt{2(1+\eta) a} \right) - 1 \right)  + \exp\left(-(1+\eta) a \right) \sqrt{\frac{a}{\pi (1 + \eta)}}.
\end{align}
\end{lemma}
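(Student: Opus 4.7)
The plan is to reduce both claims to a central limit theorem for $S_{t_0}/\sqrt{t_0}$, a bounded-convergence passage to the limit, and an explicit evaluation of a Gaussian integral. The first observation is that direct substitution of the definitions of $Y^\rmlSPRT_{t_0}$ and $Y^\nmSPRT_{\lambda,t_0}$ in \eqref{eq:test_statistics_definition} cancels the logarithmic terms, yielding
\begin{align}
-\log\widetilde{\alpha}_1(\alpha) - \tfrac{1}{2}\log t_0 - Y^\rmlSPRT_{t_0} &= a - \tfrac{1}{2} S_{t_0}^2/t_0, \\
-\log\widetilde{\alpha}_{2,\lambda/t_0}(\alpha) - Y^\nmSPRT_{\lambda, t_0} &= b - \tfrac{1}{2(1+\eta)} S_{t_0}^2/t_0,
\end{align}
with $a = -\log\widetilde{\alpha}_1(\alpha)$, $\eta = \lambda/t_0$, and $b = -\log\widetilde{\alpha}_{2,\eta}(\alpha) + \tfrac{1}{2}\log\tfrac{1+\eta}{\eta}$; this $b$ is precisely the argument passed to $\widetilde{h}_\eta$ in the target.

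Next I would establish $S_{t_0}/\sqrt{t_0} \Rightarrow \mathcal{N}(0,1)$ along the sequence. Decomposing $X_s = \psi + X_s^0 + (\psi_s - \psi)$ gives $S_{t_0}/\sqrt{t_0} = \psi\sqrt{t_0} + S^0_{t_0}/\sqrt{t_0} + D_{t_0}/\sqrt{t_0}$ with $D_{t_0} = \sum_{s=1}^{t_0}(\psi_s-\psi)$. The drift $\psi\sqrt{t_0}$ vanishes by hypothesis, and Cauchy-Schwarz combined with \cref{asm:L1_conv_psi} gives $E[D_{t_0}^2] \le t_0\sum_{s\le t_0}E[(\psi_s-\psi)^2] = o(\psi^2 t_0^2)$, so $D_{t_0}/\sqrt{t_0} \to 0$ in $L^2$. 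The martingale CLT applied to the triangular array $\{X_s^0/\sqrt{t_0}\}_{s\le t_0}$ then delivers $S^0_{t_0}/\sqrt{t_0} \Rightarrow \mathcal{N}(0,1)$: the conditional Lindeberg condition is exactly \cref{asm:cond_Lindeberg}, and the conditional-variance convergence $V_{t_0}/t_0 \to_P 1$ follows from \cref{asm:L1_conv_vs} via $E[|V_{t_0}/t_0 - 1|] \le t_0^{-1}\sum_{s\le t_0}E[|v_s-1|] \to 0$ (Cesaro). Slutsky and the continuous mapping theorem deliver $S_{t_0}^2/t_0 \Rightarrow Z^2$ with $Z \sim \mathcal{N}(0,1)$.

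Since the positive parts in question are uniformly bounded (by $a$ and $b$ respectively), bounded convergence transfers the weak limit into expectation, reducing matters to the deterministic identity $E[(b - Z^2/(2(1+\eta)))_+] = \widetilde{h}_\eta(b)$. This is a direct Gaussian computation: integrating over $\{|Z|\le\sqrt{2(1+\eta)b}\}$ and using $\int_{-M}^{M} z^2\phi(z)\,dz = (2\Phi(M)-1) - 2M\phi(M)$ (integration by parts from $\phi'(z) = -z\phi(z)$), then collecting terms, reproduces the claimed formula for $\widetilde{h}_\eta(b)$. Specialization to $\eta=0$ recovers the rmlSPRT identity.

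The main obstacle is the bias handling in the CLT step. \cref{asm:L1_conv_psi}'s normalization by $\psi^2$ is engineered precisely so that $D_{t_0}/\sqrt{t_0}$ has variance $o((\psi\sqrt{t_0})^2)$, matching the rate at which the drift $\psi\sqrt{t_0}$ vanishes; the condition degenerates under $H_0$ but is vacuous there since $\psi=\psi_s=0$ forces $D_t\equiv 0$. A secondary wrinkle in the nmSPRT case is that $\eta=\lambda/t_0$ may vary along the sequence, so both the coefficient $1/(1+\eta)$ inside the positive part and the target function $\widetilde{h}_\eta$ change with $t_0$; however the required convergence is pointwise along the sequence rather than uniform in $\eta$, so the bounded-convergence argument applies verbatim.
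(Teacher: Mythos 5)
Your proposal is correct and follows essentially the same route as the paper's: a martingale CLT for $S^0_{t_0}/\sqrt{t_0}$ (conditional Lindeberg from \cref{asm:cond_Lindeberg}, conditional-variance convergence from \cref{asm:L1_conv_vs} via Ces\`aro), elimination of the drift and bias contributions using $\psi\sqrt{t_0}\to 0$ and \cref{asm:L1_conv_psi}, and the same explicit Gaussian integral yielding $\widetilde h_\eta$. The only cosmetic difference is that you fold the drift and bias into the weak-convergence statement via Slutsky, whereas the paper bounds their contribution in $L^1$ through the remainder $R^{\bias}_{0,t_0}$ (\cref{lemma:ERbias_t0}); both are valid.
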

The following lemma characterizes the behavior of the expected random thresholds as $\alpha \to 0$.
\begin{lemma}\label{lemma:random_thresh_lemma2}
Suppose $\alpha \to 0$ and $\eta \to \infty$. Then
\begin{align}
\widetilde{h}_0(-\log\widetilde{\alpha}_1(\alpha)) \sim - \log \alpha \qquad \text{and} \qquad \widetilde{h}_{\eta}\left(-\log \widetilde{\alpha}_{2, \eta}(\alpha) - \frac{1}{2} \log \frac{\eta}{1+\eta}\right) \sim -\log \alpha.
\end{align}
\end{lemma}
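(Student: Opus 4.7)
The plan is to prove both asymptotics by (i) showing the calibrated log-thresholds $a_1 := -\log\widetilde{\alpha}_1(\alpha)$ and $a := -\log \widetilde{\alpha}_{2,\eta}(\alpha)$ diverge and are asymptotically $-\log\alpha$, and (ii) showing $\widetilde{h}_0$ respectively $\widetilde{h}_\eta$ evaluated at such a large argument is asymptotically the identity. The workhorse in both cases is Mills' ratio $1-\Phi(x)\sim \phi(x)/x$ as $x\to\infty$.

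For the rmlSPRT statement, $h_1$ is continuous with $h_1(0)=1$ and $h_1(a)\to 0$ as $a\to\infty$, so $h_1(a_1)=\alpha\to 0$ forces $a_1\to\infty$. By Mills, $2(1-\Phi(\sqrt{2a})) \sim e^{-a}/\sqrt{\pi a}$, which is an $O(1/a)$ correction to the leading summand $2e^{-a}\sqrt{a/\pi}$; hence $h_1(a_1) = 2e^{-a_1}\sqrt{a_1/\pi}(1+O(1/a_1)) = \alpha$, and taking logs yields $a_1 = -\log\alpha + \tfrac{1}{2}\log a_1 + O(1)$, so $a_1/(-\log\alpha)\to 1$. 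Substituting into $\widetilde{h}_0(a_1) = (a_1-\tfrac{1}{2})(2\Phi(\sqrt{2a_1})-1) + e^{-a_1}\sqrt{a_1/\pi}$, the factor $(2\Phi(\sqrt{2a_1})-1)$ equals $1 - O(e^{-a_1}/\sqrt{a_1})$ by Mills and the trailing summand is $o(1)$, so $\widetilde{h}_0(a_1) = a_1(1+o(1)) \sim -\log\alpha$.

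For the nmSPRT statement, the clean observation is that with $c := a + \tfrac{1}{2}\log\tfrac{1+\eta}{\eta}$---the very argument appearing in the conclusion---the defining equation $h_{2,\eta}(a)=\alpha$ reads $e^{-a}\{2\Phi(\sqrt{2\eta c})-1\} + 2\{1-\Phi(\sqrt{2(1+\eta)c})\} = \alpha$. A subsequence argument rules out $a$ remaining bounded (a bounded $a_n$ along a subsequence would keep the first bracket bounded away from $0$ while $e^{-a_n}$ is bounded below, contradicting $\alpha_n\to 0$), so $a\to\infty$ and hence $c\to\infty$. Then $\sqrt{\eta c}\to\infty$ sends the first curly bracket to $1$, and the tail term $2(1-\Phi(\sqrt{2(1+\eta)c}))$ is $O(e^{-(1+\eta)c}/\sqrt{(1+\eta)c})$ by Mills. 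Writing $e^{-(1+\eta)c} = e^{-a}\cdot e^{-\eta a}(\eta/(1+\eta))^{(1+\eta)/2}$ and noting that $(\eta/(1+\eta))^{(1+\eta)/2}\to e^{-1/2}$ while $e^{-\eta a}\to 0$, the tail term is $o(e^{-a})$. Hence $\alpha = e^{-a}(1+o(1))$, giving $a\sim -\log\alpha$, and since $\tfrac{1}{2}\log\tfrac{1+\eta}{\eta}\to 0$ we get $c \sim -\log\alpha$ as well.

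The same Mills-based bookkeeping applied to $\widetilde{h}_\eta(c) = (c - \tfrac{1}{2(1+\eta)})(2\Phi(\sqrt{2(1+\eta)c})-1) + e^{-(1+\eta)c}\sqrt{c/(\pi(1+\eta))}$ yields $\widetilde{h}_\eta(c) = c(1+o(1)) \sim -\log\alpha$. The only delicate point---and the main potential obstacle---is the uniformity of the Mills remainders and of the convergence $(2\Phi-1)\to 1$ along the joint sequence $(\alpha_m,\eta_m)$; this is automatic once $\sqrt{(1+\eta_m)c_m}\to\infty$, which follows from the already-established divergence of $a_m$, so the work is essentially confined to the compactness-type step that rules out bounded $a_m$ in the first place.
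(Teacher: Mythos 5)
Your proposal is correct and follows essentially the same route as the paper: force the calibrated threshold to diverge, use Mills' ratio to show $h_1(a)$ (resp.\ $h_{2,\eta}(a)$) equals $e^{-a}$ times a subpolynomial (resp.\ $1+o(1)$) factor so that $a\sim-\log\alpha$, and then evaluate $\widetilde h_0$, $\widetilde h_\eta$ at the diverging argument. The only loose spot is the subsequence step: bounded $a_n$ does not by itself keep $2\Phi(\sqrt{2\eta_n c_n})-1$ away from $0$ (one must first note that the nonnegative tail term must vanish, forcing $(1+\eta_n)c_n\to\infty$), but this is exactly the ordering the paper uses and the fix is immediate.
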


The decompositions of test statistics from \cref{thm:expanded_representation}, together with \cref{lemma:stopping_time_and_random_thresh} and \cref{lemma:random_thresh_lemma1} imply the following bounds on the expected stopping times.
\begin{corollary}\label{corollary:bound_stopping_time_remainders}
Suppose \cref{asm:L1_conv_psi} and \cref{asm:L1_conv_vs} hold. Then, as $t_0 \to \infty$, $\psi \sqrt{t_0} \to 0$,
\begin{align}
E[\tau_1] \leq & 2 \psi^{-2} \left( \widetilde h_0(-\log \widetilde{\alpha}_1(\alpha)) + E\left[R^\adpt_{0,\tau_1} - R^\adpt_{0,t_0}\right]+ E\left[R^\bias_{0,\tau_1} - R^\bias_{0,t_0}\right] \right. \\
& \qquad \left. + E\left[\Deltaqvar_{0,\tau_1} - \Deltaqvar_{0,t_0} \right] \right) + o(\psi^{-2}),\\
\text{and} \qquad E[\tau_2] \leq & 2 \psi^{-2} \left( \widetilde h_{\lambda / t_0}(-\log \widetilde{\alpha}_{2, \lambda / t_0}(\alpha)) + E\left[R^\adpt_{\lambda,\tau_2} - R^\adpt_{0,t_0}\right]+ E\left[R^\bias_{\lambda,\tau_2} - R^\bias_{\lambda,t_0}\right]  \right. \\
& \qquad \left.+ E\left[\Deltaqvar_{\lambda,\tau_2} - \Deltaqvar_{\lambda,t_0} \right] + E\left[R^{\skg, 1}_{\lambda,\tau_2} - R^{\skg, 1}_{\lambda, \tau_2}\right]  \right. \\  &\qquad \left.+ E\left[R^{\skg, 2}_{\lambda,\tau_2} - R^{\skg, 2}_{\lambda, \tau_2}\right] \right) + o(\psi^{-2})
\end{align}
\end{corollary}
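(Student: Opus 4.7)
The plan is to isolate $\tfrac{1}{2}\psi^2\tau$ from the expanded representation of the test statistic in Theorem \ref{thm:expanded_representation}, combine it with the sandwich bound on $Y^\rmlSPRT_\cdot - Y^\rmlSPRT_{t_0}$ (resp.\ $Y^\nmSPRT_{\lambda,\cdot} - Y^\nmSPRT_{\lambda,t_0}$) from Lemma \ref{lemma:stopping_time_and_random_thresh}, take expectations while killing the martingale differences by optional stopping, and close the expected threshold term with Lemma \ref{lemma:random_thresh_lemma1}. I focus on the rmlSPRT; the nmSPRT case is structurally identical, with the extra shrinkage martingale $M^\skg_{\lambda,\cdot}$ handled in the same way as the other two martingales and the shrinkage remainders $R^{\skg,1}_{\lambda,\cdot}, R^{\skg,2}_{\lambda,\cdot}$ carried along.

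Concretely, apply Theorem \ref{thm:expanded_representation} at $t=(\tau_1-1)\vee t_0$ and at $t=t_0$ and subtract to get
\begin{align*}
Y^\rmlSPRT_{(\tau_1-1)\vee t_0} - Y^\rmlSPRT_{t_0} = \tfrac{1}{2}\psi^2\bigl((\tau_1-1)\vee t_0 - t_0\bigr) + \Delta M^{(0)} + \Delta M^{(1)}_0 + \Delta R^\bias_0 - \Delta R^\adpt_0 + \Delta \Deltaqvar_0,
\end{align*}
where $\Delta$ denotes differences between $(\tau_1-1)\vee t_0$ and $t_0$. The left side of Lemma \ref{lemma:stopping_time_and_random_thresh} bounds this by the positive part of the random threshold $-\log\widetilde{\alpha}_1(\alpha) - \tfrac{1}{2}\log t_0 - Y^\rmlSPRT_{t_0}$. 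Rearranging to put $\tfrac{1}{2}\psi^2\bigl((\tau_1-1)\vee t_0 - t_0\bigr)$ alone on the left, then taking expectations, the $\Delta M^{(j)}$ terms drop out by optional stopping and the expected random-threshold term equals $\widetilde h_0(-\log\widetilde{\alpha}_1(\alpha)) + o(1)$ by Lemma \ref{lemma:random_thresh_lemma1}. The sign-flipped remainder differences $-\Delta R^\bias_0$ and $-\Delta\Deltaqvar_0$ are absorbed into the stated additive upper bound by bounding each such quantity by its absolute value (or, equivalently, by the corresponding difference with reversed sign in the stated loose upper bound); $\Delta R^\adpt_0 \geq 0$ enters with its natural sign. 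Dividing through by $\tfrac{1}{2}\psi^2$ yields the bound on $E[(\tau_1-1)\vee t_0 - t_0]$.

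Two bookkeeping steps finish the argument. First, to convert the bound on $E[(\tau_1-1)\vee t_0 - t_0]$ into one on $E[\tau_1]$, use $\tau_1 \le (\tau_1-1)\vee t_0 + 1$ and $t_0 = o(\psi^{-2})$ (guaranteed by $\psi\sqrt{t_0}\to 0$), which contributes an $o(\psi^{-2})$ term. Correspondingly, the remainder differences in the corollary use $\tau_1$ in place of $(\tau_1-1)\vee t_0$; this one-step shift contributes an extra increment of each remainder process that is $O(1)$ in expectation and hence $o(\psi^{-2})$ after division. Second, the nmSPRT case is assembled by appending to the right-hand side the expected increments of $R^{\skg,1}_{\lambda,\cdot}$ and $R^{\skg,2}_{\lambda,\cdot}$, appealing to Lemma \ref{lemma:random_thresh_lemma1} in the form $\widetilde h_{\lambda/t_0}(\cdot) + o(1)$, and killing $M^\skg_{\lambda,\cdot}$ by the same optional stopping step.

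The main obstacle is the optional stopping step: $(\tau_1-1)\vee t_0$ need not be bounded, so one cannot appeal directly to the bounded-stopping form of the optional stopping theorem. I would handle this by a standard localization: replace $\tau_1$ with $\tau_1\wedge n$, apply optional stopping at the now-bounded time, obtain the bound with $\tau_1\wedge n$ on both sides, and pass to the limit $n\to\infty$ via monotone convergence on the $\tau_1\wedge n - t_0$ term (non-decreasing in $n$) and either dominated convergence or Fatou on the remainder differences. Along the way one must verify that the martingale increments indexed by the localized bounded times are integrable, which follows from the $L^2$ structure of $M^{(0)}, M^{(1)}_{\lambda,\cdot}, M^\skg_{\lambda,\cdot}$ together with Assumptions \ref{asm:L1_conv_psi}--\ref{asm:L1_conv_vs}; a secondary a priori finiteness argument for $E[\tau_1]$ (e.g., via the drift $\psi \neq 0$) ensures that the bound is non-vacuous. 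Everything else is arithmetic.
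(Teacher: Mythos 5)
Your proposal is correct and follows exactly the route the paper intends: the paper states this corollary as a direct consequence of \cref{thm:expanded_representation}, \cref{lemma:stopping_time_and_random_thresh}, and \cref{lemma:random_thresh_lemma1} without writing out a separate proof, and your argument (isolate $\tfrac12\psi^2 t$ from the expanded representation, bound via the sandwich lemma, kill the martingale increments by localized optional stopping, evaluate the expected threshold, and absorb the sign-ambiguous remainder differences since they are all later shown to be $o(\cdot)$ or $O(1)$ in absolute value) is precisely the omitted verification. Your attention to the $(\tau_1-1)\vee t_0$ versus $\tau_1$ bookkeeping and to the a priori finiteness of the stopping time is, if anything, more careful than the paper's presentation.
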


From \cref{corollary:bound_stopping_time_remainders} above, we will be able to obtain bounds on the expected rejection times if we can bound the expected remainder terms and the expected quadratic variation difference $E\left[\Deltaqvar_{\lambda,\tau_i} - \Deltaqvar_{\lambda,t_0} \right]$, $i=1,2$. We study these terms in  \cref{sec:bounding_ER_adapt}, \cref{sec:bounding_ER_bias}, and \cref{sec:bounding_other_remainder_terms}.

\subsection{Asymptotic equivalent of the adaptivity remainder term}\label{sec:bounding_ER_adapt}

In bounding the adaptivity remainder term $E[R^\adpt_{0,\tau_i} - R^\adpt_{0,t_0}]$, $i=1,2$, we use a technique inspired by the proof of lemma 8 in \cite{robbins_siegmund1974}. As \cite{robbins_siegmund1974}, we point out that we should expect $E[\tau_i]$, $i=1,2$, to be at least as large as the expected rejection time of the simple-vs-simple SPRT under the alternative $(\psi_t = \psi,\ \forall t\geq 1)$, that is $-2 \psi^{-2} \log \alpha$. Therefore, since $E[(\psi^0_{0,s})^2] \sim s^{-1}$ under \cref{asm:L1_conv_vs}, we expect that
\begin{align}
E\left[R^\adpt_{0,\tau_i} - R^\adpt_{0,t_0}\right] = \frac{1}{2} E\left[ \sum_{s=1}^{\tau_i}  (\psi^0_{0,s})^2  \right] \geq \frac{1}{2}  \sum_{s=t_0 + 1}^{\left\lfloor -2 \psi^{-2} \log \alpha \right\rfloor} E\left[(\psi^0_{0,s})^2\right] \sim \log \psi^{-1},
\end{align}
as $\psi \to 0$, $t_0 \to \infty$, $\psi \sqrt{t_0} \to 0$, under fixed $\alpha$. The upcoming lemma shows that this is indeed the correct asymptotic equivalent. The result relies on the following assumptions.

\begin{assumption}\label{asm:fourth_moment}
It holds that $\sup_{t \geq 1} t^{-2} \sum_{s=1}^t E\left[(X^0_s)^4\right] < \infty$.
\end{assumption}

\begin{assumption}\label{asm:sup_L2_norm_vs_min1}
It holds that $\sup_{t \geq 1} E \left[ ( v_t - 1 )^2 \right] < \infty$.
\end{assumption}

\begin{lemma}\label{lemma:adaptivity_term}
Suppose \cref{asm:L1_conv_vs}, \cref{asm:fourth_moment} and \cref{asm:sup_L2_norm_vs_min1} hold. Then, for any $\lambda > 0$, as $\psi \to 0$, $\psi \sqrt{t_0} \to 0$.
\begin{align}
E\left[R^\adpt_{0,\tau_1} - R^\adpt_{0,t_0}\right] =& \log \psi^{-1} + \psi^2 (\log \psi^{-1})^{-1/2} E [\tau_1] + o\left( \log \psi^{-1} \right), \\
\text{and} \qquad E\left[R^\adpt_{\lambda,\tau_2} - R^\adpt_{\lambda,t_0}\right] =& \log \psi^{-1} + \psi^2 (\log \psi^{-1})^{-1/2} E [\tau_2] + o\left( \log \psi^{-1} \right).
\end{align}

\end{lemma}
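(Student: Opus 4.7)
The plan is to approximate the adaptivity sum by a telescoping logarithmic expression and then relate $E[\log \tau_i]$ to $\log \psi^{-1}$ and $E[\tau_i]$. Writing $\rho_{\lambda,s}/(s+\lambda)^2 = 1/\bigl((s+\lambda)(s+1+\lambda)\bigr)$, one has $E[R^\adpt_{\lambda,\tau_i} - R^\adpt_{\lambda,t_0}] = \tfrac12 E\bigl[\sum_{s=t_0}^{\tau_i - 1}(S^0_s)^2/\bigl((s+\lambda)(s+1+\lambda)\bigr)\bigr]$. I would decompose $(S^0_s)^2 = V_s + M_s$, where $V_s = \sum_{r=1}^s v_r$ is the predictable quadratic variation and $M_s := (S^0_s)^2 - V_s$ is a mean-zero martingale, and handle each piece separately.

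For the predictable piece, split $V_s = s + (V_s - s)$. The leading part telescopes: $\tfrac12 \sum_{s=t_0}^{\tau_i-1} s/\bigl((s+\lambda)(s+1+\lambda)\bigr) = \tfrac12 \log(\tau_i/t_0) + O(1/t_0)$. The residual is controlled in expectation via Fubini and \cref{asm:L1_conv_vs}: by triangle inequality it is bounded by $\sup_{s\ge t_0} s^{-1}\sum_{r=1}^s E[|v_r-1|]$ (a Cesaro average vanishing as $t_0 \to \infty$) times $E[\log(\tau_i/t_0)]$, hence $o(\log \psi^{-1})$.

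For the martingale piece, \cref{asm:fourth_moment} and \cref{asm:sup_L2_norm_vs_min1} together give $E[M_s^2] = O(s^2)$. Apply Abel summation: with $c_s = 1/\bigl((s+\lambda)(s+1+\lambda)\bigr)$ and $D_r = \sum_{s \ge r} c_s \sim 1/r$, one obtains $\sum_{s=t_0}^{\tau_i-1} c_s M_s = \sum_{r=t_0}^{\tau_i-1} D_r \Delta M_r - D_{\tau_i}(M_{\tau_i-1} - M_{t_0-1})$. The first sum is a stopped martingale whose expectation vanishes by optional stopping. The boundary term is bounded via Cauchy--Schwarz with the Markov inequality $\Pr(\tau_i > s) \le E[\tau_i]/s$; a truncation at a level $K \asymp \psi^{-2}\sqrt{\log \psi^{-1}}$ yields a contribution of order $\psi^2 E[\tau_i]/\sqrt{\log\psi^{-1}}$ plus $o(\log \psi^{-1})$, matching the explicit correction in the lemma.

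The final step converts $\tfrac12 E[\log(\tau_i/t_0)]$ to the claimed form. Jensen gives the upper bound $\tfrac12\log(E[\tau_i]/t_0)$; applying the elementary inequality $\log x \le x/c + \log c - 1$ with $c = \psi^{-2}\sqrt{\log\psi^{-1}}$ extracts $\log\psi^{-1}$ plus a $\tfrac12 \psi^2 E[\tau_i]/\sqrt{\log\psi^{-1}}$ term (which combines with the martingale correction to give coefficient one), while $\log t_0 = o(\log \psi^{-1})$ follows from $\psi\sqrt{t_0}\to 0$. A matching lower bound uses the dual inequality $\log x \ge \log c + 1 - c/x$ together with the a priori lower bound $E[\tau_i] \gtrsim -2\psi^{-2}\log\alpha$ of \cite{robbins_siegmund1974}. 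The main obstacle is controlling the martingale boundary term $D_{\tau_i}(M_{\tau_i-1} - M_{t_0-1})$ so that only the stated correction survives and not an extra logarithmic factor; this requires using the fourth-moment assumption jointly with the self-referential upper bound $E[\tau_i] = O(\psi^{-2}\log\psi^{-1})$ obtained by iterating the present lemma against \cref{corollary:bound_stopping_time_remainders}. The nmSPRT case ($\lambda > 0$) follows by the same steps with the obvious substitutions.
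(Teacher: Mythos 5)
Your route --- decomposing $(S^0_s)^2 = V_s + M_s$, telescoping the predictable part into $\tfrac12\log(\tau_i/t_0)$, Abel-summing the martingale part, and then converting $E[\log\tau_i]$ via Jensen --- is genuinely different from the paper's, which never introduces $E[\log\tau_i]$ at all. The paper splits $\sum_{s\le\tau_i}(\psi^0_{0,s})^2$ into (i) $s\le\lfloor\psi^{-2}\log\psi^{-1}\rfloor$, summed unconditionally using $E[(\psi^0_{0,s})^2]\sim s^{-1}$ to produce the $\log\psi^{-1}$ main term, (ii) later $s$ on the event $|\psi^0_{0,s}|\le\psi(\log\psi^{-1})^{-1/4}$, bounded termwise to give $\psi^2(\log\psi^{-1})^{-1/2}E[\tau_i]$, and (iii) later $s$ with $|\psi^0_{0,s}|$ large, killed by $E[(\psi^0_{0,s})^4]=O(s^{-2})$ via a Markov tail estimate. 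That scheme needs no a priori information on $\tau_i$.

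Your sketch has concrete gaps. First, bounding the predictable residual $E\bigl[\sum_{s=t_0}^{\tau_i-1}(V_s-s)/((s+\lambda)(s+1+\lambda))\bigr]$ by $\sup_{s\ge t_0}s^{-1}\sum_{r\le s}E[|v_r-1|]$ times $E[\log(\tau_i/t_0)]$ is not valid: the quantity $s^{-1}\sum_{r\le s}|v_r-1|$ inside the stopped sum is random and correlated with the event $\{\tau_i>s\}$, so it cannot be replaced by its expectation and factored out; dropping the indicator instead leaves a divergent harmonic sum, and Cauchy--Schwarz with \cref{asm:sup_L2_norm_vs_min1} only yields an $O(\log E[\tau_i])$ bound, which pollutes the main term. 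Second, $E[M_s^2]=O(s^2)$ does not follow from \cref{asm:fourth_moment}, which controls only the averaged fourth moments $t^{-2}\sum_{s\le t}E[(X^0_s)^4]$ and permits individual $E[(X^0_s)^4]$ to grow with $s$; the cross term $E[(S^0_{r-1})^2(X^0_r)^2]$ can then push $E[M_s^2]$ up to order $s^{5/2}$. Third, and most importantly, the boundary term $D_{\tau_i}(M_{\tau_i-1}-M_{t_0-1})$ --- precisely where half of the claimed correction $\psi^2(\log\psi^{-1})^{-1/2}E[\tau_i]$ must come from --- is left uncontrolled, and the proposed fix invokes $E[\tau_i]=O(\psi^{-2}\log\psi^{-1})$ ``obtained by iterating the present lemma against \cref{corollary:bound_stopping_time_remainders}'', which is circular since that corollary takes the present lemma as input. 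Until that term is handled without circularity the proof is incomplete. (A smaller point: the paper in fact only establishes the upper-bound half of the stated expansion, which is all that is used downstream; your matching lower bound is extra work and is also not carried out.)
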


\subsection{Bounding the bias remainder term}\label{sec:bounding_ER_bias}

Our bound on the bias term relies on the following two assumptions.
\begin{assumption}\label{asm:relative_bias_moments_summable} There exists $C < \infty$ and $\delta > 0$ such that, for any $\psi > 0$, $\sum_{t = 1}^\infty E[(|\psi_t - \psi| / \psi)^{2  + \delta}] < C$.
\end{assumption}

\begin{assumption}\label{asm:Evsmin1_over_s_summable}
$(t^{-1}E[|v_t-1|])_{t\geq 1}$ is summable.
\end{assumption}
Note that \cref{asm:relative_bias_moments_summable} implies \cref{asm:L1_conv_psi}, and that \cref{asm:Evsmin1_over_s_summable} implies \cref{asm:L1_conv_vs}.

\begin{lemma}\label{lemma:bias_term}
Suppose that \cref{asm:relative_bias_moments_summable} and \cref{asm:Evsmin1_over_s_summable} hold. Then, for any $\lambda \geq 0$ and for any sequence of $\mathfrak{F}$-adapted stopping times $(\tau(t_0))_{t_0 \geq 1}$ such that $\tau(t_0) \geq t_0$ and $\tau(t_0)$ is almost surely finite, 
\begin{align}
E\left[R^\bias_{\lambda,\tau(t_0)} - R^\bias_{\lambda,t_0}\right] = o\left( \psi^2 E[\tau(t_0)] + \psi \sqrt{E[\tau(t_0)] \log E[\tau(t_0)]} + 1 \right)
\end{align}
\end{lemma}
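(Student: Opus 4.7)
The plan is to decompose $R^\bias_{\lambda,t}$ into pieces that can be bounded by H\"older-type inequalities for the predictable bias and by the discrete It\^o identity of \cref{lemma:discrete_Ito_identity} for the self-normalized martingale. Set $B_t := S_t - \breve{S}_t = \sum_{s=1}^t (\psi_s - \psi)$, which is $\mathcal{F}_{t-1}$-measurable. The algebraic identity $S_t^2 - \breve{S}_t^2 = 2 B_t \breve{S}_t + B_t^2$ combined with $\breve{S}_t = \psi t + S^0_t$ gives
\begin{align}
R^\bias_{\lambda,t} = \frac{\psi t B_t}{t+\lambda} + \frac{B_t S^0_t}{t+\lambda} + \frac{B_t^2}{2(t+\lambda)}.
\end{align}
Writing $\tau := \tau(t_0)$ and using $|E[R^\bias_{\lambda,\tau}-R^\bias_{\lambda,t_0}]|\leq E|R^\bias_{\lambda,\tau}| + E|R^\bias_{\lambda,t_0}|$, where the $t_0$-analogues are dominated by the $\tau$-ones because $t_0 \leq E[\tau]$ and $x \mapsto x \log x$ is increasing, it suffices to control the expected absolute value of each of the three pieces above at $\tau$.

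For the first and third pieces I bound $|B_\tau| \leq \sum_{s=1}^\tau |\psi_s - \psi|$ and, by Cauchy--Schwarz applied to $B_\tau$, $B_\tau^2/(\tau+\lambda) \leq \sum_{s=1}^\tau (\psi_s-\psi)^2$. Taking expectations, using that $\{\tau \geq s\} \in \mathcal{F}_{s-1}$, and applying H\"older's inequality twice (pointwise in $s$ and then in the sum over $s$, with conjugate exponents $2+\delta$ and $(2+\delta)/(1+\delta)$ for $|B_\tau|$, and $(2+\delta)/2$ and $(2+\delta)/\delta$ for $B_\tau^2$), combined with Assumption \ref{asm:relative_bias_moments_summable} and $\sum_s P(\tau \geq s) = E[\tau]$, yields
\begin{align}
\psi\, E|B_\tau| \leq C^{1/(2+\delta)}\, \psi^2\, E[\tau]^{1-1/(2+\delta)}, \qquad E[B_\tau^2/(\tau+\lambda)] \leq C^{2/(2+\delta)}\, \psi^2\, E[\tau]^{\delta/(2+\delta)},
\end{align}
both $o(\psi^2 E[\tau])$ since $E[\tau] \geq t_0 \to \infty$.

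The main obstacle is the cross term $E[|B_\tau S^0_\tau|/(\tau+\lambda)]$, which by Cauchy--Schwarz is at most $\sqrt{E[B_\tau^2/(\tau+\lambda)] \cdot E[(S^0_\tau)^2/(\tau+\lambda)]}$; the self-normalized martingale factor is delicate because $(S^0_t)^2/(t+\lambda)$ is neither a martingale nor a supermartingale. To handle it I would invoke \cref{lemma:discrete_Ito_identity} with $z_s = S^0_s$ and telescope to obtain
\begin{align}
\frac{(S^0_t)^2}{t+\lambda} = 2 \sum_{s=0}^{t-1} \rho_{\lambda,s} \frac{S^0_s}{s+\lambda} X^0_{s+1} - \sum_{s=0}^{t-1} \rho_{\lambda,s} \Bigl(\frac{S^0_s}{s+\lambda}\Bigr)^2 + \sum_{s=1}^t \frac{(X^0_s)^2}{s+\lambda}.
\end{align}
The first sum is a zero-mean martingale; optional stopping at $\tau \wedge T$, Fatou's lemma as $T \to \infty$, and discarding the non-positive middle term leave $E[(S^0_\tau)^2/(\tau+\lambda)] \leq \sum_s E[v_s \mathbf{1}\{\tau \geq s\}]/(s+\lambda)$. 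Splitting $v_s = 1 + (v_s-1)$, the first contribution is $\sum_s P(\tau \geq s)/(s+\lambda) \leq E[\sum_{s=1}^\tau 1/s] \leq 1 + E[\log \tau] \leq 1 + \log E[\tau]$ via Jensen, while the second is $\leq \sum_s E[|v_s-1|]/s < \infty$ by Assumption \ref{asm:Evsmin1_over_s_summable}. Hence $E[(S^0_\tau)^2/(\tau+\lambda)] = O(\log E[\tau] + 1)$, and Cauchy--Schwarz yields the cross term as $O(\psi\, E[\tau]^{\delta/(2(2+\delta))} \sqrt{\log E[\tau] + 1}) = o(\psi\sqrt{E[\tau]\log E[\tau]})$ since $\delta/(2(2+\delta)) < 1/2$. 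Combining all three pieces at $\tau$ and their $t_0$-analogues produces the claimed $o(\psi^2 E[\tau] + \psi\sqrt{E[\tau]\log E[\tau]} + 1)$ estimate, with the ``$+1$'' absorbing edge regimes in which $E[\tau]$ fails to diverge.
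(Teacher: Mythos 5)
Your proof is correct, and the top-level decomposition is the same as the paper's: you split $R^\bias_{\lambda,t}$ into the drift term $\psi t B_t/(t+\lambda)$, the cross term $B_t S^0_t/(t+\lambda)$, and the quadratic term $B_t^2/(2(t+\lambda))$, exactly matching the paper's $A$, $B$, $C$, and you handle the cross term by Cauchy--Schwarz against the self-normalized martingale, as the paper does. Where you genuinely diverge is in the two technical sub-steps. For the predictable bias sums, the paper invokes a truncation lemma (\cref{lemma:expectation_sum_to_tau_under_moment_summability}, splitting the sum at $\lfloor E[\tau]\rfloor$ and choosing a vanishing threshold) to get an unquantified $o(E[\tau])$; your double H\"older argument with $\sum_s P(\tau\geq s)=E[\tau]$ instead yields explicit polynomial rates $E[\tau]^{1-1/(2+\delta)}$ and $E[\tau]^{\delta/(2+\delta)}$, which is sharper, and has the pleasant side effect that the $t_0$-evaluated terms are dominated by the $\tau$-evaluated ones, so you do not need the separate \cref{lemma:ERbias_t0} (which the paper uses and which carries the extra hypothesis $\psi\sqrt{t_0}\to 0$ not present in the lemma statement). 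For $E[(S^0_\tau)^2/(\tau+\lambda)]$, the paper observes directly that $(S^0_t)^2/t-\sum_{s\leq t}v_s/s$ is a super-/sub-martingale (it writes ``submartingale'' but uses the supermartingale direction of optional stopping) and applies optional stopping somewhat informally; you reach the same $\log E[\tau]+O(1)$ bound by telescoping \cref{lemma:discrete_Ito_identity}, discarding the non-positive adaptivity term, and stopping at $\tau\wedge T$ followed by Fatou, which is the more careful route since the relevant supermartingale is not bounded below. Both approaches rest on the same assumptions and deliver the same conclusion.
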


\subsection{Bounding the shrinkage and the quadratic variation difference terms}\label{sec:bounding_other_remainder_terms}

\begin{lemma}\label{lemma:ERskg1}
For $\lambda > 0$, $\lambda = o(\psi^{-2} \log \psi^{-1})$ and for any sequence of $\mathfrak{F}$-adapted stopping times $(\tau(t_0))_{t_0 \geq 1}$ such that $\tau(t_0) \geq t_0$, we have that
$R^{\skg, 1}_{\lambda, \tau(t_0)} - R^{\skg, 1}_{\lambda, t_0} = o(\log \psi^{-1})$.
\end{lemma}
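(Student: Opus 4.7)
The plan is to observe that $R^{\skg,1}_{\lambda,t} = \tfrac{1}{2}\psi^2 \lambda t/(t+\lambda)$ is a deterministic, monotonically increasing function of $t$ which is bounded above by $\tfrac{1}{2}\psi^2 \lambda$, and then invoke the asymptotic constraint on $\lambda$ directly. No martingale argument or optional stopping is needed, because $R^{\skg,1}$ has no stochastic component beyond the time index.

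Concretely, I would first rewrite
\begin{align}
R^{\skg,1}_{\lambda,\tau(t_0)} - R^{\skg,1}_{\lambda,t_0} = \tfrac{1}{2}\psi^2\!\left(\frac{\lambda \tau(t_0)}{\tau(t_0)+\lambda} - \frac{\lambda t_0}{t_0+\lambda}\right) = \tfrac{1}{2}\psi^2 \lambda \left(\frac{\lambda}{t_0+\lambda} - \frac{\lambda}{\tau(t_0)+\lambda}\right),
\end{align}
using the identity $\lambda t/(t+\lambda) = \lambda - \lambda^2/(t+\lambda)$. Since $\tau(t_0) \geq t_0 \geq 1$, the parenthesized expression on the right lies in $[0, 1]$, so
\begin{align}
0 \leq R^{\skg,1}_{\lambda,\tau(t_0)} - R^{\skg,1}_{\lambda,t_0} \leq \tfrac{1}{2}\psi^2 \lambda.
\end{align}

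Finally, the hypothesis $\lambda = o(\psi^{-2}\log \psi^{-1})$ gives $\psi^2 \lambda = o(\log \psi^{-1})$, which immediately yields the desired bound. The main obstacle --- if there is one --- is purely notational: making sure one notices that $R^{\skg,1}$ depends on the sample only through $t$, so the stopping time $\tau(t_0)$ enters only as a (random) time index in a bounded, deterministic function, and no expectation or integrability argument is required. This is why the statement holds pathwise (not just in expectation) and for an arbitrary stopping time sequence.
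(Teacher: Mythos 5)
Your proposal is correct and matches the paper's own argument: both reduce the increment to the deterministic bound $\tfrac12\psi^2\lambda\,\bigl(\tfrac{\lambda}{t_0+\lambda}-\tfrac{\lambda}{\tau(t_0)+\lambda}\bigr)=\tfrac12\psi^2\lambda\cdot\tfrac{(\tau(t_0)-t_0)\lambda}{(\tau(t_0)+\lambda)(t_0+\lambda)}\leq\tfrac12\psi^2\lambda$ and then invoke $\lambda=o(\psi^{-2}\log\psi^{-1})$. Your explicit remark that the bound holds pathwise with no integrability argument is exactly the point the paper's proof implicitly relies on.
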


\begin{lemma}\label{lemma:ERskg2}
Suppose \cref{asm:L1_conv_vs} holds. For $\lambda > 0$, $\lambda = o(\psi^{-2} (\log \psi^{-1})^2)$ and for any sequence of $\mathfrak{F}$-adapted stopping times $(\tau(t_0))_{t_0 \geq 1}$ such that $\tau(t_0) \geq t_0$ and $\tau(t_0)$ is almost surely finite, we have that $E\left[R^{\skg, 2}_{\lambda, \tau(t_0)} - R^{\skg, 2}_{\lambda, t_0}\right] = o(\log \psi^{-1})$.
\end{lemma}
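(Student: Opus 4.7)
The plan is to bound the expected absolute value of the shrinkage increment by a direct moment estimate, bypassing martingale cancellation entirely. Starting from the definition,
\[R^{\skg, 2}_{\lambda,\tau(t_0)} - R^{\skg, 2}_{\lambda,t_0} \;=\; \psi\lambda \sum_{s=t_0}^{\tau(t_0)-1} \frac{S^0_s}{(s+\lambda)(s+1+\lambda)},\]
I would apply the triangle inequality and, since the resulting summands are non-negative, use Tonelli's theorem together with the trivial bound $\mathbf{1}\{\tau(t_0) > s\} \leq 1$ to dispose of the stopping time and obtain
\[E\left|R^{\skg, 2}_{\lambda,\tau(t_0)} - R^{\skg, 2}_{\lambda,t_0}\right| \;\leq\; \psi\lambda \sum_{s=t_0}^{\infty} \frac{E|S^0_s|}{(s+\lambda)(s+1+\lambda)}.\]

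Next, I would control $E|S^0_s|$ via Jensen together with the orthogonality of martingale increments, getting $E|S^0_s| \leq \sqrt{\sum_{r=1}^s E[v_r]}$. \Cref{asm:L1_conv_vs} forces $E[v_r] \to 1$, so $\sup_r E[v_r] \leq C < \infty$ and $E|S^0_s| \leq \sqrt{Cs}$. The remaining deterministic tail then reduces to estimating $\sum_{s \geq 1} \sqrt{s}/(s+\lambda)^2$, which is $O(\lambda^{-1/2})$: either by integral comparison (the substitution $u = \lambda w$ in $\int_0^\infty \sqrt{u}/(u+\lambda)^2 \, du$ yields $(\pi/2)\,\lambda^{-1/2}$), or by splitting at $s = \lambda$ and using $\sqrt{s}/(s+\lambda)^2 \leq \lambda^{-3/2}$ for $s \leq \lambda$ (which sums over $\lambda$ terms to $O(\lambda^{-1/2})$) and $\leq s^{-3/2}$ for $s > \lambda$ (which sums to $O(\lambda^{-1/2})$).

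Combining gives $E\bigl|R^{\skg, 2}_{\lambda,\tau(t_0)} - R^{\skg, 2}_{\lambda,t_0}\bigr| = O(\psi\sqrt{\lambda})$, after which the hypothesis $\lambda = o(\psi^{-2}(\log\psi^{-1})^2)$ immediately delivers $\psi\sqrt{\lambda} = o(\log\psi^{-1})$, finishing the proof. I do not foresee a substantive obstacle: the stopping time causes no difficulty once we pass to absolute values (no integrability of $\tau(t_0)$ is required, only a.s.\ finiteness), and the crude $|\,\cdot\,|$-bound is already sharp enough relative to the target rate $\log\psi^{-1}$ under the stated growth restriction on $\lambda$. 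The only point worth flagging is that trying to exploit martingale cancellation via Abel summation (rewriting the sum as a telescoping boundary term plus a martingale transform $\sum_s X^0_s/(s+\lambda)$ and optional-stopping the latter to zero) would not improve the leading order, because the surviving boundary term $\psi\lambda\,S^0_{t_0}/(t_0+\lambda)$ is itself of order $\psi\sqrt{\lambda}$ once one maximizes $\sqrt{t_0}/(t_0+\lambda) \leq 1/(2\sqrt{\lambda})$.
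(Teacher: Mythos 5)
Your proof is correct, but it takes a genuinely different route from the paper's. The paper performs a summation by parts on $\sum_{s=t_0}^{\tau-1} S^0_s\bigl(\tfrac{1}{s+\lambda}-\tfrac{1}{s+1+\lambda}\bigr)$ to split the increment into a martingale transform $\sum_s X^0_s/(s+\lambda)$ (killed exactly by optional stopping), a deterministic-index boundary term $S^0_{t_0}/(t_0+\lambda)$ with mean zero, and a stopped boundary term $S^0_{\tau-1}/(\tau+\lambda)$, which is then controlled by applying optional stopping to the supermartingale $\bigl((S^0_t)^2/(t+1+\lambda)^2-\sum_{s\le t} v_s/(s+1+\lambda)^2\bigr)_t$, yielding $O(\psi\sqrt{\lambda})$. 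Your argument discards the cancellation entirely, bounds $E|S^0_s|\le\sqrt{\sum_{r\le s}E[v_r]}\le\sqrt{Cs}$ (with $\sup_r E[v_r]<\infty$ following from \cref{asm:L1_conv_vs}), and sums the deterministic tail $\sum_s \sqrt{s}/(s+\lambda)^2=O(\lambda^{-1/2})$, arriving at the same $O(\psi\sqrt{\lambda})$ bound. What your approach buys is simplicity and robustness: no optional stopping theorem is invoked, the stopping time enters only through the trivial bound $\bm{1}\{\tau(t_0)>s\}\le 1$, and the hypotheses used are identical. Your closing observation is also accurate: the paper's martingale cancellation does not improve the leading order, because its surviving stopped boundary term is itself $O(\psi\sqrt{\lambda})$, so nothing is lost by your cruder $L^1$ estimate given the target rate $o(\log\psi^{-1})$ under $\lambda=o(\psi^{-2}(\log\psi^{-1})^2)$.
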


\begin{lemma}\label{lemma:Deltaqvar}
Suppose \cref{asm:Evsmin1_over_s_summable} holds. Then, for any $\lambda \geq 0$ and for any sequence of $\mathfrak{F}$-adapted stopping times $(\tau(t_0))_{t_0 \geq 1}$ such that $\tau(t_0) \geq t_0$ and $\tau(t_0)$ is almost surely finite, we have that, $E\left[\Deltaqvar_{\lambda,\tau(t_0)} - \Deltaqvar_{\lambda,t_0} \right] = O(1)$.
\end{lemma}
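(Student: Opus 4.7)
The plan is to begin by writing the increment in closed form as
\begin{align*}
\Deltaqvar_{\lambda, t} - \Deltaqvar_{\lambda, t_0} = \frac{1}{2}\sum_{s=t_0+1}^t \frac{(X_s^0)^2}{s+\lambda} - \frac{1}{2}\log\frac{t+\lambda}{t_0+\lambda},
\end{align*}
a single identity that covers both the $\lambda > 0$ and $\lambda = 0$ cases for $t_0 \geq 1$. I would set $\tau = \tau(t_0)$ and truncate $\tau_n = \tau \wedge n$, since bounded stopping times reduce all the intermediate manipulations to elementary finite-sum arithmetic without integrability concerns.

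The core step uses previsibility of the stopping region. Since $\{s \leq \tau_n\} \in \mathcal{F}_{s-1}$, we have $E[(X_s^0)^2 \bm{1}\{s \leq \tau_n\}] = E[v_s \bm{1}\{s \leq \tau_n\}]$, and Tonelli (the summands being non-negative) upgrades this to
\begin{align*}
E\left[\sum_{s=t_0+1}^{\tau_n} \frac{(X_s^0)^2}{s+\lambda}\right] = E\left[\sum_{s=t_0+1}^{\tau_n} \frac{v_s}{s+\lambda}\right].
\end{align*}
Writing $v_s = 1 + (v_s - 1)$ and subtracting the logarithm then produces the key identity
\begin{align*}
2 E\bigl[\Deltaqvar_{\lambda, \tau_n} - \Deltaqvar_{\lambda, t_0}\bigr] = E\left[\sum_{s=t_0+1}^{\tau_n} \frac{v_s - 1}{s+\lambda}\right] + E\left[\sum_{s=t_0+1}^{\tau_n} \frac{1}{s+\lambda} - \log\frac{\tau_n+\lambda}{t_0+\lambda}\right].
\end{align*}

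Each right-hand term is $O(1)$ uniformly in $n$. The first integrand is dominated pointwise by $\sum_{s=1}^\infty |v_s - 1|/s$ (using $s+\lambda \geq s$ for $\lambda \geq 0$), which has finite expectation by \cref{asm:Evsmin1_over_s_summable}. The second is controlled by the integral sandwich $\log\frac{\tau_n + \lambda + 1}{t_0+\lambda+1} \leq \sum_{s=t_0+1}^{\tau_n}\tfrac{1}{s+\lambda} \leq \log\frac{\tau_n+\lambda}{t_0+\lambda}$, which places its integrand almost surely in $[-\log(1 + 1/(t_0+\lambda)),\,0]$ and hence bounds it uniformly by $1/(t_0+\lambda) \leq 1$.

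To finish, I would pass to the limit $n \to \infty$: $\tau_n \nearrow \tau < \infty$ a.s., so both integrands converge pointwise, and since their dominators (an integrable envelope and a uniform a.s.\ bound) do not depend on $n$, dominated convergence transfers the $O(1)$ estimate to the limit and thereby to $E[\Deltaqvar_{\lambda, \tau(t_0)} - \Deltaqvar_{\lambda, t_0}]$. The hard part is precisely this final step: the raw increment $\Deltaqvar_{\lambda, \tau} - \Deltaqvar_{\lambda, t_0}$ is the difference of $\tfrac{1}{2}\sum_{s=t_0+1}^\tau (X_s^0)^2/(s+\lambda)$ and $\tfrac{1}{2}\log((\tau+\lambda)/(t_0+\lambda))$, each of whose expectations can blow up when $\tau$ is heavy-tailed, so one cannot separate expectations naively; the $(v_s-1)$-plus-Riemann-correction decomposition is exactly what creates pieces sharing a common integrable dominator, making DCT applicable and yielding $E[\Deltaqvar_{\lambda, \tau(t_0)} - \Deltaqvar_{\lambda, t_0}] = O(1)$.
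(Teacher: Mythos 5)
Your proposal is correct and follows essentially the same route as the paper's proof: replace $(X_s^0)^2$ by $v_s$ inside the stopped sum via previsibility of $\{s \leq \tau\}$, split $v_s = 1 + (v_s - 1)$, bound the Riemann-sum-minus-logarithm piece by an absolute constant, and control the remaining piece by $\sum_s E[|v_s-1|]/s < \infty$ from \cref{asm:Evsmin1_over_s_summable}. The truncation-plus-dominated-convergence scaffolding you add is extra bookkeeping the paper omits (it works directly with $\tau$, using the nonnegativity of the summands), but the substance of the argument is identical.
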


\subsection{Expected rejection time: main theorem}

We obtain upper bounds on the expected rejection times as a direct consequence of \cref{corollary:bound_stopping_time_remainders}, the bounds on the remainder terms, and the fact that under assumptions \ref{asm:fourth_moment}-\ref{asm:relative_bias_moments_summable}, the rejection time is almost surely finite (\cref{prop:as_finite_tau} in the appendix). We consider two asymptotic regimes under which these bounds hold. We define these asymptotic regimes below.

\begin{definition}[Asymptotic regimes]
We say that a sequence $((\psi_m, \allowbreak\alpha_m,\allowbreak t_{0,m},\allowbreak \lambda_m))_{m \geq 1}$ follows Asymptotic Regime 1 (AR(1)) (resp. Asymptotic Regime 2, AR(2)) if it satisfies the conditions of the first (resp. second) column of Table \ref{table:ARs}. For $i=1,2$, We write $\lim_{AR(i)} a(\psi, \alpha, t_0, \lambda) = b$ if, for any sequence $((\psi_m, \alpha_m, t_{0,m}, \lambda_m))_{m \geq 1}$ following AR(i), it holds that $\lim_{m \to \infty} a(\psi_m, \alpha_m, t_{0,m}, \lambda_m) = b$.
\end{definition}

\begin{table}
\centering
\begin{tabular}{c|c}
Asymptotic regime 1 & Asymptotic Regime 2 \\
\hline
\multicolumn{2}{c}{$t_{0,m} \to \infty$} \\
\hline
\multicolumn{2}{c}{$\lambda_m = o(\psi_m^{-2} \log \psi_m^{-1})$} \\
\hline
\multicolumn{2}{c}{$\lambda_m / t_{0,m} \to \infty$} \\
\hline
\multicolumn{2}{c}{$\psi_m \sqrt{t_{0,m}} \to \infty$}  \\
\hline
$\log \alpha_m^{-1} / \log \psi_m^{-1} \to \infty$ & $\log \alpha_m^{-1} / \log \psi_m^{-1} \to 0$ 
\end{tabular}
\caption{Asymptotic regimes definition}
\label{table:ARs}
\end{table}

\begin{theorem}\label{thm:rejection_time_upper_bounds}
Suppose that assumptions \ref{asm:cond_Lindeberg}-\ref{asm:Evsmin1_over_s_summable} hold. Then, for any $i=1,2$,
\begin{align}
\underset{AR(1)}{\mathrm{\lim sup}} \frac{E[\tau_i(t_0)]}{2 \psi^{-1} \log \psi^{-1}} \leq  1
\qquad \text{and} \qquad 
\underset{AR(2)}{\mathrm{\lim sup}} \frac{E[\tau_i(t_0)]}{2 \psi^{-1} \log \alpha^{-1}} \leq  1.
\end{align}
\end{theorem}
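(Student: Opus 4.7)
\textbf{Proof plan for \cref{thm:rejection_time_upper_bounds}.}

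The plan is to start from the bound in \cref{corollary:bound_stopping_time_remainders} and substitute, term by term, the asymptotic expressions provided by the lemmas of \cref{sec:bounding_ER_adapt}--\ref{sec:bounding_other_remainder_terms}, then solve the resulting implicit inequality for $E[\tau_i]$. Concretely, for $i=1$ the corollary gives
\begin{align*}
E[\tau_1] &\leq 2\psi^{-2}\Bigl(\widetilde h_0(-\log\widetilde\alpha_1(\alpha)) + E[R^\adpt_{0,\tau_1}-R^\adpt_{0,t_0}] + E[R^\bias_{0,\tau_1}-R^\bias_{0,t_0}] + E[\Deltaqvar_{0,\tau_1}-\Deltaqvar_{0,t_0}]\Bigr) + o(\psi^{-2}),
\end{align*}
and an analogous inequality for $\tau_2$ with the two additional shrinkage correction terms. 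Into this I would substitute: (i) \cref{lemma:random_thresh_lemma2}, which gives $\widetilde h_0(-\log\widetilde\alpha_1(\alpha))\sim-\log\alpha$ and similarly for the nmSPRT threshold, using the hypothesis $\lambda/t_0\to\infty$ from the table to ensure the $\eta\to\infty$ regime applies; (ii) \cref{lemma:adaptivity_term}, which contributes $\log\psi^{-1} + \psi^2(\log\psi^{-1})^{-1/2}E[\tau_i] + o(\log\psi^{-1})$; (iii) \cref{lemma:bias_term}, contributing $o\bigl(\psi^2 E[\tau_i] + \psi\sqrt{E[\tau_i]\log E[\tau_i]} + 1\bigr)$; and (iv) \cref{lemma:ERskg1}, \cref{lemma:ERskg2}, and \cref{lemma:Deltaqvar}, each of which is $o(\log\psi^{-1})$ or $O(1)$, where $\lambda = o(\psi^{-2}\log\psi^{-1})$ and $\lambda/t_0\to\infty$ ensure these asymptotic identifications hold simultaneously.

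After multiplying by $2\psi^{-2}$ and collecting, the adaptivity term produces a factor $2(\log\psi^{-1})^{-1/2}E[\tau_i]$ on the right-hand side, and the bias term produces $o(E[\tau_i]) + o(\psi^{-1}\sqrt{E[\tau_i]\log E[\tau_i]}) + o(\psi^{-2})$. I would then move all terms linear in $E[\tau_i]$ to the left, yielding
\begin{align*}
E[\tau_i]\bigl(1 - 2(\log\psi^{-1})^{-1/2} - o(1)\bigr) \leq 2\psi^{-2}\bigl(\log\alpha^{-1} + \log\psi^{-1}\bigr)(1+o(1)) + o(\psi^{-1}\sqrt{E[\tau_i]\log E[\tau_i]}) + o(\psi^{-2}).
\end{align*}
Since $\psi\to 0$ under both regimes (forced by $t_0\to\infty$, $\psi\sqrt{t_0}\to 0$, and the divergence of the log-ratios), the prefactor on the left tends to $1$. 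Using \cref{prop:as_finite_tau} to guarantee $E[\tau_i]<\infty$ permits the rearrangement.

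The main obstacle is the nonlinear $o(\psi^{-1}\sqrt{E[\tau_i]\log E[\tau_i]})$ term, which cannot be moved to the left by naive subtraction. I would handle it by a two-step bootstrapping argument: first absorb it using Young's inequality $\psi^{-1}\sqrt{T\log T}\leq \tfrac{\epsilon}{2}T + \tfrac{1}{2\epsilon}\psi^{-2}\log T$ with a fixed small $\epsilon>0$, giving a crude a priori bound $E[\tau_i] = O(\psi^{-2}\log(\psi^{-2}\log\max(\alpha^{-1},\psi^{-1})))$ after noting $\log T = O(\log\psi^{-1})$; then substitute this bound back to show the $\sqrt{\cdot}$ term is genuinely $o(\psi^{-2}(\log\alpha^{-1} + \log\psi^{-1}))$ because $\sqrt{\log\alpha^{-1}\log\psi^{-1}}$ is dominated by $\max(\log\alpha^{-1},\log\psi^{-1})$ via the AM-GM inequality.

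Finally, I would specialize to the two regimes. In the regime where $\log\alpha^{-1}/\log\psi^{-1}\to\infty$, the term $\log\psi^{-1}$ is absorbed into $o(\log\alpha^{-1})$, yielding $\limsup E[\tau_i]/(2\psi^{-2}\log\alpha^{-1})\leq 1$; in the opposite regime, $\log\alpha^{-1}$ is absorbed into $o(\log\psi^{-1})$, yielding $\limsup E[\tau_i]/(2\psi^{-2}\log\psi^{-1})\leq 1$. The same argument goes through for $\tau_2$ because the shrinkage corrections $R^{\skg,1}_{\lambda,\tau_2}$ and $R^{\skg,2}_{\lambda,\tau_2}$ are already $o(\log\psi^{-1})$ under the table's hypotheses on $\lambda$, so they merge into the existing lower-order terms without requiring any separate handling.
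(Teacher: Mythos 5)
Your proposal is correct and follows essentially the same route as the paper, which presents \cref{thm:rejection_time_upper_bounds} as a direct consequence of \cref{corollary:bound_stopping_time_remainders}, the remainder-term lemmas of \cref{sec:bounding_ER_adapt}--\ref{sec:bounding_other_remainder_terms}, \cref{lemma:random_thresh_lemma2}, and \cref{prop:as_finite_tau}, with no further written argument. Your explicit handling of the implicit $E[\tau_i]$ terms on the right-hand side (absorbing the $2(\log\psi^{-1})^{-1/2}E[\tau_i]$ contribution and bootstrapping the $o(\psi\sqrt{E[\tau_i]\log E[\tau_i]})$ term via Young's inequality) supplies detail the paper leaves implicit, and your pairing of each dominant logarithm with the correct regime is the mathematically right one.
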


\begin{remark}
The two asymptotic regimes we consider in the above theorem are defined in particular by the condition that $\psi \sqrt{t_0} \to 0$. This implies that rejections happen much later than the end of the burn-in period. This is usually a realistic condition as, in many practical situations, near-normality is achieved at sample sizes of a few hundred observations, while detection of the effect size requires much larger sample sizes.
\end{remark}

\begin{remark} The upper bound in the second asymptotic regime above (when $\log \psi^{-1} = o(\log \alpha^{-1})$) is asymptotically equivalent to the expected rejection time $2 \psi^{-2} \log \alpha^{-1}$ of the simple-vs-simple SPRT under i.i.d. data, which is known to be optimal among all sequential tests with type-I error at most $\alpha$ \citep{waldwolfowitz}.
\end{remark}

\begin{remark}
\cite{robbins_siegmund1974} show that the expected sample size (of a slightly different version) of the rmlSPRT under i.i.d. observations from a distribution lying in an exponential family parametric model is asymptotically equivalent to $(1-\alpha) \psi^{-2} \log \psi^{-1}$ as $\psi \to 0$ and $\log \alpha^{-1} = o(\log \psi^{-1})$, which is approximately twice smaller (as $1-\alpha \approx 1$ for small $\alpha$) than the upper bound we provide here for our general nonparametric setting.
\end{remark}

\section{Application to Stabilized Estimating Equations with Sequentially Estimated Nuisances}\label{sec:application}

We next consider one simple setting where we can apply delayed-start rmlSPRTs and nmSPRTs, and consider interpretable sufficient conditions for our results to hold.

\subsection{Robust estimating equations.} Suppose we observe an i.i.d. sequence $O_1,O_2,\dots$ and we wish to test whether a parameter $\theta$ of the common distribution of the observations is equal to a certain value $\theta_0$. Suppose that $\eta$ is a (potentially infinite dimensional) nuisance parameter lying in a set $\mathcal{T}$ and that we have an estimating function $D(\cdot ; \eta', \theta')$ for $\theta$ defined over the observation space that satisfies the following robustness assumption.
\begin{assumption}[Identification and robustness]\label{asm:robustness}
The estimating function $D$ is such that, 
\begin{align}
\forall \theta' \in \Theta, \eta' \in \mathcal{T},\ E[D(O_1 ; \eta', \theta')] = \mu(\theta')
\end{align}
for some mapping $\mu : \Theta \to \mathbb{R}$ that is continuous at $\theta$ and is such that $\mu(\theta') = 0$ if, and only if, $\theta' = \theta$.
\end{assumption}

\subsection{Application to our examples.}
\Cref{ex:mean} fits into this framework with $D(z,\theta_0) = z  - \theta_0$ to test $\theta = E[Z_1] = \theta_0$, and \cref{ex:abtest} with 
\begin{align}
D((l,a,u); \eta', \theta_0) = \frac{a-p}{p(1-p)} ( u - \eta'(a,l)) + \eta'(1,l) - \eta'(0,l)  - \theta_0
\end{align}
to test $\theta = E[U_1\mid A_1=1]-E[U_1\mid A_1=0] = \theta_0$.

\subsection{Construction of stabilized estimating equations via sequential sample splitting.}

We now apply our method to this setting. We use a combination of sequential estimation and sample splitting in order to avoid any metric entropy assumptions. Specifically, we will estimate nuisances on half of all the past data and estimate the variance on the other half. Let $\mathcal{I}_{0,t} = \{ t, t-2, t-4,\ldots \}$ and $\mathcal{I}_{1,t} = \{t-1, t-3, t-5, \ldots	\}$ index two complementary sample splits.
Fix some $\widehat\eta_t$ sequence adapted to $\mathfrak{F}$ such that $\widehat\eta_t$ is independent of $\mathcal{D}_{0,t} = \{O_s : s \in \mathcal{I}_{0,t}\}$ given $\mathcal{D}_{1,t} = \{ O_s : s \in \mathcal{I}_{1,t} \}$, that is, an ``estimate" of $\eta$ based only on the ${\lceil t/2\rceil}$ data points from times $\mathcal{I}_{1,t}$.
Set 
\begin{align}
\widehat \sigma^2_t(\theta')=&\frac1{|\mathcal{I}_{0,t}|-1}\sum_{s\in\mathcal{I}_{0,t}}  \left( D(O_s; \widehat{\eta}_t, \theta') - \bar{D}_t(\theta') \right)^2\\
\qquad \text{with} \qquad \bar{D}_t(\theta') =& \frac{1}{|\mathcal{I}_{0,t}|}\sum_{s \in \mathcal{I}_{0,t}} D(O_s; \widehat{\eta}_t, \theta').
\end{align}
Then, to test $\theta = \theta_0$, set $X_t=\omega_t(\theta_0) D(O_t ; \widehat{\eta}_{t-1}, \theta_0)$, where 
$\omega_t(\theta')=(\widehat \sigma_{t-1} (\theta') \vee (\chi^{-1} t^{-\iota}))^{-1}$ with some $\chi>0,\iota\in(0,1)$.
Then, under \cref{asm:robustness}, the hypothesis that $X_1,X_2,\dots$ is an MDS holds if and only if $\theta = \theta_0$ holds.

Note that if we have no nuisances, as in \cref{ex:mean}, then we can just use \emph{all} data up to $t-1$ to compute $\hat\sigma^2_t$, not just the past data having the same parity as $t-1$. We can actually do this as long as $\mathcal T$ is sufficiently simple (\eg, a subset of $\mathbb{R}^d$ for some fixed $d$, rather than, say, a space of nonparametric functions). However, to avoid any such assumptions altogether, we focus here on the case where we split the data by parity. Similarly, we here analyze the case where we clip $\omega_t$ by $O(t^\iota)$ to control for the risk of outlying variance estimates, but this is mostly done to make analysis simple. In practice, we do not recommend this, and in our experiments in \cref{sec:numerics}, we simply recommend using $\omega_t=1$ whenever the variance estimate is zero and not clipping at any other value. This reduces the need to specify hyperparameters.

\subsection{Confidence sequences.}
As mentioned earlier, a confidence sequence $(\mathcal{C}_{\alpha, t_0}(t))_{t \geq 1}$ for $\theta$ may be obtained by setting $\mathcal{C}_{\alpha, t_0}(t)$ to the set of values $\theta_0$ that  the test of $\theta = \theta_0$ doesn't reject at $t$. That is, letting $c_{\alpha, t_0}(t)$ to be either $c_{\alpha, t_0}^{\rmlSPRT}(t)$ or $c_{\alpha, \lambda, t_0}^{\nmSPRT}(t)$, $\mathcal{C}_{\alpha, t_0}(t) = \left\lbrace \theta_0 \in \Theta : |S_t| \leq c_{\alpha, t_0}(t) \right\rbrace$. In the case where $D(o; \eta', \theta')=D_1(o; \eta') + D_2(o; \eta')  \theta'$ is linear in a scalar parameter $\theta'$, with $\theta =-E[D_2(O; \eta')]^{-1}  E[D_1(O; \eta')]$ being the parameter of interest, the resulting confidence sequence simplifies considerably: let 
\begin{align}
\Gamma_t=\sum_{s=1}^t\omega_s D_2(O_s ; \widehat{\eta}_{s-1}), \qquad \text{and} \qquad \widehat \theta_t=\Gamma_t^{-1}\sum_{s=1}^t\omega_s D_1(O_s;\widehat{\eta}_{s-1}),
\end{align}
then our confidence sequence is given by $\mathcal{C}_{\alpha, t_0}(t) = [\widehat{\theta}_t \pm \Gamma_t^{-1} c_{\alpha, t_0}(t)]$.
Note we do not actually need to require that $\widehat\eta_t\to\eta$, but we may wish this to be the case so as to obtain a smaller confidence sequence. In particular, in the case that $D_2(o; \eta') = d_2$ for a constant $d_2$, the width of the interval may be decomposed as the product of a factor that doesn't depend on the nuisance, and of $(\Gamma_t / t)^{-1} \to d_2^{-1} \sqrt{\mathrm{Var}(D_1(O_1; \eta_1))}$, with $\eta_1$ the limit of $\widehat{\eta}_t$ in an appropriate sense. In various situations, the latter quantity is minimized at $\eta_1 = \eta_0$. For example, it is the case in \cref{ex:abtest} that the width of the confidence intervals is minimized at the true regression function $\eta: (a,l) \mapsto E[U_1|A_1 = a, L_1 = l_1]$.

\subsection{Guarantees.}
We now verify our assumptions for this simple setting based on simple sufficient conditions. In the following, let $\sigma^2(\eta', \theta')=E[D(O_1;\eta', \theta')^2]-E[D(O_1;\eta', \theta')]^2$. As we make explicit next, it only takes two relatively mild assumptions in addition to \cref{asm:robustness} for Assumptions \ref{asm:sip}-\ref{asm:Evsmin1_over_s_summable} to hold in our estimating equations setting. In fact, for \cref{asm:sip}, and therefore the type-I error guarantee to hold, it only takes the following moment condition and variance lower bound condition on the estimating function.
\begin{assumption}[Estimating function moment condition]\label{asm:moment_estimating_function}
It holds that $$\sup_{\theta' \in \Theta, \eta' \in \mathcal{T}} E[D(O_1; \theta', \eta')^6] < \infty.$$
\end{assumption}
In what follows, denote $\sigma^2(\eta',\theta') = E[D(O_1;\eta',\theta')^2] - E[D(O_1;\eta',\theta')]^2$. 
\begin{assumption}\label{asm:variance_lower_bound}
It holds that $\inf_{\theta' \in \Theta, \eta' \in \mathcal{T}} \sigma^2(\eta',\theta') > 0$.
\end{assumption}
We can now state the type-I error result.
\begin{proposition}[Type-I error estimating equations]\label{prop:typeI_err_esteq}
Let $\alpha \in (0,1)$. Suppose that \cref{asm:moment_estimating_function} and \cref{asm:variance_lower_bound} hold. Let $\delta, \iota, \kappa \in (0,1)$ and $\nu \in (0,4]$ be such that (i) $(1-\delta)/(3-\delta) - 2 \iota > \kappa/ 2$ and (ii) $(1+\nu/2)(1-\kappa) - (2 +\nu) \iota > 1$. Then
\cref{asm:sip} holds, and therefore, if $\theta = \theta_0$, it holds, for $i=1,2,$ that 
$\lim_{t_0 \to \infty} P \left[ \tau_i < \infty \right] = \alpha$,
which, in terms of confidence sequences, is equivalent to 
$\lim_{t_0 \to \infty} \allowbreak P \left[ \forall t \geq t_0 : S_t \in \mathcal{C}_{\alpha, t_0}(t) \right] = 1-\alpha.$
\end{proposition}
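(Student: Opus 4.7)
The plan is to verify \cref{asm:sip} under $H_0$ and then invoke \cref{thm:typeI_error} with the sequence $(t_{0,m},\alpha,\lambda)_{m\ge1}$, $t_{0,m}\to\infty$. Under $H_0$, the parity sample splitting makes $O_t$ independent of $\mathcal F_{t-1}$ conditional on $\widehat\eta_{t-1}$, and $\omega_t(\theta_0)$ is $\mathcal F_{t-1}$-measurable, so \cref{asm:robustness} gives $\psi_t=E[X_t\mid\mathcal F_{t-1}]=\omega_t(\theta_0)\mu(\theta_0)=0$. Throughout I take $f(t)=t^{1-\kappa}$, which trivially satisfies the growth condition $f(t)=O(t(\log t)^{-4}(\log\log t)^{-2})$. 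The remainder of the verification splits into \eqref{eq:sip_tails} and \eqref{eq:sip_var}, both driven by the same two inputs: an almost-sure rate for the sample-variance estimator and a uniform conditional moment bound on $|X_t|^{2+\nu}$.

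For the variance estimator, conditional on $\widehat\eta_{t-1}$ (which is measurable w.r.t.\ $\mathcal D_{1,t-1}$), $\widehat\sigma_{t-1}^2(\theta_0)$ is a classical sample variance of $|\mathcal I_{0,t-1}|=\Theta(t)$ i.i.d.\ copies of $D(O;\widehat\eta_{t-1},\theta_0)$, independent of $\widehat\eta_{t-1}$. \cref{asm:moment_estimating_function} furnishes uniformly bounded moments of $D$ up to order six, so Marcinkiewicz--Zygmund/Rosenthal gives
\begin{align}
E\bigl[\bigl|\widehat\sigma_{t-1}^2-\sigma^2(\widehat\eta_{t-1},\theta_0)\bigr|^{3-\delta}\bigr]=O\bigl(t^{-(3-\delta)/2}\bigr).
\end{align}
Markov and Borel--Cantelli then upgrade this to an almost-sure rate $|\widehat\sigma_{t-1}^2-\sigma^2(\widehat\eta_{t-1},\theta_0)|=o(t^{-\beta})$ for any $\beta$ strictly below the moment-induced threshold. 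Separately, \cref{asm:variance_lower_bound} bounds $\sigma^2(\widehat\eta_{t-1},\theta_0)$ away from zero while $\chi^{-1}t^{-\iota}\to0$, so the clipping event $\{\widehat\sigma_{t-1}<\chi^{-1}t^{-\iota}\}$ has summable probability in $t$ (since $(3-\delta)/2>1$), hence occurs only finitely often almost surely; in particular $v_t\to1$ and $V_t\ge ct$ eventually a.s.

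To verify \eqref{eq:sip_var}, write $v_t-1=(\sigma^2(\widehat\eta_{t-1},\theta_0)-\widehat\sigma_{t-1}^2)/\widehat\sigma_{t-1}^2$ off clipping events and use the crude uniform bound $\omega_s^2\le\chi^2s^{2\iota}$ to absorb clipping contributions; summing the resulting per-step bound yields $|V_t-t|=O(t^{1+2\iota-\beta})$ almost surely. Condition (i), $(1-\delta)/(3-\delta)-2\iota>\kappa/2$, is exactly what permits a choice of admissible $\beta$ satisfying $1+2\iota-\beta<1-\kappa/2$, i.e.\ $V_t-t=o(\sqrt{tf(t)}\log t)$. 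For \eqref{eq:sip_tails}, Markov at order $2+\nu$ gives
\begin{align}
\frac{1}{f(V_t)}E\!\left[X_t^2\mathbf{1}\{X_t^2>f(V_t)\}\mid\mathcal F_{t-1}\right]\le\frac{E[|X_t|^{2+\nu}\mid\mathcal F_{t-1}]}{f(V_t)^{1+\nu/2}};
\end{align}
combining $|\omega_t|\le\chi t^\iota$ with \cref{asm:moment_estimating_function} (valid for $\nu\le4$) yields $E[|X_t|^{2+\nu}\mid\mathcal F_{t-1}]\le Ct^{(2+\nu)\iota}$, while $f(V_t)\ge ct^{1-\kappa}$ eventually a.s.\ gives a summand of order $t^{(2+\nu)\iota-(1+\nu/2)(1-\kappa)}$, summable by condition (ii). \cref{asm:sip} is therefore in force, and \cref{thm:typeI_error} delivers the stated limit.

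The main obstacle is the second paragraph: extracting a sharp almost-sure rate for the sample-variance error while propagating moment control through the data-dependent nuisance $\widehat\eta_{t-1}$, and then carrying this rate through a sum that also absorbs the worst-case $t^{2\iota}$ inflation allowed by the clipping device. The two arithmetic conditions on $(\delta,\iota,\kappa,\nu)$ in the statement are precisely the exponent budgets that make the two Borel--Cantelli/summability computations close, and they are the only places where the specific clipping exponent $\iota$, the chosen $f$, and the available moment $3-\delta$ interact.
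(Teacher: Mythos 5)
Your overall strategy matches the paper's: take $f(t)=t^{1-\kappa}$, verify the two parts of \cref{asm:sip}, and invoke \cref{thm:typeI_error}; your treatment of the tail condition \eqref{eq:sip_tails} via Markov at order $2+\nu$ and the clipping bound $|\omega_t|\le\chi t^{\iota}$ is essentially identical to the paper's and closes correctly under condition (ii). The gap is in the variance part. You derive $E\bigl[|\widehat\sigma_{t-1}^2-\sigma^2(\widehat\eta_{t-1},\theta_0)|^{3-\delta}\bigr]=O(t^{-(3-\delta)/2})$ and then pass to an almost-sure rate by Markov plus Borel--Cantelli. That route only yields $|\widehat\sigma_{t-1}^2-\sigma^2(\widehat\eta_{t-1},\theta_0)|=o(t^{-\beta})$ for
\begin{align}
\beta<\frac{1}{2}-\frac{1}{3-\delta}=\frac{1-\delta}{2(3-\delta)},
\end{align}
since summability of $P[|M_t|>\epsilon t^{-\beta}]\le \epsilon^{-(3-\delta)}t^{(3-\delta)\beta-(3-\delta)/2}$ requires $(3-\delta)(\beta-1/2)<-1$. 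But to get $V_t-t=O(t^{1+2\iota-\beta})=o(t^{1-\kappa/2}\log t)$ you need $\beta>2\iota+\kappa/2$, and condition (i) only guarantees that such a $\beta$ exists below the threshold $(1-\delta)/(3-\delta)$ --- twice your threshold. So your claim that condition (i) is ``exactly what permits a choice of admissible $\beta$'' is false for your argument; as written it proves the proposition only under the strictly stronger hypothesis $(1-\delta)/(2(3-\delta))-2\iota>\kappa/2$.

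The paper closes this by proving \cref{lemma:vs_almost_sure_conv} with a Marcinkiewicz--Zygmund-type \emph{complete convergence} theorem for rowwise independent arrays (Corollary 1 of Hu, M\'oricz and Taylor, 1989), applied conditionally on $\mathcal{D}_{1,t}$: uniformly bounded conditional moments of order $2p+\delta=3$ with $p=(3-\delta)/2$ give $n(t)^{-1}\sum_{s\in\mathcal{I}_{0,t}}Z^{(2)}_{t,s}=o(t^{-(1-\delta)/(3-\delta)})$ almost surely, i.e.\ the full exponent $1-1/p$ rather than the $1/2-1/q$ that plain Markov delivers. (A Fuk--Nagaev-type tail bound exploiting independence within each row would also work; plain Markov on the $q$-th moment of the average does not.) With that sharper rate, $|v_{t+1}-1|\le\chi^{2}t^{2\iota}|\widehat\sigma_t^2-\sigma^2(\widehat\eta_t,\theta_0)|=o(t^{2\iota-(1-\delta)/(3-\delta)})$ and condition (i) is precisely the exponent budget for \eqref{eq:sip_var}. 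You should replace your Markov--Borel--Cantelli step with such an array strong law (or an equivalent exponential/polynomial tail inequality for the sample variance); the rest of your argument then goes through as the paper's does.
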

The expected rejection time results take one more assumption, which we now state.
\begin{assumption}[Nuisance estimate convergence]\label{asm:nuisance_convergence}
There exists $\delta, \nu > 0$, and $\eta_1 \in \mathcal{T}$ such that $$\left(t^\nu E\left[ (\sigma(\widehat{\eta}_t, \theta_0) - \sigma(\eta_1, \theta_0))^{2 + \delta} \right] \right)_{t \geq 1}$$ is summable.
\end{assumption}

The condition in \cref{asm:nuisance_convergence} formalizes that $\widehat\eta_t$ admits a limit and characterizes the rate of convergence. Usually, this condition would be obtained from a similar bound on $\|\hat\eta_t-\eta_1\|_{\mathcal T}$ in some norm (\eg, Euclidean norm for a vector of nuisances or $L_p$ for nuisance functions) and establishing (or, assuming) that $\sigma(\cdot,\theta_0)$ is Lipschitz (or, H\"older) continuous in this norm. Such guarantees on $\widehat\eta_t$ can be obtained when it is estimated by maximum likelihood or more generally empirical risk minimization \citep{van2000empirical}. Generally, if (i) $\eta$ is a vector of pathwise differentiable parameters of the distribution of $O_1$, and (ii) the efficient influence function of $\eta$ admits a moment of order $(2+\delta')$, $\delta' > \delta$, then \cref{asm:nuisance_convergence} will hold. If $\eta$ is the best predictor of some $g_1(O_1)$ as a function of $g_2(O_1)$ in $\mathcal T$, then we can generally obtain guarantees in terms of the rate of the critical radii of $\mathcal T$ \citep{wainwright2019high}. 

We can now state the expected rejection time result.

\begin{proposition}\label{prop:expected_rejection_time}
Suppose that Assumptions \ref{asm:robustness}, \ref{asm:moment_estimating_function}, and \ref{asm:nuisance_convergence} hold. Then Assumptions \ref{asm:L1_conv_psi}-\ref{asm:Evsmin1_over_s_summable} hold and therefore, denoting $\psi = \sigma^{-1}(\eta_1, \theta_0) \mu(\theta_0)$, we have, for $i=1,2$, that
\begin{align}
E[\tau_i] \leq  2 \psi^{-2} \log \psi^{-1} + o\left(\psi^{-2} \log \psi^{-1} \right)
\end{align}
as $\theta_0 \to 0$, $t_0 \to \infty$, $\psi \sqrt{t_0} \to 0$, $\log \alpha^{-1} / \log \psi^{-1} \to 0$, and
\begin{align}
E[\tau_i] \leq  2 \psi^{-2} \log \psi^{-1} + o\left(\psi^{-2} \log \psi^{-1} \right)
\end{align}
as $\theta_0 \to 0$, $t_0 \to \infty$, $\psi \sqrt{t_0} \to 0$, $\log \alpha^{-1} / \log \psi^{-1} \to \infty$.
\end{proposition}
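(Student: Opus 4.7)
The plan is to deduce \cref{prop:expected_rejection_time} from \cref{thm:rejection_time_upper_bounds} by verifying Assumptions \ref{asm:L1_conv_psi}--\ref{asm:Evsmin1_over_s_summable} for the MDS $X_t = \omega_t(\theta_0) D(O_t ; \widehat{\eta}_{t-1}, \theta_0)$. Since $O_t$ is independent of $\mathcal{F}_{t-1}$ while $\omega_t$ and $\widehat{\eta}_{t-1}$ are $\mathcal{F}_{t-1}$-measurable, writing $\sigma_\star := \sigma(\eta_1, \theta_0)$ gives the explicit identities $\psi_t = \omega_t \mu(\theta_0)$, $v_t = \omega_t^2 \sigma^2(\widehat{\eta}_{t-1}, \theta_0)$, and $\psi = \sigma_\star^{-1}\mu(\theta_0)$. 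Every required condition on $\psi_t - \psi$ and $v_t - 1$ therefore reduces to controlling how fast $\omega_t \to \sigma_\star^{-1}$ and $\sigma(\widehat{\eta}_{t-1}, \theta_0) \to \sigma_\star$ in appropriate $L^p$ norms.

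The core step is an $L^{2+\delta}$ bound on $\omega_t - \sigma_\star^{-1}$, obtained by decomposing $\omega_t^{-1} - \sigma_\star$ into three pieces: (i) the sample-split variance error $\widehat{\sigma}_{t-1}(\theta_0) - \sigma(\widehat{\eta}_{t-1}, \theta_0)$, which conditionally on $\widehat{\eta}_{t-1}$ is a sample standard deviation over the $\lceil t/2 \rceil$ i.i.d.\ observations in $\mathcal{D}_{0,t-1}$, controlled in $L^{2+\delta}$ at rate $t^{-1/2}$ via Marcinkiewicz--Zygmund and the sixth-moment bound \cref{asm:moment_estimating_function}; (ii) the nuisance error $\sigma(\widehat{\eta}_{t-1}, \theta_0) - \sigma_\star$, controlled summably in $L^{2+\delta}$ by \cref{asm:nuisance_convergence}; and (iii) the clipping remainder, whose contribution is polynomially small by Markov on (i)--(ii) combined with $\sigma_\star > 0$. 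The reciprocal map $\omega_t^{-1} \mapsto \omega_t$ is tamed by the a priori bound $\omega_t \leq \chi t^{\iota}$. With $\delta, \iota, \kappa, \nu$ tuned as in \cref{prop:typeI_err_esteq}, one then obtains $E[|\omega_t - \sigma_\star^{-1}|^{2+\delta}] = O(t^{-(1+\gamma)})$ for some $\gamma > 0$. Multiplying by the bounded factor $(\mu(\theta_0)/\psi)^{2+\delta} = \sigma_\star^{2+\delta}$ yields \cref{asm:relative_bias_moments_summable}, hence a fortiori \cref{asm:L1_conv_psi}. The same rate, transferred through $v_t - 1 = (\omega_t\sigma(\widehat{\eta}_{t-1}, \theta_0))^2 - 1$, gives \cref{asm:Evsmin1_over_s_summable} and \cref{asm:L1_conv_vs}.

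For the remaining three assumptions, \cref{asm:fourth_moment} follows from $E[(X_t^0)^4] \lesssim \omega_t^4 E[D^4] = O(t^{4\iota})$ with $4\iota < 1$ (a consequence of the first inequality in \cref{prop:typeI_err_esteq}), yielding $t^{-2}\sum_{s \leq t} E[(X_s^0)^4] = O(t^{4\iota - 1})$, which is uniformly bounded; \cref{asm:sup_L2_norm_vs_min1} follows by splitting on the event $\{\widehat{\sigma}_{t-1}(\theta_0) \geq \sigma_\star/2\}$ (where $\omega_t \leq 2/\sigma_\star$, hence $v_t$ is uniformly bounded) and using the polynomial decay of the complement's probability to absorb the $t^{4\iota}$ worst-case growth from clipping; \cref{asm:cond_Lindeberg} follows from the resulting uniform $L^{2+\delta'}$ bound via a standard Markov-type tail estimate. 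The principal obstacle is reconciling the parameter constraints of \cref{prop:typeI_err_esteq} with the summability exponent of \cref{asm:nuisance_convergence} so as to obtain strictly summable $L^{2+\delta}$ rates on $\omega_t - \sigma_\star^{-1}$ despite clipping, which is precisely where $\omega_t$ is most sensitive to noise and the reciprocal map amplifies variability. Once Assumptions \ref{asm:L1_conv_psi}--\ref{asm:Evsmin1_over_s_summable} are in place, \cref{thm:rejection_time_upper_bounds} applied in AR(1) and AR(2) delivers the two claimed bounds on $E[\tau_i]$ (the second display in \cref{prop:expected_rejection_time} being read as $2\psi^{-2}\log\alpha^{-1} + o(\psi^{-2}\log\alpha^{-1})$ in AR(2), corresponding to the regime $\log\alpha^{-1}/\log\psi^{-1} \to \infty$).
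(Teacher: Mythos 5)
Your proposal is correct and follows essentially the same route as the paper: identify $\psi_t=\omega_t\mu(\theta_0)$ and $v_t=\omega_t^2\sigma^2(\widehat\eta_{t-1},\theta_0)$, verify Assumptions \ref{asm:L1_conv_psi}--\ref{asm:Evsmin1_over_s_summable} by controlling $\widehat\sigma_{t}-\sigma(\widehat\eta_{t},\theta_0)$ and $\sigma(\widehat\eta_{t},\theta_0)-\sigma(\eta_1,\theta_0)$ in $L^{2+\delta}$ against the $t^{\iota}$ clipping factor, and then invoke \cref{thm:rejection_time_upper_bounds} in the two asymptotic regimes. The only differences are cosmetic — you use Marcinkiewicz--Zygmund for the sample-variance moments where the paper cites Hoeffding's U-statistic bounds (\cref{lemma:moments_sample_variance}), and a good-event split for \cref{asm:sup_L2_norm_vs_min1} where the paper bounds $\omega_t$ directly by $\chi t^{\iota}$ — and your reading of the second display as $2\psi^{-2}\log\alpha^{-1}+o(\cdot)$ matches the intended AR(2) conclusion.
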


\subsection{Extension to other settings.} Here we considered just a simple setting with i.i.d. data and a nuisance-invariant estimating equation in order to show how one would verify our assumptions. Our results do also apply to more intricate settings, but further analysis would be needed to verify the assumptions using simple conditions. One example of a possible extension to the simple setting herein where our results still apply is where the estimating equation $D(\cdot ; \eta', \theta')$ is not completely invariant to $\eta'$, but instead we only have Neyman orthogonality \citep{chernozukhov2018double} in that $\lim_{\epsilon\to0} \epsilon^{-1} E[D(O_1;\eta+\epsilon(\eta'-\eta), \theta')]=0$ for $\eta'\in\mathcal T$. This is, for example, relevant to sequentially observing data from an observational study, where we do not know the propensity score. For example, we may have a sequential trial involving only the intervention of interest (\eg, experimental drug or surgery) and we have an offline pool of controls to compare to, where we assume selection into our trial at random given observed covariates (observed in the trial and in the pool of controls). Another extension may be to parameters that are path differentiable (that is, an influence function for them exists), but may not necessarily be defined in terms of an estimating equation. Yet another example of a possible extension to the simple setting herein where our results still apply is where the data is not i.i.d. but coming instead from an adaptive experiment such as a contextual bandit. In this case, many of our assumptions could be verified in a very similar manner to how theorems 1 and 2 of \cite{bibaut2021post} are proven, and guarantees for $\widehat\eta_t$ in the form of \cref{asm:nuisance_convergence} can be obtained from \cite{bibaut2021risk}. These would all simply be applications of our theory.

\section{Tuning $\lambda$ in burn-in nmSPRT}\label{sec:tuning_lambda}

As the reader might have noticed, we have so far left out the question of tuning $\lambda$ in the delayed-start nmSPRT. We address this question in the present section. To the best of our knowledge, this question hasn't been treated rigorously even in the case of the standard (that is, non-delayed-start) normal-mixture SPRT under normal observations. Our approach in the current section is to first, in the case of normal observations and without burn-in period, aim to uncover experimentally an identity for the optimal value of $\lambda$ as a function of the effect size $\psi$ and the nominal significance level $\alpha$ (\cref{sec:empirical_lambda_calibration}), and then to heuristically justify it mathematically (\cref{sec:heuristic_mathematical_justification_lambda_opt}). We then propose a heuristic design for a delayed-start sequential test that auto-tunes $\lambda$ as observations are collected.

\subsection{Calibrating $\lambda$ empirically}\label{sec:empirical_lambda_calibration}

For the purpose of experimentally calibrating $\lambda$ as a function of $\alpha$ and the effect size, we consider a Brownian observation sequence with drift $\mu$, that is we consider $S_t = \psi t + W(t)$, $\psi \neq 0$, with $W$ a standard Wiener process. We consider various values of $\psi$ and $\alpha$ and scan through values of $\lambda$. We plot the median, first, and third quartile of the rejection time of the burn-in nmSPRT on the left plot of \cref{fig:empirical_lambda_calibration}, and we find and represent the optimal value in the $\lambda$ grid for each couple $(\psi, \alpha)$. We work without a burn-in period, that is we set $t_0 = 1$. For the sake of efficiency comparison, also represent the rejection time of the rmlSPRT (which doesn't depend on $\lambda)$ and the oracle simple-vs-simple SPRT where the numerator corresponds to the (a priori unknown to the analyst) true value of $\psi$.

\begin{figure}
    \centering
    \subfloat[Median, first and third quartiles of burn-in nmSPRT rejection time evaluated at a grid of $(\alpha, \psi, \lambda)$ values. We run 20000 trajectories of $(S_t)_{t \in \{t_1,\ldots,t_{\max}\}}$ for each triple $(\alpha, \psi, \lambda)$. Row labels on the right represent $\alpha$ values, column labels at the top represent $\mu$ values. The vertical black line represents the optimal value of $\lambda$ for each couple $(\alpha, \psi)$]{{\includegraphics[width=6.5cm]{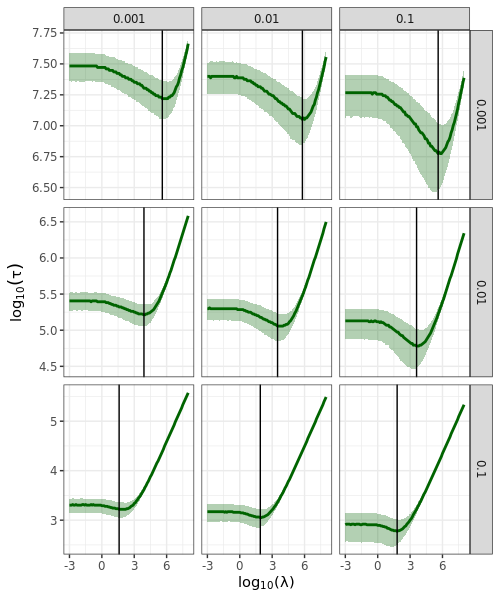} }}
    \qquad 
    \subfloat[Optimal $\lambda$ for each $(\alpha, \psi)$. We obtain each point by simulating 20000 trajectories of $W_t$ for $t \in \{\left\lfloor 10^i \right\rfloor : i \in \{0, 1/1000 \cdot 10^{12}, 2 /1000  \cdot 10^{12},\ldots, 10^{12}\}\}$ and performing a grid search over values of $\lambda$ in $\{10^{-3}, 10^{-3 + 0.1}, 10^{-3 + 0.2}, \ldots,  10^8 \}.$ The dashed line corresponds to $\lambda = \psi^{-2}$. The solid lines represent the fit of a local polynomial regression of $\log_{10}(\lambda_{opt})$ onto $\log_{10} \psi^{-2}$ obtained using the \texttt{loess} function of the \texttt{stats} R package.]{{\includegraphics[width=6.5cm]{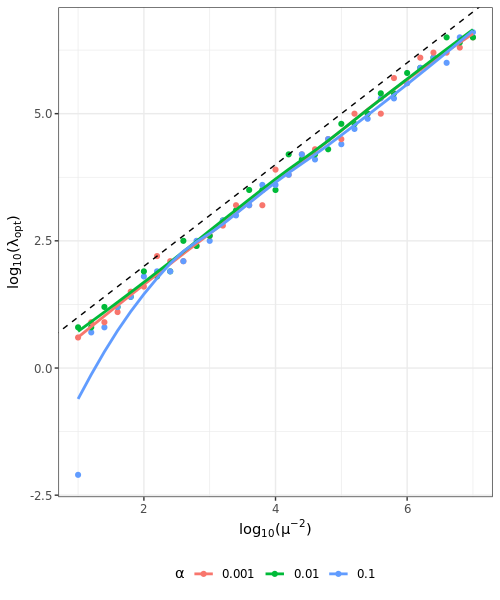} }} \label{fig:tau_function_of_lambda} \label{fig:lambda_opt_plot}
    \caption{Empirical calibration on Brownian data of $\lambda$ for the normal mixture SPRT without burn-in period.}\label{fig:empirical_lambda_calibration}
\end{figure}

We plot the results of the grid search for the optimal $\lambda$ for each $(\alpha, \mu)$ on the right plot of \cref{fig:empirical_lambda_calibration}. The plot seems to indicate, as might be expected from the prior-on-effect-size intuition, that the choice $\lambda = \mu^{-2}$ is close to optimal, at least for small values of $\alpha$. 

\subsection{A heuristic justification of the optimal choice of $\lambda$}\label{sec:heuristic_mathematical_justification_lambda_opt}

Under a Brownian observation sequence $\widetilde{S}(t) = \psi t + W(t)$, the rejection time $\tau_2$ of the nmSPRT (without burn-in, which shouldn't matter as the discussion would be the essentially the same under $t_0$ such as $\psi \sqrt{t_0} \to 0$, as per our asymptotic regimes) of level $\alpha$ and parameter $\lambda$ satisfies 
\begin{align}
\frac{\left(\psi \tau_2 + W(\tau_2)\right)^2}{\tau_2 + \lambda} - \log \frac{\tau_2 + \lambda}{\lambda} = - 2 \log \alpha.
\end{align}
As $\alpha \to 0$, $\tau_2 \to \infty$, and therefore, $\psi \tau_2  + W(\tau_2) \sim \mu \tau_2$. Therefore, for small $\alpha$, $\tau_2$ satisfies approximately 
\begin{align}
\frac{\left(\psi \tau_2 \right)^2}{\tau_2 + \lambda} - \log \frac{\tau_2 + \lambda}{\lambda} = - 2 \log \alpha. \label{eq:heuristic_lambda_eq1}
\end{align}
Differentiating the above equation with respect to $\lambda$ and using that, at the optimum $\lambda^*$ we have that $(\partial \tau_2 / \partial \lambda)|_{\lambda = \lambda^*} = 0$, yields that
\begin{align}
- \frac{\psi^2 \tau_2^2}{(\tau_2 + \lambda^*)^2} - \frac{1}{\tau_2 + \lambda^*} + \frac{1}{\lambda^*} = 0, \qquad \text{that is} \qquad \frac{\tau_2+\lambda^*}{\tau_2} = \lambda^* \psi^2. \label{eq:heuristic_lambda_eq2}
\end{align}
Injecting the last equality in \eqref{eq:heuristic_lambda_eq1} yields that 
\begin{align}
\tau_2 \lambda^{-1} - \log (\tau_2 \lambda^{-1} + 1) = - 2 \log \alpha
\end{align}
which implies that $\tau_2 / \lambda$ must diverge to infinity as $\alpha \to 0$. Therefore, as $\alpha \to 0$, $(\tau_2 + \lambda) / \tau_2 \to 1$, and thus from \eqref{eq:heuristic_lambda_eq2}, we must have that that $\lambda^* / \psi^{-2} \to 1$. 

\subsection{A sequential test that adaptively tunes $\lambda$}

Given the discussion of the previous two sections, it seems natural to use a test statistic that uses an estimate $\widehat{\lambda}_t$ of $\psi^{-2}$ instead of a prespecified value of $\lambda$. In keeping with the design logic of non-anticipating running-estimate SPRT statistics, we propose the following test statistic:
\begin{align}
Y^{\adpt-\lambda}_t = \sum_{s=1}^t \widehat{\psi}_{\widehat{\lambda}_{s-1}, s-1} X_s - \frac{1}{2} \widehat{\psi}_{\widehat{\lambda}_{s-1}, s-1}^2, \qquad \text{where} \qquad \widehat{\lambda}_t = \widehat{\psi}_{0,t}^{-2},
\end{align}
and $\widehat{\psi}_{\lambda,t} = S_t / (t+\lambda)$, for any $t + \lambda >0$, and $\widehat{\psi}_{0,0} = 0$, as defined earlier. So as to fully specify a test of level $\alpha$, it remains to determine the rejection threshold. We speculate that by enforcing a long enough burn-in period, $Y^{\adpt-\lambda}_t$ will behave similarly to an analogous test statistic \begin{align}
\widetilde{Y}^{\adpt-\lambda}_t = \sum_{s=1}^t \widetilde{\psi}_{\widetilde{\lambda}_{s-1}, s-1} \widetilde{X}_s - \frac{1}{2} \widetilde{\psi}_{\widetilde{\lambda}_{s-1}, s-1}^2
\end{align}
obtained from a sequence $(\widetilde{X}_t)_{t \geq 1}$ of standard normal i.i.d. observations, and therefore, that the rejection threshold for the latter should yield approximately the same type-I error for the former. That is, for a burn-in period of duration $t_0$, we are looking for $-\log \alpha^{\adpt-\lambda}(\alpha, t_0)$ such that
\begin{align}
P \left[ \sup_{t \geq t_0} \widetilde{Y}^{\adpt-\lambda}_t \geq -\log \alpha^{\adpt-\lambda}(\alpha, t_0) \right] = \alpha, 
\end{align}
that is
\begin{align}
E \left[ P \left[ \sup_{t \geq t_0} \left( \widetilde{Y}^{\adpt-\lambda}_t - \widetilde{Y}^{\adpt-\lambda}_{t_0} \right) \geq -\log \alpha^{\adpt-\lambda}(\alpha, t_0) - \widetilde{Y}^{\adpt-\lambda}_{t_0}\mid \widetilde{\mathcal{F}}_{t_0}  \right] \right] = \alpha.
\end{align}
Observing that $(\exp(\widetilde{Y}^{\adpt-\lambda}_t - \widetilde{Y}^{\adpt-\lambda}_{t_0})_{ t \geq t_0}$ is a martingale with initial value 1, we should have, from Ville's inequality case of equality, that the above equation is approximately equivalent, for $t_0$ large enough, to 
\begin{align}
E \left[ \exp \left( -\left( -\log \alpha^{\adpt-\lambda}(\alpha, t_0)- \widetilde{Y}^{\adpt-\lambda}_{t_0} \right)_+ \right) \right] = \alpha.
\end{align}
We propose to solve the above equation by performing a grid search over candidate values of $-\log \alpha^{\adpt-\lambda}(\alpha, t_0)- \widetilde{Y}^{\adpt-\lambda}_{t_0}$ and evaluating the above expectation at the grid points by Monte-Carlo simulations. (Note that the Monte-Carlo simulation entails drawing multiple trajectories of an i.i.d. sequence $(\widetilde{X}_t)_{t \geq t_0}$ of standard normal random variables, as opposed to drawing multiple trajectories of the data sequence, which of course is not possible). Our proposed procedure is then the one that starts monitoring the test statistic $Y^{\adpt-\lambda}_t$ after a burn-in period of length $t_0$ and rejects the null hypothesis as soon as it crosses the rejection threshold $-\log \alpha^{\adpt-\lambda}(\alpha, t_0)$ after $t_0$. 

We do not analyze formally this procedure in the current version of this work, but we do evaluate it empirically alongside the previously discussed delayed-start nmSPRT and rmlSPRT in the next section.

\section{Numerical experiments}\label{sec:numerics}

We now confirm experimentally the type-I error and expected rejection time guarantees for delayed-start rmlSPRT and nmSPRT confidence sequences, and we compare them to alternative confidence sequences.

\subsection{Type-I error} We start with type-I error experiments.  We consider a sequence of i.i.d. observations $O_1,O_2,\ldots \sim \mathrm{Bernoulli}(0.03) - 0.03$. The choice of the relatively small value $0.03$ is to ensure that the sum  $S_t$ of the stabilized martingale difference $X_1,X_2,\ldots$ sequence doesn't converge too fast to a normal. It is of course impossible to evaluate the event $\{\tau < \infty \}$ for any sequential test $\tau$, but, since the martingales we consider must follow the law of the iterated logarithm and the sequences we study have $\sqrt{t \log t}$ asymptotics, rejections should happen early on. We evaluate boundary crossings at the points of a time grid $t_1,t_2,\ldots, t_N$ such that $t_i \approx t_{\max} \sum_{j=1}^i j^\beta / \sum_{j=1}^N j^\beta$, with $N=20000$, $\beta = 2$ and $t_{\max} = 10^8$. This specification ensures that the time grid is denser early on the time axis. We refer the reader to our code for implementation details.

\begin{figure}%
    \centering
    \subfloat[Type-I error of the rmle-SPRT sequence, binmSPRT sequence with $\lambda = 1$ and binmSPRT sequence with $\lambda = 100$, as a function of the burn-in time $t_0$, at levels $\alpha \in \{5\cdot 10^{-3}, 1\cdot 10^{-2}, 5 \cdot 10^{-2}, 1 \cdot 10^{-1} \}$. We simulate 5000 trajectories of $S_t$ per point.]{{\includegraphics[width=6.5cm]{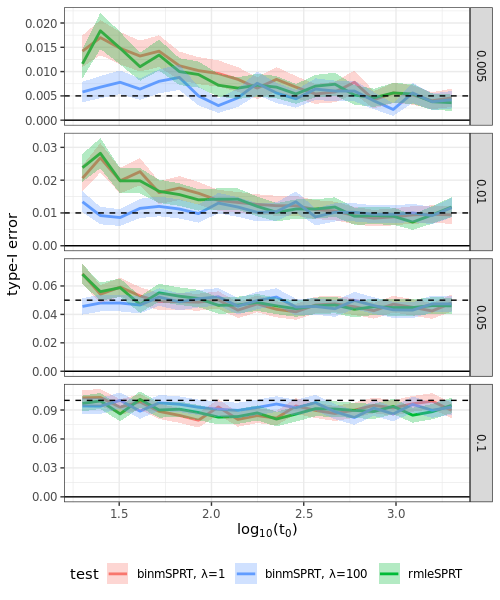} }}%
    \qquad 
    \subfloat[Comparison of type-I error of sequences with burn-in with sequences without burn-in period, namely \cite{howard2021}'s finite sample LIL, given by $1.7 \sqrt{t ( \log \log t + 0.72 \log (10.4 / \alpha))}$, and the nmSPRT sequences $\sqrt{(t+\lambda)(2\log \alpha^{-1} + \log ((t+\lambda)/\lambda))}$, with $\lambda=1$ and $\lambda = 100$. Chosen significance level is $\alpha = 5 \cdot 10^{-2}$. Number of simulated trajectories per point is 100000 for sequences with burn-in and 200000 for sequences without burn-in.]{{\includegraphics[width=6.5cm]{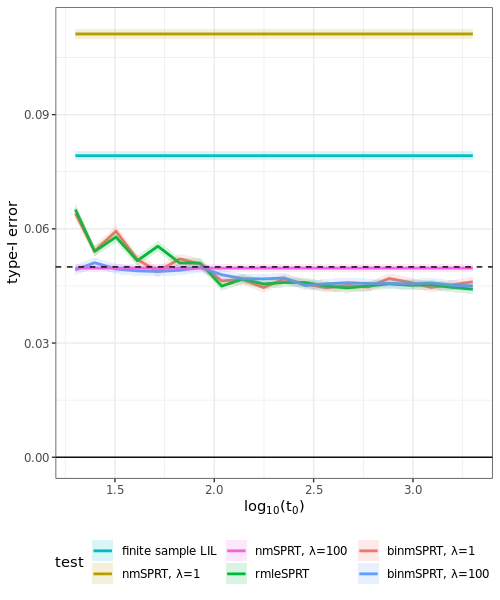} }}%
    \caption{Type-I error as a function of $t_0$. The confidence bands are 95\% pointwise Wald-type confidence sequences. We obtain each point in the above two plots by simulating trajectories of $(S_t, t \in \{t_1,\ldots,t_{\max}\})$. The dashed black line represents the nominal $\alpha$ level, of which the value is indicated on the right side of the plots.}
    \label{fig:type-I_error}%
\end{figure}
\cref{fig:type-I_error} shows convergence of the type-I error with the burn-in period, while the confidence sequences without burn-in period over reject, likely due to erroneous early rejections at time points when $S_t$ is still far from its Wiener process approximations. Note also that for $\lambda = 100$, we don't observe over-rejections even for small values of $t_0$. This is to be expected as setting $\lambda$ trades off early rejections for later tightness. It can be directly observed from the expression of the nmSPRT confidence sequence that the width of the sequence increases early as $\lambda$ increases.  is also to be expected from the interpretation of $\lambda^{-1/2}$ as a prior on the effect size: if the effect size is of order $\lambda^{-1/2}$ a confidence sequence optimally targeting that effect size should be loose for $t \ll \lambda$ and tight around $\lambda$, thereby preventing early rejections if $\lambda$ is large.

\subsection{Rejection time}

We now turn to evaluating features of the distribution of the rejection time. As the reader might have noticed, we haven't discussed in detail so far the choice of the parameter $\lambda$ in the nmSPRT expression, beyond the interpretation of $\lambda^{-1/2}$ as an a priori belief on the magnitude of the effect size. We investigate empirically the effect of $\lambda$ on the rejection time of the nmSPRT in the next subsection.

We now plot several measures of sample efficiency for the burn-in nmSPRT sequences and the burn-in rmlSPRT sequences. In particular, we plot the ratio of the median stopping time of our tests over the median stopping time of the oracle (in the sense that it uses the true value of $\mu$) simple-vs-simple SPRT with same burn-in period. (We compute the adjusted $\alpha$ level for the delayed-start simple-vs-simple SPRT in a similar fashion to that of the delayed-start nmSPRT and rmlSPRT). We refer to this ratio as the ``relative efficiency'' of the sequential tests.

\begin{figure}\label{fig:relative_efficiency}
    \centering
    \subfloat[Relative efficiency as a function of $\alpha$ for various values of $\mu$. The panel labels at the top represent the values of $\alpha$.]{{ \includegraphics[width=6.5cm]{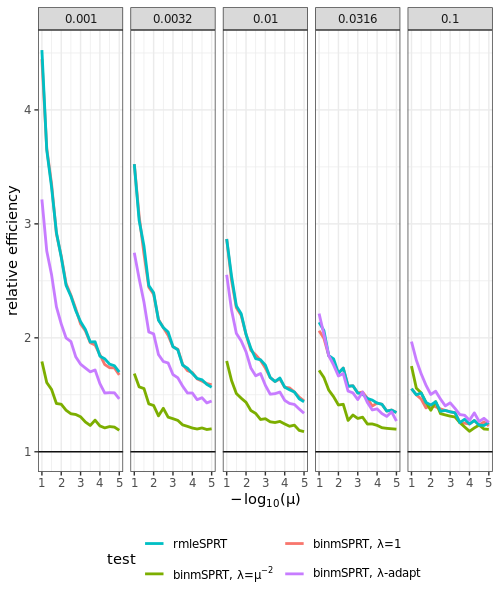} }}%
    \qquad 
	\subfloat[Relative efficiency as a function of $\mu$, for various values of $\alpha$. The panel labels at the top represent the values of $\alpha$]{{ \includegraphics[width=6.5cm]{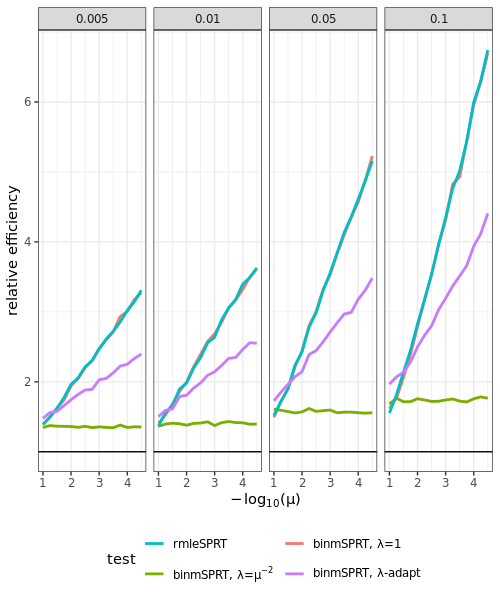} }}
\caption{Relative efficiency, defined as $\mathrm{median}(\tau) / \mathrm{median}(\tau^{\mathrm{svs}})$. We obtain each point by simulating 8000 trajectories of $(S_t)_{t \in \{t_1,\ldots,t_{\max}\} }$, for i.i.d. observations $O_1,O_2\ldots, \sim \mu + (\mathrm{Bernoulli}(0.03)-0.03)$.}
\end{figure}

We observe on the left subplot of \cref{fig:relative_efficiency} that the relative efficiency of the (delayed-start) rmlSPRT and the nmSPRT seem to converge to 1 as $\alpha \to 0$, as implied by \cref{thm:rejection_time_upper_bounds}. An empirical (as opposed to predicted by any of the theorems of the current article) finding we infer from the right plot of \cref{fig:relative_efficiency} is that the median stopping time of the nmSPRT at the optimal $\lambda$ value seems to be within a constant factor $c(\alpha) > 1$ of the median stopping time of the oracle simple-vs-simple SPRT, and that $c(\alpha) \to 1$ as $\alpha \to 0$.

\begin{figure}\label{fig:tau_plot}
\centering
\includegraphics[width=6.5cm]{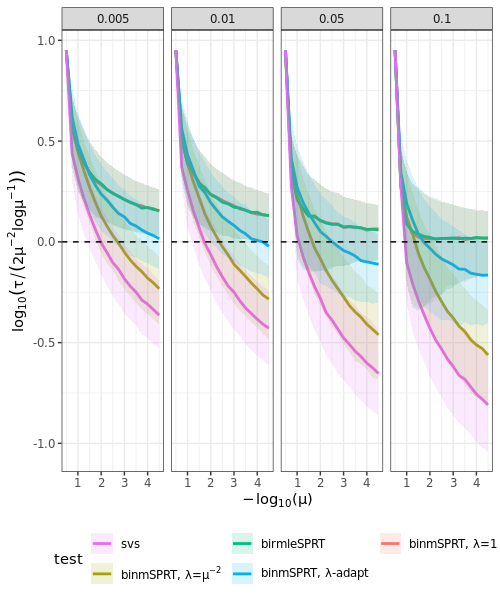}
    \caption{Median, first and third quartile of the ratio of the rejection time with the asymptotic upper bound $2 \mu^{-2} \log \mu^{-1}$ from \cref{thm:rejection_time_upper_bounds}, for the burn-in oracle simple-vs-simple SPRT, burn-in nmSPRT, burn-in rmlSPRT, and burn-in nmSPRT with adaptive $\lambda$. We obtain each point by simulating 8000 trajectories of $(S_t)_{t \in \{t_1,\ldots,t_{\max}\} }$, for i.i.d. observations $O_1,O_2\ldots \sim \mu + (\mathrm{Bernoulli}(0.03)-0.03)$. We set $\alpha = 5 \cdot 10^{-2}$ for the current plot. We set the burn-in period $t_0$ to 200. The panel labels at the top represent the value of $\alpha$.}
    \label{fig:tau_function_of_mu}
\end{figure}

\section{Application to quality-control of Netflix client application}\label{sec:realdata}

We illustrate the use of the delayed-start rmlSPRT and of the delayed-start nmSPRT on a real data case study. We also compare it to an empirical Bernstein sequential boundary based on a gamma-exponential mixture boundary, as proposed in section 4 of \citep{howard2021}.

Our data comes from an A/B test used to quality-control the release of a new Netflix client application. Users in the treatment and control groups experienced new and existing versions of the Netflix software respectively. The outcome of interest was the delay between requesting playback and the stream starting, which we term ``play delay''. Although this dataset was collected according to a pre-determined sample size, we use it to illustrate the performance of our sequential tests. In the following, pre-treatment measurements of play delay are available for all experimental units, which are leveraged for regression adjustment. We work with log-transformed play delay, and use an undisclosed base to protect proprietary information.

We simulate observation trajectories $(X_t)_{t \geq 1}$ as follows: for each $t$, we draw uniformly at random treatment assignment $A_t \in \{0,1\}$ and then we draw outcome $U_t$ and a pre-treatment covariate $L_t$ by sampling with replacement from actual realized pairs $(L, U)$ in arm $A_t$ in the test data. 
 The data structure is an instance of a Bernoulli trial with covariates as described in the second example at the beginning of the article. The outcome $U_t$ is, as we mentioned, the play delay. We use as covariate $L_t$ the pre-treatment outcome of unit $t$, that is the play-delay before treatment assignment. The probability of assignment to cell 1 is $p=0.5$. Our null hypothesis is that there is no difference in play delay on average between the newer and existing Netflix client, that is $H_0: \theta = 0$, where $\theta = E[U_1 \mid A_1  =1] - E[U_1 \mid A_1  =0]$ identifies this mean causal effect. 
As in our Bernoulli trial with covariates example, we let 
\begin{align}
Z_t = \widehat{\eta}_{t-1}(1, L_t) - \widehat{\eta}_{t-1}(0, L_t) + \frac{A_t - p}{p(1-p)}(U_t - \widehat{\eta}_{t-1}(A_t, L_t)),
\end{align}
where $\eta_{t-1}(a, l)$ is a least-squares estimator of the linear regression of $U$ on $L$, $L \times A$ and an intercept, computed from observations indexed by $\mathcal{I}_{1,t}$, where the index set $\mathcal{I}_{1,t}$ is as introduced in section \ref{sec:application}. We set $X_t = \widehat{\sigma}_{t-1}^{-1} Z_t$ where $\widehat{\sigma}_{t-1}$ is a $\mathcal{D}_{0,t}$-measurable estimator of the standard deviation of $Z_t$.

We set the burn-in period $t_0$ to 1000 observations and the normal mixture SPRT tuning parameter $\lambda$ to $10^2$. Measurements of play delay are bounded from above by $b$, as longer delays are simply abandoned. When using the empirical Bernstein boundary \citep{howard2021}, $\eta_{t-1}(a, l)$ is the same linear predictor projected onto the interval $[0, b]$, ensuring $Z_t \in [-2b, 2b]$. In their notation, we use a scale $c = 4b$, and a gamma-exponential mixture with parameter $\rho$. Following the recommendation formulated in their section 3.5, we set the gamma-exponential mixture parameter $\rho$ to the same as the Gaussian mixture parameter $\lambda=100$. 
We make these choices based on domain knowledge of these types of experiments. They correspond to a target effect size of the order of $1$\%, and to the fact that we typically get a hundredfold gain in sample size from pre-allocation outcome adjustment. While in the context of our simulation we of course know the effect size (it turns out the point estimate is -0.81\%) since we work from an already fully collected data set, the  $1$\% order of magnitude is what engineers at Netflix expect using domain specific expertise.

So as to make things more concrete, we plot in figure \ref{fig:play_delay_AB-one_data_trajectory} the trajectory of the test statistics for one arbitrary random draw of the data sequence. We see that if we had received the data in the particular order of the simulated stream, we would have called the test after collecting slightly after the end of the burn-in period $t_0 = 1000$ while the non-asymptotic empirical Bernstein test stops at almost $10^4$ observations. 

We now examine the behavior of the test statistics over many draws of the data sequence. We first examine the empirical Type-I error of the nmSPRT by simulating data under the null hypothesis. Specifically, we simulate data for both treatment and control by sampling with replacement from observed control dataset, which we refer to as a simulated A/A test. Figure \ref{fig:play_delay_AB}, left, shows that the Type-I error approaches the nominal 5\% the longer the simulation is performed. 

We then compare the behavior of the rmlSPRT, the nmSPRT and the non-asymptotic empirical Bernstein test on simulated trajectories of the A/B test (unlike in the A/A test, here observations for treatment and control units are drawn from their respective empirical distributions obtained from the observed dataset, following simulated treatment assignment $A_t$). We plot simulation results in figure \ref{fig:play_delay_AB}. In this example, the rmlSPRT and the nmSPRT perform relatively similary due to the fact that we chose $\lambda$ in the normal mixture to optimize for rejection times of the order of $10^2$, which is earlier than the end of the burn-in time. As expected, the empirical Bernstein test is much more conservative, and we observe it tends to reject a little less than an order of magnitude later.

\begin{figure}
\centering
\includegraphics[width=6.5cm]{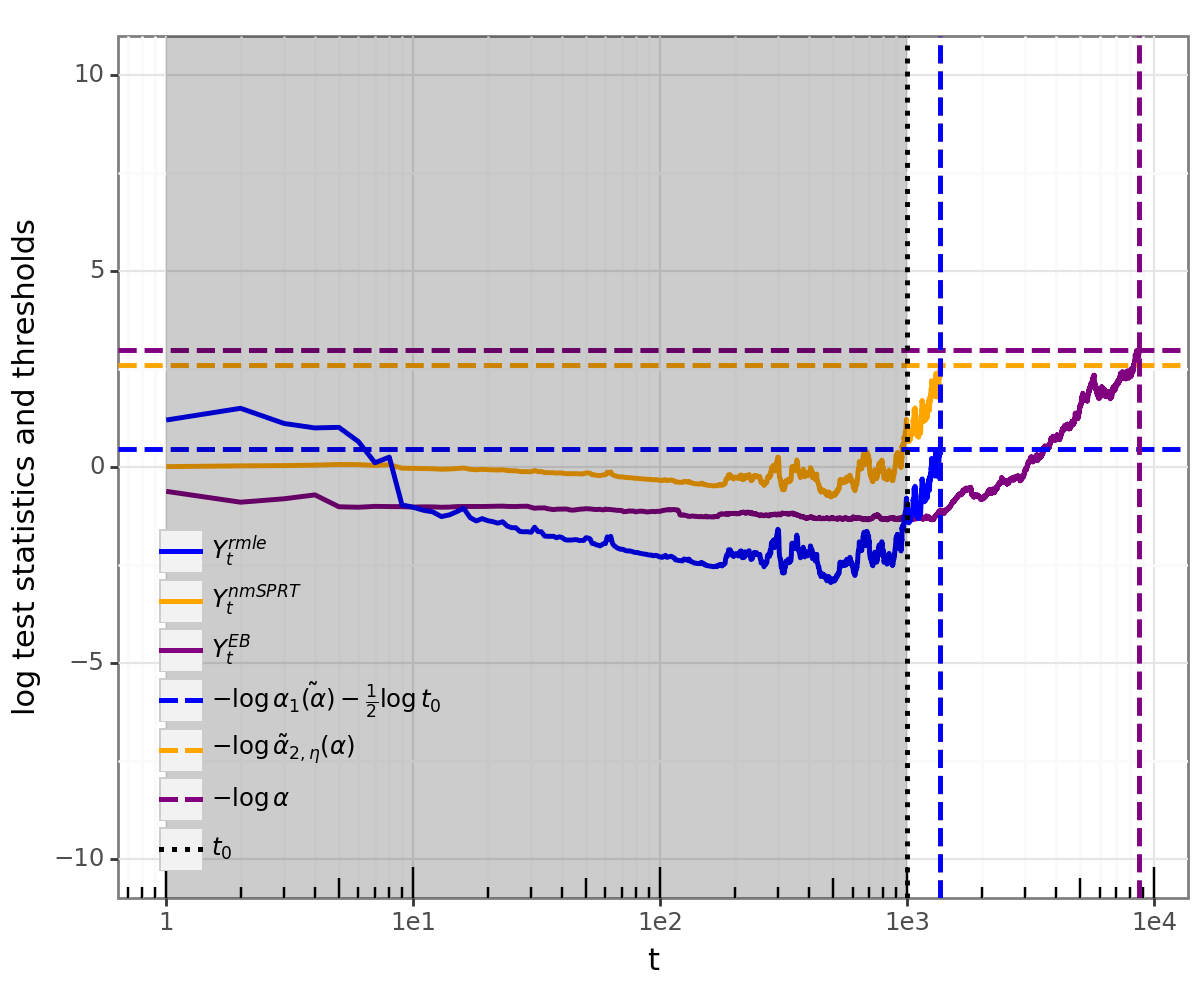}
    \caption{nmSPRT and rMLE test statistics trajectories on one arbitrary draw of the data sequence in the play delay case study.}
\end{figure}\label{fig:play_delay_AB-one_data_trajectory}

\begin{figure}
    \centering
    \subfloat[Empirical CDFs of the rejection times of the three sequential tests for 1000 simulated trajectories of the A/A experiment.]{{ \includegraphics[width=6.5cm]{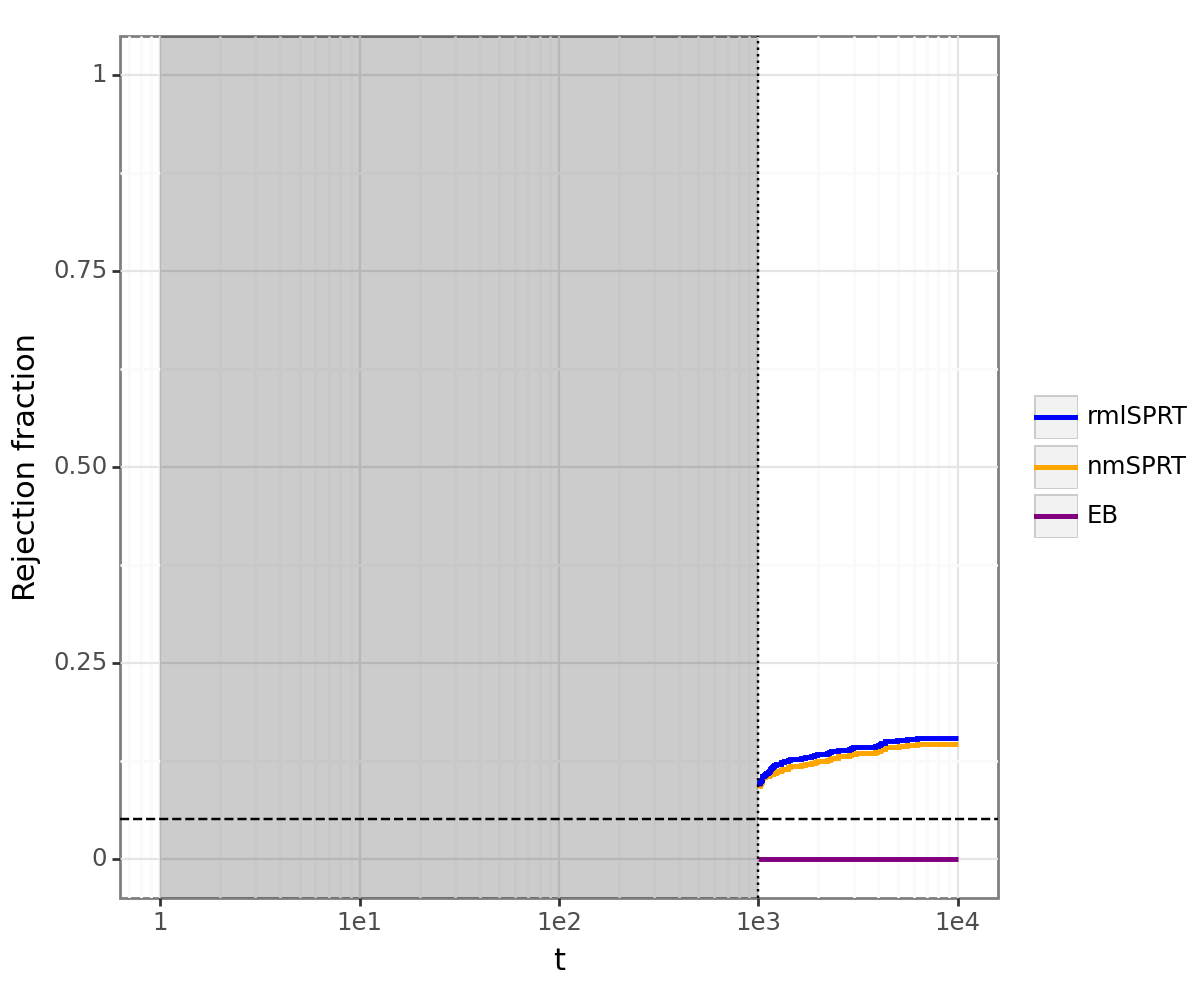} }}
    \qquad 
	\subfloat[Empirical CDFs of the rejection times of the three sequential tests for 1000 simulated trajectories of the A/B experiment.]{{ \includegraphics[width=6.5cm]{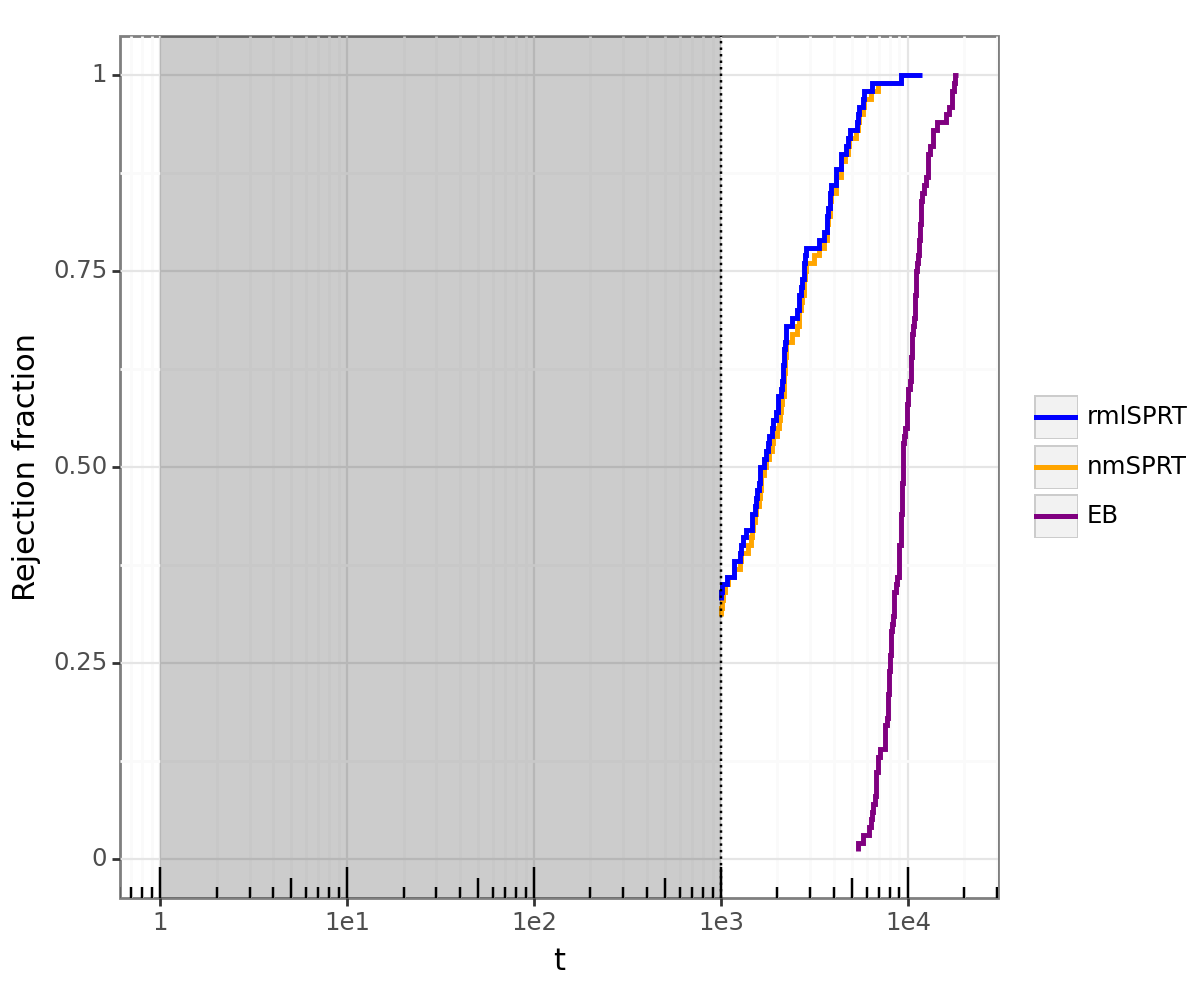} }}
\caption{Behavior of the rejection time of the mixture SPRT, running-maximum-likelihood SPRT, and the empirical Bernstein sequential test under simulated A/A experiments and simulated A/B experiments in the play delay case study.}
\end{figure}\label{fig:play_delay_AB}

\section{From Wald's simple-vs-simple SPRT to delayed-start running-mean-estimate SPRTs}\label{sec:historical_progression}

In this section, we expose the historical and logical progression from Wald's results \citep{wald, waldsequential, waldwolfowitz, waldwolfowitzbayes} on the optimality of simple-vs-simple parametric SPRTs to our results, that is the type-I error calibration and expected rejection time optimality of the non-parametric delayed-start running-mean-estimate.

\subsection{Type-I error and optimality of the oracle simple-vs-simple SPRT}

\paragraph{Definition and type-I error.} \cite{wald, waldsequential} introduced the sequential probability ratio test, defined as follows. Suppose that $X_1,X_2,\ldots$ are i.i.d. drawn from a common distribution with density $p$ w.r.t a certain measure. Consider two simple hypotheses $H_0 : p = p_0$ and $H_1 : p = p_1$, with $p_1 / p_0 < \infty$. The SPRT statistic at $t$ is the likelihood ratio
\begin{align}
Y^{\svs}_t = \prod_{s=1}^t \frac{p_1(X_t)}{p_0(X_t)},
\end{align}
and the $\alpha$-level SPRT is the stopping time $\tau^{\svs} = \inf \{t \geq 1 : Y^{\svs}_t  \geq -\log \alpha \}$. That type-I error is at most $\alpha$ follows, via Ville's inequality \citep{ville}, from the fact that $(Y_t^\svs)_{t \geq 1}$ is a martingale under $H_0$ with initial value 1.

\paragraph{Power, expected rejection time.} \cite{wald, waldsequential} shows that the SPRT has power 1 under the alternative $H_1$ and that its expected rejection time as $d_{KL}(p_1, p_0) \to 0$ is asymptotically equivalent to $- \log\alpha / d_{KL}(p_, p_0)$, where $d_{KL}$ is the Kullback-Leibler divergence. \cite{waldwolfowitz} further show that, under the present setting, the expected stopping time of the simple-vs-simple SPRT is optimal among all tests of level $\alpha$ under $H_0$, that is, for any sequential test $\tau'$ such that $\mathrm{Pr}_{H_0}[\tau' < \infty] \leq \alpha$, we must have $E_{H_1}[\tau'] \geq E_{H_1}[\tau^\svs]$.

\subsection{Mixture SPRTs for the Wiener process and for parametric sequence of i.i.d. random variables}

\paragraph{The case for mixture test statistics under composite alternatives.} Consider a family of densities $p_\psi$ indexed by a one-dimensional parameter $\psi$ and the null hypothesis $H_0 : \psi = \psi_0$. It is often the case that the alternative hypothesis is a composite alternative of the form $H_{\backslash 0} = \psi \neq \psi_0$. In clinical trials or A/B tests for instance, experimenters want to test the absence of average treatment effect (ATE) against the hypothesis that the ATE is non-zero, rather than against a specific non-zero value of the ATE. 

It can be shown that for $p \in H_{\backslash 0}$ such that $p$ is closer in Kullback-Leibler divergence to $H_0$ than it is to $H_1$, the simple-vs-simple SPRT of $H_0$ against $H_1$ has power strictly smaller than 1. It can further be shown that, even if $d_{\mathrm{KL}}(p, H_1) < d_{\mathrm{KL}}(p, H_0)$, the expected stopping time can get highly suboptimal as $\psi$ gets away from $\psi_1$.

A remedy to this limitation of the simple-vs-simple SPRT that ensures the resulting sequential test has power 1 against any $\psi_1 \neq \psi_0$, is to use a prior $F$ over the possible values of $\psi_1$. This yields so-called mixture SPRTs, introduced by \cite{robbins1970boundary, robbins1970statistical}, where $F$ is the so-called mixture distribution, in which the test statistic and the test are defined as 
\begin{align}
Y^{\mathrm{mixt}}_t = \int  \prod_{s=1}^t \frac{p_{\psi_1}(X_t)}{p_{\psi_0}(X_t)} dF(\psi_1) \qquad \text{and} \qquad \tau^{\mathrm{mixt}} = \inf \{ t \geq 1: Y^{\mathrm{mixt}}_t \geq - \log \alpha\}.
\end{align}
Integrating against $F$ preserves the martingale property under $H_0$ and therefore the type-I error guarantee. 

\paragraph{Mixture confidence sequences for normal data.} Under an i.i.d. sequence $X_1,X_2 \ldots \sim \mathcal{N}(\psi, 1)$, that is under $p_\psi(x) = (2 \pi)^{-1/2} \exp(-(x-\psi)^2/2)$, $Y_t^{\mathrm{mixt}}$ takes the form
\begin{align}
Y_t^{\mathrm{mixt}} = \int \exp\left((\psi_1-\psi_0) S_t - \frac{1}{2} (\psi_1 - \psi_0)^2 t\right) dF(\psi_1) \qquad \text{where} \qquad S_t = \sum_{s=1}^t X_s.
\end{align}
It is common to specify the null hypothesis by setting $\psi_0 = 0$ (think about testing the absence of average treatment effect against non-zero ATE). An analytically convenient mixture distribution $F$ is the normal distribution \citep{robbins1970boundary, robbins1970statistical} centered around 0 and with variance $\lambda^{-1}$. The variance of the mixture distribution encodes beliefs about the possible values of the effect size if it isn't zero. This yields 
\begin{align}
Y_t^{\mathrm{mixt}} = \sqrt{\frac{\lambda}{t+\lambda}} \exp\left(\frac{1}{2} \frac{S_t^2}{t+\lambda}\right).
\end{align}
Inverting $Y_t^{\mathrm{mixt}}$ yields a confidence sequence for $S_t$. Specifically, for any $\alpha > 0$, that $\Pr_{H_0}[\tau^{\mathrm{mixt}} < \infty] \leq \alpha$ is equivalent to the fact that
\begin{align}
|S_t| \leq \sqrt{(t+\lambda) \left( - 2 \log \alpha + \log\left( \frac{t + \lambda}{\lambda} \right)\right)} \qquad \forall t \geq 1.\label{eq:nmSPRT_boundary}
\end{align}

\paragraph{Exact calibration for Wiener processes.} \cite{robbins1970boundary} further show that for a standard Wiener process, the crossing probability of boundary \eqref{eq:nmSPRT_boundary} is exactly $\alpha$, while it is only known that it is at most $\alpha$ in the case of discrete normal i.i.d. data.

\subsection{Running-estimate SPRTs}

\paragraph{Definition of the running-estimate SPRTs.} 

A seemingly different strategy to modify the simple-vs-simple SPRT so as to obtain a test of power one against composite alternatives is to harness the sequentiality of data collection by using likelihood ratios of the form $Y^{\mathrm{rngest}}_t=\prod_{s=1}^t p_{\widehat{\psi}_{s-1}}(X_t)/p_{\psi_0}(X_t)$, where $\widehat{\psi}_{s-1}$ is a running $\mathcal{F}_{s-1}$-measurable estimate of $\psi$. This is the approach proposed by \cite{robbins1972class} to design one-sided tests of $H_{\leq \psi_0} : \psi \leq \psi_0$ against $H_{> \psi_0} : \psi > \psi_0$. A running estimate used by \cite{robbins1972class} in the case of a sequence of i.i.d data drawn from $\mathcal{N}(\psi,1)$ is the threhsolded maximum likelihood estimator (MLE) $\widehat{\psi}_t = t^{-1} (\sum_{s=1}^t X_s - \psi_0)_+ + \psi_0$, while another one is the posterior mean computed from $X_1,\ldots, X_t$, under a prior $F$ with support $[\psi_0, \infty)$.

\paragraph{Optimality.} It makes sense that the running-estimate SPRTs should have close to optimal expected rejection time, as $\widetilde{\psi}_{s-1}$ is a proxy for $\psi$, and we know from \cite{waldwolfowitz} that the level-$\alpha$-under-$H_0$ test with optimal rejection time under $\psi = \psi_1$ is the simple-vs-simple SPRT of $H_0$ against $H_1$.  \cite{robbins_siegmund1974} study the expected rejection time of one-sided tests of $H_{\leq \psi_0}$ against $H_{> \psi_0}$ of the form $\tau^{\mathrm{rngest}} = \inf \{ t \geq 1: Y^{\mathrm{rngest}}_t \geq -\log \alpha \}$ as $\psi \downarrow \psi_0$. They prove that when $\widehat{\psi}_t$ is taken to be the thresholded MLE, $E[\tau^{\mathrm{rngest}}] \sim P_{H_0}[ \tau^{\mathrm{rngest}} = \infty ](\psi - \psi_0)^{-2} \log (\psi - \psi_0)^{-1}$. This isn't too far off the lower bound $2 P_{H_0}[ \tau^{\mathrm{rngest}} = \infty ](\psi - \psi_0)^{-2} \log \log (\psi - \psi_0)^{-1}$ proven by \cite{farrell1964asymptotic} for such one-sided tests as $\psi \downarrow \psi_0$. 

\paragraph{Connection to mixture SPRTs and confidence sequences.} \cite{robbins_siegmund1974} show that the running-estimate SPRTs exhibit an interesting connection to mixture SPRTs in the case where the data stream is a time-continuous process of the form $(\widetilde{S}(t))_{t \geq 0}$,  with $\widetilde{S}(t) = \psi t + W(t)$ for all $t \geq 0$, where $(W(t))_{t \geq 0}$ is a standard Wiener process. We expose their observations here. Denote $\widetilde{\mathfrak{F}} = (\widetilde{\mathcal{F}}(t))_{t \geq 0}$ the canonical filtration to which $W$ is adapted. The Brownian continuous-time analog of the class of test statistics of the form $Y_t^{\mathrm{rngest}}$ is the class of test statistics of the form:
\begin{align}
\widetilde{Y}^{\mathrm{rngest}}_t = \exp\left(\int_0^t \widetilde{\psi}(s) d \widetilde{S}(s) - \frac{1}{2} \widetilde{\psi}(s)^2 ds \right),
\end{align}
where $\widetilde{\psi}(s)$ is an $\widetilde{\mathcal{F}}_s$-measurable estimate of $\psi$. Meanwhile, Brownian continuous-time mixture test statistics take the form
\begin{align}
\widetilde{Y}_t^{\mathrm{mixt}} = f(\widetilde{S}_t, t) \qquad \text{where} \qquad f(x,t) = \int \exp\left (\psi' x - \frac{1}{2} \psi'^2 t \right) dF(\psi'),
\end{align}
where $F$ is the mixing distribution. It\^o's lemma asserts that for any stochastic process $(X(t))_{t \geq 0}$ of the form $X(t) = X(0) + \int_0^t \mu(s) ds + \sigma(s) dW(s)$, where $\sigma(s)$ and $\mu(s)$ are $\widetilde{\mathcal{F}}(s)$-measurable, and any suitably differentiable $(x,t) \mapsto u(x,t)$, it holds that $u(X_t,t) = u(0,0) + \int_0^t \partial_x u(X_s,s)dX(s)+ (\partial_t u(X_s,s) + \frac{1}{2} \partial_{x,x} u(X_s,s))ds$.
Applying It\^o's lemma to $u(\widetilde{S}(t),t) = \log f(\widetilde{S}(t),t)$ yields
\begin{align}
\widetilde{Y}^\mathrm{mixt}_t = \exp \left( \int_0^t \widetilde{\psi}'(s) d\widetilde{S}(s) - \frac{1}{2} \left(\widetilde{\psi}'(s) \right)^2  ds \right) \qquad \text{where} \qquad \widetilde{\psi}'(s)  = \frac{\partial_x f}{f}(\widetilde{S}(s),s).
\end{align}
Notice that 
\begin{align}
\widetilde{\psi}'(s) =  \int \psi' \frac{\exp(\psi' \widetilde{S}(s) - \frac{1}{2} \psi'^2 t) dF(\psi')}{\int \exp(\psi' \widetilde{S}(s) - \frac{1}{2} \psi'^2 t) dF(\psi')}
\end{align}
is the posterior mean of $\psi$ given $\widetilde{\mathcal{F}}(s)$ under prior $F$. For $F= \mathcal{N}(0, \lambda^{-1})$, we have that $\widetilde{\psi}'(s) = \widetilde{S}(s) / (s + \lambda)$, that is the shrunken empirical mean with shrinkage parameter $\lambda$. This shows in particular that the normal mixture boundary \eqref{eq:nmSPRT_boundary} is equivalent to $\widetilde{Y}^{\mathrm{rngest}}_t \leq - \log \alpha$ with $\widetilde{\psi}(s) = \widetilde{\psi}'(s) = \widetilde{S}(s) / (s + \lambda)$. \cite{robbins_siegmund1974} show in their Theorem 3 that in the normal discrete case, the half-normal prior yields a running-posterior-mean SPRT with expected rejection time behaving as $2 P_{H_0}[\tau^{\mathrm{rngest}} = \infty] (\psi - \psi_0)^{-2} \log (\psi - \psi_0)^{-1}$.

\paragraph{Boundary associated to the running MLE SPRT} As far as we are aware, there doesn't seem to be a mixture distribution corresponding to the MLE or to the thresholded MLE alluded to earlier. However, application of It\^o's lemma to $h(x,t) = x^2 / (2t)$ yields that the running MLE SPRT log test statistic started at 1 can be rewritten as follows:
\begin{align}
\log Y^{\rmlSPRT}_t =& \int_1^t \frac{\widetilde{S}(s)}{s} d \widetilde{S}(s) - \frac{1}{2} \left( \frac{\widetilde{S}(s)}{s} \right)^2 ds \\
=& \int_1^t dh(\widetilde{S}(s), s) - \frac{1}{2s} ds = \frac{1}{2} \left( \frac{\widetilde{S}(t)^2}{s} - \widetilde{S}(1)^2 - \log t \right) .
\end{align}
Under the null $H_0$, $(Y_t^{\rmlSPRT})_{t \geq 1}$ is a martingale with initial value 1. Therefore, from Ville's inequality,
\begin{align}
&\Pr \left[ \frac{1}{2} \left( \frac{W(t)^2}{t} - \log t \right) \geq x \right]\\
 =& E \left[ \Pr \left[ \frac{1}{2} \left( \frac{W(t)^2}{t} - W(1)^2 - \log t \right) \geq x - \frac{1}{2}W(1)^2 \mid W(1)\right] \right] \\
=& E \left[ \exp\left( \left(-x + \frac{1}{2} W(1)^2 \right)_+ \right) \right].
\end{align}
It can readily be checked that putting $x = -\log \widetilde{\alpha}_1(\alpha)$ sets the above quantity to $\alpha$. Therefore, inverting $Y^{\rmlSPRT}_t$ gives that, with probability $1 -\alpha$,
\begin{align}
|W(t)| \leq \sqrt{t \left(-2 \log \widetilde{\alpha}_1(\alpha) + \log t \right)}, \qquad \forall t \geq 1.\label{eq:rmle_CS}
\end{align}

\subsection{Toward non-parametricity: delayed start mixture SPRTs under i.i.d. data}

Theorem 2 in \cite{robbins1970boundary} asserts (we present here a slight two-sided modification of the result) that for a sequence $X_1, X_2, \ldots$ of i.i.d. random variables with mean 0 and variance 1, and a boundary function $(c(u))_{u \geq u_0}$ such that (i) $c(u) u^{-1/2}$ is non-decreasing for $u$ large enough and (ii) $\int_{u_0}^\infty  u^{-3/2} c(u) \exp(-c(u)^2 / 2) du < \infty$, it holds that 
\begin{align}
\lim_{t_0 \to \infty} P\left[\forall t \in \mathbb{N} \cap [u_0 t_0, \infty), |S_t| \leq \sqrt{t_0} c(t / t_0) \right] = P\left[ \forall u \geq u_0, |W(u)| \leq c(u) \right].
\end{align}
This result therefore allows, in nonparametric i.i.d. settings, to obtain approximate confidence sequences from a confidence sequence for the Wiener process, by means of a burn-in period and a time-rescaling. Applying this result to \eqref{eq:nmSPRT_boundary} and \eqref{eq:rmle_CS} yields that
\begin{align}
\lim_{t_0 \to \infty} & \Pr \left[\forall t \in \mathbb{N} \cap [t_0, \infty),\ |S_t| \leq \sqrt{(t + \eta t_0) \left(-2 \log \widetilde{\alpha}_{2,\eta}(\alpha)  + \log \frac{t + \eta t_0}{\eta t_0} \right)} \right] \\
= & \Pr \left[\forall u \geq 1, |W(u)| \leq \sqrt{(u+\eta) \left(-2  \log \widetilde{\alpha}_{2,\eta}(\alpha)  + \log \frac{u + \eta}{\eta} \right)} \right] \label{eq:delayed_start_nmSPRT_i.i.d.}
\end{align}
and 
\begin{align}
\lim_{t_0 \to \infty} & \Pr \left[\forall t \in \mathbb{N} \cap [t_0, \infty),\ |S_t| \leq \sqrt{t \left(-2 \log \widetilde{\alpha}_1(\alpha)  + \log \frac{t}{t_0} \right)} \right] \\
= & \Pr \left[\forall u \geq 1, |W(u)| \leq \sqrt{u \left(-2 \log \widetilde{\alpha}_1(\alpha)  + \log u\right)} \right].\label{eq:delayed_start_rmlSPRT_i.i.d.}
\end{align}
The key enabling result in the proof of theorem 2 in \cite{robbins1970boundary} is Donsker's weak invariance principle for i.i.d. random variables.

\subsection{Modern invariance principle based CSs}\label{sec:modern_IP-based_CSs}

\paragraph{Weak invariance principle for sequential testing and confidence sequences.} Theorem 10 in \cite{bibaut2021sequential}  uses \cite{mcleish1974dependent}'s weak invariance principle for martingale-difference triangular arrays to provide a method for constructing non-parametric asymptotic confidence sequences from a confidence sequence for the Wiener process. We present the result here in the case that $X_1,X_2,\ldots$ is an $\mathfrak{F}$-adapted sequence with $\mathrm{Var}(X_t \mid \mathcal{F}_{t-1}) = 1$, $\psi_t = E[X_t \mid \mathcal{F}_{t-1}]$ and $H_0 : \psi_t = 0\ \forall t \geq 1$, that is $(X_t)$ is a martingale difference sequence.

Let $u_0 \in [0,1]$ and et $(c(u))_{u \in [u_0,1]}$ be a symmetric $(1-\alpha)$-confidence sequence for the Wiener process on $[u_0,1]$, that is $\Pr[\forall u \in [u_0,1], |W(u)| \leq c(u)] \geq 1-\alpha$. As a relatively direct corollary of theorem 3.2 in \cite{mcleish1974dependent}, it holds that 
\begin{align}
\lim_{T  \to \infty} \Pr\left[ \forall t \in \mathbb{N} \cap [u_0 T, T] |S_t| \leq \sqrt{T} c(t/T) \right] = \alpha.
\end{align}
Here $T$ plays the role of the maximum runtime of the experiment, while $u_0$ is the fraction of $T$ the experimenter uses as burn-in time. The need for $T$ is a theoretical limitation, although it might not be a practical one, as experimenters generally have a  time budget or sample size budget to spend on a trial. In the case of a martingale data sequence, we conjecture that concentration-inequality-based methods similar to the ones used to prove theorem 2 from \cite{robbins1970boundary} could be used to show $\lim_{T  \to \infty} \Pr[ \forall t \in \mathbb{N} \cap (T, \infty),\ |S_t| \leq \sqrt{T} c(t/T) ] = 0$, and therefore get rid of the need for a maximum experiment runtime. However, doing so might be more complex in the originally intended martingale-difference array setting.

\paragraph{Strong invariance principle for asymptotic time-uniform confidence sequences.} \cite{smith} introduce the use of strong invariance principles (also known as almost sure invariance principles or strong approximation results) to construct confidence sequences in nonparametric settings. One key contribution of their work is to introduce a definition of asymptotic confidence sequence (AsympCS). They say that a sequence $(c_t)_{t \in \mathbb{N}}$ is a symmetric $(1-\alpha)$-asymptotic confidence sequence for the partial sum process $(S_t)_{t \in \mathbb{N}}$ if there exists a $(1-\alpha)$-exact confidence sequence $(c_t^*)_{t \in \mathbb{N}}$ for $(S_t)_{t \in \mathbb{N}}$ such that $c_t / c^*_t \rightarrow 1$ almost surely.

Successive versions of this article use different strong approximation results. Starting from version 5, they have been using \cite{strassen1967}'s strong invariance principle for martingales and allows for asymptotic confidence sequences for the general setting where $(S_t)_{t \in \mathbb{N}}$ is a martingale. Starting from version 7, they also include type I guarantees for sequences of delayed start confidence sequences. Specifically, they replicate with their techniques (and lift some assumptions for) the type-I error guarantee for sequences of delayed-start normal-mixture sequences, initially proven under martingale data in version 1 of the current paper. They also extend the guarantee \eqref{eq:delayed_start_rmlSPRT_i.i.d.} for sequences of the delayed-start running MLE SPRT sequence to martingale data. 

In our view, the main focus of \cite{smith} is proposing a novel definition of AsympCS that is asymptotically close in \emph{parameter space} to an exact $1-\alpha$ confidence sequence for the parameter of interest. Meanwhile, our work focuses on the testing properties, that is, type-I error and expected rejection time, of \emph{sequences} of confidence sequences or of tests. In particular, the current version of their paper, version 7 at the time of writing of this work, does not provide a rejection time analysis.  One difference between our setting and theirs is that we impose the normalization condition of the conditional variance, which they don't. We use this condition heavily in our rejection time analysis. We leave to future work the discussion of whether this condition is actually necessary for a rejection time analysis.

\section{Related Literature}\label{sec:lit}
Classical approaches to hypothesis testing have predominantly dealt with experiments of fixed, predetermined sample sizes, which we refer to as the \textit{fixed-n} kind. The emphasis on \textit{fixed-n} tests by early pioneers such as Fisher is presumably a consequence of the motivating applications that drove the development of hypothesis testing procedures in the first half 19th century, in which outcomes of an experiment were only available long after the experiment had been designed, such as in agricultural research \citep{armitage2}. As tests could only be performed once, \textit{fixed-n} tests were designed to maximize power subject to a type-I error constraint \citep{neymanpearson}. Increasingly in modern experiments, however, observations from experimental units become available sequentially instead of simultaneously, providing many opportunities to perform a test instead of just one. The application of \textit{fixed-n} tests to sequential designs is made difficult because it requires making an undesirable trade-off balancing the competing objectives of detecting large effects early and detecting small effects eventually. Performing the test later risks exposing many experimental units to a potentially large and harmful treatment effect, while performing the test early risks a high type-II error for small effects. These desires have led to bad statistical practices whereby fixed-n procedures are naively applied to accumulating sets of data, see \cite{peeking} for a discussion pertaining to online A/B tests, which sacrifice type-I error guarantees \citep{armitage}, permitting the analyst to incorrectly sample to a foregone conclusion \citep{anscombe}.

For modern sequential designs, sampling until a hypothesis is proven or disproven appears to be a very natural form of scientific inquiry, which requires testing procedures to preserve their type-I/II error guarantees under continuous monitoring. Sequential inference is fundamentally tied to the theory of martingales \citep{ramdaskoolen}. A test martingale is a statistic that is a nonnegative supermartingale under the null hypothesis.  Ville's inequality \citep{ville} is then used to bound the supremum of the process to provide a time-uniform type-I error guarantee. Research into sequential analysis in the statistics literature began with the introduction of the sequential probability ratio test (SPRT) \citep{wald, waldsequential}. Although Wald did not reference martingale theory in the exposition of the SPRT, the connection is clear in hindsight by observing that the likelihood ratio is a nonnegative supermartingale under the null. The simple-vs-simple SPRT enjoys the optimality property of being the sequential test that minimizes the average sample number (expected stopping time) among all sequential tests with no larger type-I/II error probabilities \citep{waldwolfowitz}. This is extended to the continuous-time version in \cite{dkw}.

The SPRT for simple-vs-simple testing problems and the mixture SPRT (mSPRT) for composite testing problems can be interpreted as Bayes factors \citep{jeffreys, kass}, forming a bridge between Bayesian, frequentist, and conditional frequentist approaches to sequential testing \citep{conditional_frequentist_simple, conditional_frequentist_nested}. The SPRT also appears in Bayesian decision-theoretic approaches to sequential hypothesis testing in which there is a constant cost per observation \citep{waldwolfowitzbayes, bergerdecision}. However, care must be taken when specifying priors in composite testing problems, should one seek to have strict frequentist guarantees \cite{deHeide2021}. Composite tests in statistical models with group invariances can often be reduced to simple hypothesis tests by constructing \emph{invariant SPRTs} \citep{invariantsprt} based on a maximally invariant test statistic \citep{lehmann2005testing, LehmCase98}. These invariant SPRT test statistics can be obtained as Bayes factors by using the appropriate right-Haar priors on nuisance parameters in group invariant models \citep{hendriksen}. Such arguments were used by \citep{robbins1970statistical} to develop sequential tests for location-scale families with unknown scale parameters.

Confidence sequences \citep{darling67} can be obtained by inverting a sequential test, and sequential $p$-values can be obtained by tracking the reciprocal of the supremum of the test martingale. Together, these generalize the coverage and type-I guarantees held by \text{fixed-n} confidence intervals and $p$-values to hold uniformly through time. Procedures with these guarantees are appropriately referred to as ``anytime valid." Relationships between test-martingales, sequential $p$-values and Bayes factors are discussed in \cite{shafer}. Nonparametric confidence sequences under sub-Gaussian and Bernstein conditions are provided in \cite{howard2021}. These results are nonasymptotic, yielding valid confidence sequences for all times, but may be conservative. Confidence sequences for quantiles and anytime-valid Kolmogorov-Smirnov tests are provided in \cite{howardquantile}. \cite{smith} obtain asymptotic confidence sequences, in the sense that the intervals converge almost surely to a valid confidence sequence with an error that is orders smaller than the width of the latter. They achieve this by approximating the sample average process by a Gaussian process using strong invariance principles \citep{strassen1964invariance, strassen1967, Komlos1975,Komlos1976}, like us. Their focus is on having approximate confidence sequence width, which need not translate to type-I error guarantees. In particular, there is a risk of rejecting too early when the cumulative sum does not look normal yet. Moreover, they only guarantee that a similar-width confidence sequence has at-least-$\alpha$ coverage, but do not characterize its power, only that the width has the right rate dependence on $t$. Therefore, at the same time, if we do wait, the confidence sequences can be overly conservative.

For certain continuous-time martingales, Ville's inequality is an equality \citep[lemmas 1 and 2]{robbins1970boundary}. For discretely observed martingales, however, Ville's inequality is generally strict, meaning the type-I-error guarantees it yields for test martingale are conservative. The conservativeness follows from the amount by which the stopped sum process exceeds the rejection boundary (zero in the continuous case) and is often referred to as the ``overshoot'' problem with the SPRT \citep{siegmundbook}. Understanding the size of the overshoot is key to understanding how conservative existing bounds are on type-I error and expected stopping times. \cite{wald}'s approximation to the type-I error is obtained by simply ignoring the overshoot. \cite{siegmund75} obtains an approximation to the type-I error for the simple-vs-simple SPRT in exponential-family models as complete asymptotic expansions in powers of $\alpha^{-1}$ with exponentially small remainder as $\alpha \rightarrow 0$.  With mSPRTs the rejection boundary is curved, and studying the distribution of the overshoot is often tackled via nonlinear renewal theory \citep{woodroofe67, woodroofebook, zhang88}. As $\alpha\rightarrow 0$, \cite{laisiegmund1977, laisiegmund1979} derive asymptotic approximations to the expected value and distribution function of the nmSPRT stopping time under the null so as to study the type-I error resulting from truncated nmSPRT tests. Similar results for the expected stopping times can be found in \cite{hagwoodwoodrofe}. To our knowledge existing work has focused on asymptotic ($\alpha\rightarrow 0$) approximations to moments of stopping times for parametric SPRTs which yield sharper results than \cite{wald}'s when neglecting the overshoot. While previous authors also use these tools to obtain type-I errors for truncated sequential tests, no attention has been given to calibrating the type-I error for open-ended sequential tests.

As trends in online experimentation shift toward streaming approaches, sequential approaches to A/B testing have seen increased adoption \citep{johari2022always, lindon, lindon20}. In other applications, particularly in medicine, it may not be possible to test after every new observation. In clinical trials, a small number of interim analyses may be planned, which does not warrant a fully sequential test. Instead, group sequential tests \citep{pocock,obrien,demets,jennison1999group} can be performed which provide a calibrated sequential test over a fixed and finite number of analyses. Analogous to confidence sequences, \emph{repeated confidence intervals} provide strict coverage uniformly across all interim analyses \citep{rci, jennison84}. These procedures are useful when testing on a certain cadence, such as daily, suffices and when a terminal endpoint of the experiment is known. They are, however, not as flexible as fully sequential procedures as they do not allow the experiment to continue past the final analysis, having fully spent their $\alpha$-budget.

Test martingales are closely related to \emph{e-processes}. An e-variable is a random variable (or statistic) that has expectation at most 1 under the null hypothesis \citep{grunwald}. An e-process is a nonnegative process, upper bounded by a nonnegative supermartingale, such that the stopped process is an e-variable under any stopping rule \citep{ruf22}, although it itself may not be a nonnegative supermartingale \citep{ramdasexchange}. Thanks to this property it is possible to build sequential tests from e-processes.  \cite{gametheoryav} provide a review of test martingales, e-processes, anytime valid inference and game theoretic probability and its applications to sequential testing. See, for example, log-rank tests \citep{schure}, contingency tables \citep{schure} and changepoint detection \citep{shin}. See also the running-MLE sequential likelihood ratio test of \cite{wasserman2020universal}.

\bibliography{biblio}
\bibliographystyle{plainnat}

\appendix

\section{Proofs}

\section{Proofs of the type-I error results}

\subsection{Strong invariance principle}

The proof of \cref{thm:typeI_error} relies on the following corollary of \cite{strassen1967}'s strong invariance principle theorem 1.3.

\begin{proposition}\label{prop:sip}[Strong invariance principle]
Suppose that \cref{asm:sip} holds. Then, we can enlarge the underlying probability space so that it supports a standard Wiener process $W$ such that
$S_t^0 = W(t) + o( \sqrt{t / \log \log t})$ almost surely, as $t \to \infty$.
\end{proposition}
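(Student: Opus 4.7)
The strategy is to invoke Theorem~1.3 of \cite{strassen1967} to couple the martingale $(S_t^0)_{t\ge1}$ (with $X_s^0 := X_s - \psi_s$) to a standard Brownian motion $W$ at the intrinsic clock $V_t = \sum_{s=1}^t v_s$, and then transfer the resulting approximation from $V_t$ to $t$ using the almost-sure closeness of $V_t$ and $t$ together with the modulus of continuity of $W$.

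First, $(S_t^0)$ is a square-integrable $\mathfrak{F}$-martingale with predictable quadratic variation $V_t$, and condition \eqref{eq:sip_tails} of \cref{asm:sip} is precisely the Lindeberg-type tail control required by Theorem~1.3 of \cite{strassen1967}. That theorem yields, after possibly enlarging the probability space, a standard Wiener process $W$ such that almost surely
\[
S_t^0 - W(V_t) = o\!\left(\sqrt{f(V_t)}\,\log V_t\right) \qquad \text{as } t \to \infty.
\]
Using the assumed decay $f(t) = O(t (\log t)^{-4} (\log\log t)^{-2})$ together with $V_t/t \to 1$ a.s.\ (which follows from \eqref{eq:sip_var} since $\sqrt{tf(t)}\log t = O(t/(\log t \log\log t)) = o(t)$), this right-hand side is of order $O(\sqrt{t}/(\log t \log\log t))$, which is $o(\sqrt{t/\log\log t})$.

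Second, to pass from $W(V_t)$ to $W(t)$, set $a_t := t/(\log t \log\log t)$, which is nondecreasing with $a_t/t$ nonincreasing. The Cs\"org\H{o}--R\'ev\'esz global modulus-of-continuity theorem for Brownian motion then yields, almost surely,
\[
\sup_{0 \le s \le t - a_t}\;\sup_{0 \le h \le a_t} |W(s+h) - W(s)| \;=\; O\!\left(\sqrt{a_t(\log(t/a_t) + \log\log t)}\right) \;=\; O\!\left(\sqrt{t/\log t}\right).
\]
By \eqref{eq:sip_var} and the $f$-bound, $|V_t - t| = o(a_t)$ a.s., so for almost every $\omega$ there is $T(\omega)$ with $|V_t - t| \le a_t$ for every $t \ge T(\omega)$; hence $|W(V_t) - W(t)| = O(\sqrt{t/\log t}) = o(\sqrt{t/\log\log t})$ almost surely. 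The triangle inequality then combines the two displays into the desired $S_t^0 = W(t) + o(\sqrt{t/\log\log t})$.

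The main obstacle is the time-change bookkeeping in the second step: the Cs\"org\H{o}--R\'ev\'esz result requires a \emph{deterministic} envelope for the size of the time increment, so one must use the almost-sure statement $|V_t - t| = o(a_t)$ to dominate the random gap by $a_t$ eventually along each full-probability sample path before the increment bound applies. Once the proper envelope is in place, the bookkeeping on the rates is mechanical because the Strassen error already sits strictly below $\sqrt{t/\log\log t}$ under the imposed decay of $f$.
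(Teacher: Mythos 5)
Your proof follows essentially the same route as the paper's: couple $S_t^0$ to $W(V_t)$ via Theorem~1.3 of \cite{strassen1967}, then transfer from the intrinsic clock $V_t$ to $t$ using \eqref{eq:sip_var} and a Brownian increment bound (the paper uses Strassen's own Lemma~4.2 where you use Cs\"org\H{o}--R\'ev\'esz; both give a bound that is $o(\sqrt{t/\log\log t})$, so this difference is immaterial). One correction, though: Theorem~1.3 of \cite{strassen1967} gives the coupling error $S_t^0 - W(V_t) = o\bigl((V_t f(V_t))^{1/4}\log V_t\bigr)$, not $o\bigl(\sqrt{f(V_t)}\,\log V_t\bigr)$ as you state, so your claim that this term is $O(\sqrt{t}/(\log t\,\log\log t))$ is too strong. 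This is harmless here only because the decay $f(t)=O(t(\log t)^{-4}(\log\log t)^{-2})$ is calibrated precisely so that $(V_t f(V_t))^{1/4}\log V_t = O(\sqrt{V_t/\log\log V_t}) = o(\sqrt{t/\log\log t})$ once $V_t \sim t$; with the correct rate the remainder sits exactly at, rather than strictly below, the target order, and the conclusion still follows.
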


\begin{proof}[Proof of \cref{prop:sip}] 
From theorem 1.3 in \cite{strassen1967}, the probability space can be enlarged so that it supports a standard Wiener process $W$ such that it holds almost surely that
\begin{align}
S^0_t  - W(V_t) = o((V_t f(V_t))^{1/4} \log V_t) \qquad \text{ as } t \to \infty.
\end{align}
From \eqref{eq:sip_var} and lemma 4.2 in \cite{strassen1967}, 
\begin{align}
W(V_t) - W(t) = o((t f(t))^{1/4} \log t) \qquad \text{ as } t \to \infty.
\end{align}
Therefore, it holds almost surely that, as $t \to \infty$, 
\begin{align}
S^0_t - W(t) =& o \left( (V_t f(V_t))^{1/4} \log V_t + (t f(t))^{1/4} \log t \right) \\
=& o \left( \sqrt{\frac{V_t}{\log \log V_t}} + \sqrt{\frac{t}{\log \log t}}\right) \\
=& o \left( \sqrt{\frac{t}{\log \log t}} \right)
\end{align}
as from \eqref{eq:sip_var}, $V_t \sim t$ as $t \to \infty$ almost surely.
\end{proof}

\subsection{Proof of the delayed-start boundary crossing probabilities Wiener processes (\cref{lemma:typeIerr_delayed_Wiener})}

\begin{proof}[Proof of \cref{lemma:typeIerr_delayed_Wiener}]
We will make use of the fact that the process $W_{t_0}$ defined for any $s \geq 0$ by $W_{t_0}(s) = W(s t_0) / \sqrt{t_0}$ is a standard Wiener process.

\subsubsection{Proof of the claim on $\widetilde{Y}^\rmlSPRT$} We have that
\begin{align}
&\Pr\left[\sup_{t \geq t_0} \widetilde{Y}^\rmlSPRT (t) \geq - \log \widetilde{\alpha}_1(\alpha) - \frac{1}{2} \log t_0 \right] \\
=& \Pr \left[ \exists s \geq 1 : \frac{1}{2} \left( \frac{W(s t_0)^2}{s t_0} - \log s \right) \geq - \log \widetilde{\alpha}_1(\alpha) \right] \\
=& \Pr \left[ \exists s \geq 1 : \frac{1}{2} \left( \frac{W_{t_0}(s)^2}{s} - W_{t_0}(1)^2 - \log s  \right) \geq -\log \widetilde{\alpha}_1(\alpha) - \frac{1}{2} W_{t_0}(1)^2  \right] \\
=& E \left[ \Pr \left[ \exists s \geq 1 : \frac{1}{2} \left( \frac{W_{t_0}(s)^2}{s} - W_{t_0}(1)^2 - \log s  \right) \geq -\log \widetilde{\alpha}_1(\alpha) - \frac{1}{2} W_{t_0}(1)^2  \mid W_{t_0}(1)^2 \right] \right] \\
=& E\left[ \exp \left(-\log \widetilde{\alpha}_1(\alpha) - \frac{1}{2} W_{t_0}(1)^2 \right)_+ \right],
\end{align}
where the last line follows, via Ville's equality, from the fact that conditional on $W_{t_0}(1)$, the process $(\exp((W_{t_0}(s)^2 / s - W_{t_0}(1)^2 - \log s) / 2)_{s \geq 1}$ is a time-continuous positive martingale with initial value 1 and continuous sample paths.
For any $a > 0$, we have that
\begin{align}
&E\left[ \exp\left( - \left( a - \frac{1}{2} W_{t_0}(1)^2 \right)_+ \right)\right] \\
=& \int_{-\sqrt{2a}}^{\sqrt{2 a}} \frac{1}{\sqrt{2 \pi}} \exp\left( - a + \frac{1}{2} z^2 - \frac{1}{2} z^2 \right) dz + 2 \left( 1 - \Phi(\sqrt{2 a}) \right) \\
=& 2 \exp(-a) \sqrt{\frac{a}{\pi}} + 2 \left( 1 - \Phi(\sqrt{2 a}) \right),
\end{align}
which yields the claim.

\subsubsection{Proof of the claim on $Y^{\nmSPRT}$} For any $a > 0$, we have that
\begin{align}
&\Pr \left[ \sup_{t \geq t_0} Y^{\nmSPRT}_{t_0, \eta, t} \geq a \right] \\
=& \Pr\left[ \exists t \geq t_0 : \frac{1}{2} \left( \frac{W(t)^2}{t + \lambda} - \frac{W(t_0)^2}{t_0 + \lambda} - \log \frac{t + \lambda}{t_0 + \lambda}\right) + \frac{1}{2} \left( \frac{W(t_0)^2}{t_0 + \lambda} - \log \frac{t_0 + \lambda}{\lambda} \right) \geq a \right] \\
=& \Pr\left[ \exists s \geq 1 : \frac{1}{2} \left( \frac{W_{t_0}(s)^2}{s  + \eta} - \frac{W_{t_0}(1)^2}{1 + \eta} - \log \frac{s + \eta}{1 + \eta}\right) + \frac{1}{2} \left( \frac{W_{t_0}(1)^2}{1 + \eta} - \log \frac{1 + \eta}{\eta} \right) \geq a \right]\\
=& E \left[ \Pr\left[ \exists s \geq 1 : \frac{1}{2} \left( \frac{W_{t_0}(s)^2}{s  + \eta} - \frac{W_{t_0}(1)^2}{1 + \eta} - \log \frac{s + \eta}{1 + \eta}\right) \right. \right. \\
&\left. \left. \qquad \qquad \geq a + \frac{1}{2} \log \frac{1+\eta}{\eta} - \frac{1}{2} \frac{W_{t_0}(1)^2}{1+\eta}\mid W_{t_0}(1) \right] \right]\\
=& E \left[ \exp\left(-\left(a_\eta - \frac{1}{2(1+\eta)} W_{t_0}(1)^2 \right)_+ \right) \right],
\end{align}
with $a_\eta = a + 2^{-1} \log ((1+\eta)/\eta)$, where the last line follows, via Ville's equality, from the fact that, conditional on $W_{t_0}(1)$, the process
\begin{align}
\left( \exp\left( \frac{1}{2} \left( \frac{W_{t_0}(s)^2}{s  + \eta} - \frac{W_{t_0}(1)^2}{1 + \eta} - \log \frac{s + \eta}{1 + \eta}\right) \right) \right)_{s \geq 1}
\end{align}
is a time-continuous positive martingale with continuous sample paths and initial value 1. We have that
\begin{align}
&E \left[ \exp\left(-\left(a_\eta - \frac{1}{2(1+\eta)} W_{t_0}(1)^2 \right)_+ \right) \right] \\
=& \int_{-\sqrt{2(1+\eta) a_\eta}}^{\sqrt{2 (1+\eta) a_\eta}} \exp\left(-a_\eta + \frac{1}{2} \frac{z^2}{1+\eta} - \frac{1}{2} z^2 \right) \frac{1}{\sqrt{2 \pi}} dz + 2 \left( 1 - \Phi\left( \sqrt{2 ( 1 +\eta) a_\eta } \right) \right) \\
=& \int_{-\sqrt{2(1+\eta) a_\eta}}^{\sqrt{2 (1+\eta) a_\eta}} \exp\left(-a_\eta - \frac{1}{2} \frac{\eta}{1+\eta} z^2 \right) \frac{1}{\sqrt{2 \pi}} dz + 2 \left( 1 - \Phi\left( \sqrt{2 ( 1 +\eta) a_\eta } \right) \right) \\
=& \exp(-a_\eta) \sqrt{\frac{1+\eta}{\eta}} \left( 2 \Phi\left(\sqrt{2 \eta a_\eta} \right) -1 \right) + 2 \left( 1 - \Phi\left( \sqrt{2 ( 1 +\eta) a_\eta } \right) \right) \\
=& \exp(-a)  \left( 2 \Phi\left(\sqrt{2 \eta a_\eta} \right) -1 \right) + 2 \left( 1 - \Phi\left( \sqrt{2 ( 1 +\eta) a_\eta } \right) \right),
\end{align}
which yields the claim.
\end{proof}

\subsection{Proof of the type-I error theorem (\cref{thm:typeI_error})}

The proof of the type-I claim for the delayed-start normal-mixture SPRT makes use of the following technical lemma.

\begin{lemma}\label{lemma:h_Lipschitz}
For any $\eta > 0$, the functions $-\log h_1$ and $-\log h_{2, \eta}$ is  1-Lipschitz on $\mathbb{R}$.
\end{lemma}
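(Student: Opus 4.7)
\textbf{Proof plan for \cref{lemma:h_Lipschitz}.} Since $-\log h$ is 1-Lipschitz iff $|h'(a)/h(a)| \leq 1$, i.e., $|h'(a)| \leq h(a)$, the plan is to compute $h'_1$ and $h'_{2,\eta}$ explicitly, show that each derivative is (up to sign) exactly equal to one of the two non-negative summands of the corresponding $h$, and conclude from non-negativity of the remaining summand.

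For $h_1$, I would differentiate the two pieces separately. The derivative of $2 e^{-a} \sqrt{a/\pi}$ is $-2 e^{-a} \sqrt{a/\pi} + e^{-a}/\sqrt{\pi a}$, while the derivative of $2(1-\Phi(\sqrt{2a}))$ is $-2\phi(\sqrt{2a})/\sqrt{2a} = -e^{-a}/\sqrt{\pi a}$, using $\phi(x)=(2\pi)^{-1/2}e^{-x^2/2}$. The two $e^{-a}/\sqrt{\pi a}$ contributions cancel, leaving $h_1'(a) = -2 e^{-a}\sqrt{a/\pi}$. Since this is (up to sign) the first summand of $h_1(a)$ and the second summand $2(1-\Phi(\sqrt{2a}))$ is non-negative, $|h_1'(a)| \leq h_1(a)$.

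The argument for $h_{2,\eta}$ is analogous but bookkeeping-heavier. Set $b = b(a) = a + \tfrac{1}{2}\log((1+\eta)/\eta)$, so $b'(a)=1$. I would differentiate the two pieces and then use the identities
\begin{align}
e^{-a}\phi(\sqrt{2\eta b}) = \tfrac{1}{\sqrt{2\pi}} e^{-a-\eta b}, \qquad \phi(\sqrt{2(1+\eta) b}) = \tfrac{1}{\sqrt{2\pi}} e^{-a-\eta b}\sqrt{\eta/(1+\eta)},
\end{align}
which turn the two $\phi$-terms arising from the chain rule into $\sqrt{\eta/(\pi b)}\, e^{-a-\eta b}$ each; these cancel. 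What survives is
\begin{align}
h_{2,\eta}'(a) = -e^{-a}\bigl\{2\Phi(\sqrt{2\eta b}) - 1\bigr\},
\end{align}
whose absolute value is exactly the first summand of $h_{2,\eta}(a)$. Since the second summand $2(1-\Phi(\sqrt{2(1+\eta)b}))$ is non-negative, $|h_{2,\eta}'(a)| \leq h_{2,\eta}(a)$.

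The main (very minor) obstacle is the domain of definition: $h_1$ is only naturally defined on $[0,\infty)$ and $h_{2,\eta}$ on $[-\tfrac12\log((1+\eta)/\eta),\infty)$, where the square roots make sense. At the lower endpoint of each natural domain, both $h$-values equal $1$ (so $-\log h = 0$), so extending by the constant value $1$ on the rest of $\mathbb{R}$ yields a continuous, piecewise-smooth extension that is trivially 1-Lipschitz on the complement; combined with the bound $|(\log h)'|\leq 1$ on the natural domain, this gives the 1-Lipschitz property of $-\log h_1$ and $-\log h_{2,\eta}$ on all of $\mathbb{R}$. The cancellations above are not accidental: they correspond to the fact that both $h$'s are boundary-crossing probabilities of exponential martingales, whose derivative in the threshold parameter $a$ recovers precisely the ``overshoot'' term.
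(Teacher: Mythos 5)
Your proof is correct, and it takes a genuinely different route from the paper's. The paper never differentiates the closed-form expressions: it instead uses the representation established in the proof of \cref{lemma:typeIerr_delayed_Wiener}, namely $h_1(a)=E[\exp(-(a-\tfrac12 X^2)_+)]$ and $h_{2,\eta}(a)=E[\exp(-(a+c_\eta-\tfrac{1}{2(1+\eta)}X^2)_+)]$ with $X\sim\mathcal N(0,1)$, differentiates under the expectation, and reads off
\begin{align}
(-\log h)'(a)=\frac{E\bigl[\bm 1\{\cdot>0\}\exp(-(\cdot)_+)\bigr]}{E\bigl[\exp(-(\cdot)_+)\bigr]}\in(0,1),
\end{align}
since the numerator's integrand is the denominator's integrand multiplied by an indicator. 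That argument is shorter, requires no cancellation bookkeeping, and automatically defines $h$ on all of $\mathbb{R}$ (the positive part makes the expectation constant equal to $1$ below the natural threshold), so the domain-extension issue you flag never arises. Your computation buys something too: it identifies $h'$ exactly as (minus) one of the two summands of $h$ — i.e., $h_1'(a)=-2e^{-a}\sqrt{a/\pi}$ and $h_{2,\eta}'(a)=-e^{-a}\{2\Phi(\sqrt{2\eta b})-1\}$ — which is a sharper statement than $|h'|\le h$ and makes the overshoot interpretation explicit; your cancellations check out, and your constant-$1$ extension below the natural domain coincides with the paper's implicit one (both $h$'s equal $1$ at the endpoint), so the global Lipschitz claim is fully justified.
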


\begin{proof}[Proof of lemma \ref{lemma:h_Lipschitz}]
From the derivation in the proof of \cref{lemma:typeIerr_delayed_Wiener},
\begin{align}
h_1(a) =& E\left[ \exp\left(- \left(a - \frac{1}{2}X^2\right)_+\right)\right]\\
\text{and} \qquad  h_{2,\eta}(a) =& E\left[ \exp\left(- \left(a + c_\eta - \frac{1}{2(1+\eta)}X^2\right)_+\right)\right],
\end{align}
where $X \sim \mathcal{N}(0,1)$ and $c_\eta = 0.5 \log ((1+\eta) / \eta)$.
Differentiating under the integral sign, we obtain that,
\begin{align}
(-\log h_1)'(a) = \frac{E\left[ \bm{1} \left\lbrace a - \frac{1}{2}X^2 > 0 \right\rbrace \exp \left(  -\left( a - \frac{1}{2} X^2\right)_+ \right) \right]}{E\left[ \exp \left(  -\left( a - \frac{1}{2} X^2\right)_+ \right) \right]} \in (0,1),
\end{align}
and
\begin{align}
&(- \log h_{2, \eta})'(a) \\
 =& \frac{E\left[ \bm{1} \left\lbrace  a + c_\eta - \frac{1}{2(1+\eta)}X^2 > 0 \right\rbrace  \exp\left( - \left( a + c_\eta - \frac{1}{2(1+\eta)}X^2 > 0\right)_+ \right) \right]}{E\left[  \exp\left( - \left( a + c_\eta - \frac{1}{2(1+\eta)}X^2 > 0\right)_+ \right) \right]} \in (0,1).
\end{align}
\end{proof}

\begin{proof}[Proof of \cref{thm:typeI_error}]
Let $((t_{0,m}, \alpha_m, \lambda_m))_{m \geq 1}$ be a sequence such that $t_{0,m} \to \infty$, and $\alpha_m \in [0,1],\ \lambda_m > 0$ for every $m$.

Let 
\begin{align}
A_{1,m} &= \sup_{t \geq t_{0,m}} Y_t^{\rmlSPRT} + \frac{1}{2} t_{0,m}, \qquad \qquad \widetilde A_{1,m} = \sup_{t \geq t_{0,m}} \widetilde Y_t^{\rmlSPRT} + \frac{1}{2} t_{0,m}\\
A_{2, m} &= \sup_{t \geq t_{0,m}} Y_{\lambda_m, t}^{\nmSPRT}, \qquad \text{and} \qquad  \widetilde A_{2, m} = \sup_{t \geq t_{0,m}} \widetilde Y_{\lambda_m, t}^{\nmSPRT}.
\end{align}

\paragraph{A preliminary observation} Under $H_0$, it holds that, for $i=1,2$,
\begin{align}
&|A_{i,m} - \widetilde A_{i,m}|\\
\leq & \frac{1}{2} \sup_{t \geq t_{0,m}} \frac{|(S^0_t)^2 - W(t)^2|}{t} \\
\leq & \frac{1}{2} \sup_{t \geq t_{0,m}} \frac{|S^0_t - W(t)|}{\sqrt{t / \log \log t}} \times \left( \sup_{t \geq 1} \frac{2 W(t)}{\sqrt{t \log \log t}} + \sup_{t \geq t_{0,m}} \frac{|S^0_t - W(t)|}{\sqrt{t \log \log t}} \right)\\
=& o(1) \text{ a.s. as } m \to \infty 
\end{align}
since $\sup_{t \geq t_{0,m}} |S^0_t - W(t)| / \sqrt{t / \log \log t} = o(1)$  a.s. as  $m \to \infty$ from proposition \ref{prop:sip}, and $\sup_{t \geq 1}  W(t) / \sqrt{t \log \log t} < \infty$ a.s. from the law of the iterated logarithm.

\paragraph{Transformed statistics with a fixed limit distribution}
Let $\eta_m  = \lambda_m / t_{0,m}$,
\begin{align}
B_{1,m} =& - \log h_1(A_{1,m}), \qquad \qquad \widetilde B_{1,m} = - \log h_1(\widetilde  A_{1,m}),\\
B_{2,m} =& - \log h_{2, \eta_m}(A_{2,m}), \qquad \text{and} \qquad \widetilde B_{2,m} = - \log h_{2, \eta_m}(\widetilde  A_{2,m}).
\end{align}
From the fact that $-\log h_1$ and $-\log h_{2, \eta}$ are 1-Lipschitz (lemma \ref{lemma:h_Lipschitz}) and from the preliminary observation, 
\begin{align}
B_{i,m} = \widetilde B_{i,m} + o(1) \text{ a.s. for } i =1,2. 
\end{align}
For any $\alpha \in [0,1]$, by definition of $\widetilde \alpha_1$ and $\widetilde \alpha_{2, \eta_m}$, and from lemma \ref{lemma:typeIerr_delayed_Wiener}, for $i=1,2$,
\begin{align}
\Pr\left[ \widetilde B_{i,m} \geq -\log \alpha \right] = \alpha,
\end{align}
that is, $\widetilde B_{i,m}$ is distributed as $\mathrm{Exp}(1)$ for any $i,m$. Observe that, again by definition of $\widetilde \alpha_1$ and $\widetilde \alpha_{2, \eta_m}$,
\begin{align}
\Pr\left[ A_{1,m} \geq -\log \widetilde \alpha_1(\alpha_m) \right] =& \Pr \left[ B_{1,m} \geq -\log \alpha_m \right] \\
\text{and} \qquad \Pr\left[ A_{2,m} \geq -\log \widetilde \alpha_{2,\eta_m}(\alpha_m) \right] =& \Pr \left[ B_{2,m} \geq -\log \alpha_m \right].
\end{align}
Therefore, the claim will follow if we show that, for $i=1,2$ $\Pr \left[ B_{2,m} \geq -\log \alpha_m \right] / \alpha_m \to 1$ as $m \to \infty$. This is indeed the case as, for $i=1,2$,
\begin{align}
&\Pr \left[ B_{i,m} \geq -\log \alpha_m \right] / \alpha_m \\
=& \Pr \left[ \widetilde B_{i,m} \geq -\log \alpha_m  + o(1) \right] / \alpha_m \\
=& \exp(-o(1)) \\
\to & 1,
\end{align}
where the above equalities hold a.s. and convergence is as $m \to \infty$.
\end{proof}

\section{Proofs of the representation results}

\subsection{Proof of the It\^o-like finite difference result (\cref{lemma:discrete_Ito_identity})}

\begin{proof}[Proof of \cref{lemma:discrete_Ito_identity}]
For any $s \in \mathbb{N}$, $\lambda \in \mathbb{R}^+$ such that $s + \lambda > 0$, we have that
\begin{align}
   &\frac{1}{2} \left( \frac{z_{s+1}^2}{s+1+\lambda} - \frac{z_s^2}{s+\lambda} \right) \\
   =& \frac{1}{2} \left(  \frac{z_{s+1}^2 - z_s^2}{s+1+\lambda} - \frac{z_s^2}{(s+1+\lambda)(s+\lambda)}\right)\\
   =& \frac{1}{2} \left( \frac{z_{s+1} + z_s}{s+1+\lambda} (z_{s+1} - z_s) - \frac{z_s^2}{(s+1+\lambda)(s+\lambda)} \right)\\
   =&  \frac{z_s}{s+1+\lambda} (z_{s+1} - z_s) - \frac{1}{2} \frac{z_s^2}{(s+1+\lambda)(s+\lambda)} + \frac{1}{2} \frac{(z_{s+1} - z_s)^2}{s+1+\lambda}\\
   =& \frac{s+\lambda}{s+1+\lambda} \left( \frac{z_s}{s + \lambda} (z_{s+1} - z_s) - \left(\frac{z_s}{s+ \lambda} \right)^2 \right) + \frac{1}{2} \frac{(z_{s+1} - z_s)^2}{s+1+\lambda}.
\end{align}
\end{proof}

\subsection{Proof of the expanded representation result (\cref{thm:expanded_representation})}

\begin{proof}[Proof of \cref{thm:expanded_representation}]
From \cref{thm:rep_non_anticip_magle}, we have that
\begin{align}
Y^{\rmlSPRT}_t =& \sum_{s=0}^{t-1} \frac{s}{s+1} \left( \left(\psi + \psi^0_s \right) \left( \psi + X^0_{s+1} \right) - \frac{1}{2} \left(\psi + \psi^0_s \right)^2 \right) + \frac{1}{2} \sum_{s=1}^t \frac{\left(\psi + X^0_s \right)^2}{s} - \log t + R^{\bias}_{0,t} \\
=& \frac{1}{2} \psi^2 \left\lbrace \sum_{s=0}^{t-1} \frac{s}{s+1} + \frac{1}{2} \sum_{s=1}^{t-1} \frac{1}{s+1} \right\rbrace + \psi \sum_{s=0}^{t-1} X^0_{s+1} \left( \frac{s}{s+1} + \frac{1}{s+1} \right) \\
&+ \sum_{s=0}^{t-1} \frac{s}{s+1} \psi^0_s X^0_{s+1} - \log t - \frac{1}{2} \sum_{s=0}^{t-1} \frac{s}{s+1} \left( \psi^0_{0,s}\right)^2 \\
&+ \frac{1}{2} \left\lbrace \sum_{s=1}^t \frac{(X^0_s)^2}{s} - \log t \right\rbrace + R^\bias_{0,t} \\
=& \frac{1}{2} \psi^2 t + M^{(0)}_t + M^{(1)}_{0,t} - R^\adpt_{0,t} + \Deltaqvar_{0,t} + R^\bias_{0,t} + \frac{1}{2} (X^0_1)^2.
\end{align}
We now turn to the expanded representation of $Y^{\nmSPRT}_{\lambda, t}$. From \cref{thm:rep_non_anticip_magle}, we have that
\begin{align}
Y^\nmSPRT_{\lambda, t} =& \sum_{s=0}^{t-1} \frac{s+\lambda}{s+1+\lambda} \left\lbrace \left(\psi \frac{s}{s+\lambda} + \psi^0_{\lambda, s} \right) ( \psi + X^0_{s+1}) - \frac{1}{2} \left(\psi \frac{s}{s+\lambda} + \psi^0_{\lambda, s} \right)^2 \right\rbrace \\
+&\frac{1}{2} \left\lbrace \sum_{s=1}^t \frac{(\psi + X^0_s)^2}{s+\lambda} - \log \frac{t+\lambda}{\lambda} \right\rbrace + R^\bias_{\lambda, t} \\
=& A + B + C +  M^{(1)}_{\lambda, t} - R^{\adpt}_{t,\lambda} + \Deltaqvar_{\lambda, t}
\end{align}
with 
\begin{align}
A =& \psi^2 \sum_{s=0}^{t-1} \left\lbrace \frac{s}{s+1+\lambda} - \frac{1}{2} \frac{s^2}{(s+1+\lambda)(s+\lambda)} + \frac{1}{2} \frac{1}{s+1+\lambda} \right\rbrace \\
=& \frac{1}{2} \psi^2 t + \frac{1}{2} \sum_{s=0}^{t-1} \frac{2s (s + \lambda) - s^2 + s + \lambda - (s + \lambda)(s+1+\lambda)}{(s+1+\lambda)(s+\lambda)} \\
=& \frac{1}{2} \psi^2 t - \frac{1}{2} \psi^2 \lambda^2 \sum_{s=0}^{t-1} \frac{1}{(s+\lambda)(s+1+\lambda)}\\
= & \frac{1}{2} \psi^2 t - \frac{1}{2} \psi^2 \left( \frac{1}{\lambda} - \frac{1}{t+\lambda} \right) \\
= & \frac{1}{2} \psi^2 t - R^{\skg,1}_{\lambda, t},
\end{align}
\begin{align}
B = & \psi \sum_{s=0}^{t-1} X^0_{s+1} \left\lbrace \frac{s}{s+1+\lambda} + \frac{1}{s+1+\lambda} \right\rbrace \\
=& M^{(0)}_t - M^{\skg}_{\lambda,t},
\end{align}
and
\begin{align}
C = &  \psi \sum_{s=0}^{t-1} \psi^0_{\lambda, s} \left\lbrace \frac{s + \lambda}{s + 1  + \lambda} - \frac{s}{s + 1 + \lambda} \right\rbrace \\
=& R^{\skg, 2}_{\lambda, t}.
\end{align}

\end{proof}

\section{Proofs of results on expected stopping times}

\subsection{Proofs of the results pertaining to the random threshold}

\subsubsection{Proof of \cref{lemma:stopping_time_and_random_thresh}}

\begin{proof}[Proof of \cref{lemma:stopping_time_and_random_thresh}]
We only present the proof for $\tau_1$ as the proof for $\tau_2$ is identical.
    Observe that we can write $\tau_1$ as
\begin{align}
    \tau_1 = \min \left\lbrace t \geq t_0 : Y^\rmlSPRT_t - Y^\rmlSPRT_{t_0} \geq -\log \widetilde{\alpha}_1(\alpha) - Y^\rmlSPRT_{t_0} \right\rbrace,
\end{align}
in which $-\log \widetilde{\alpha}_1(\alpha) - Y^\rmlSPRT_{t_0}$ plays the role of the effective threshold.

We argue that writing the stopping time this way implies the second inequality in the claim, that is
\begin{align}
\left(-\log \widetilde{\alpha}_1(\alpha) - Y^\rmlSPRT_{t_0}\right)_+ \leq
    Y^\rmlSPRT_{\tau_1} - Y^\rmlSPRT_{t_0}.
\end{align} Indeed, this trivially holds by definition of $\tau_1$ if the effective threshold $-\log \widetilde{\alpha}_1(\alpha) - Y^\rmlSPRT_{t_0}$ is non-negative. If it is negative, then $\tau_1 = t_0$, and the inequality also holds. Thus, the second inequality holds in both cases.

We now turn to the first inequality in the claim.
If the threshold is positive, $\tau_1$ must be at least $t_0 + 1$, and then by definition of $\tau_1$, 
\begin{align}
     Y^\rmlSPRT_{\tau_1-1} - Y^\rmlSPRT_{t_0} \leq -\log \widetilde{\alpha}_1(\alpha) - Y^\rmlSPRT_{t_0} = \left(-\log \widetilde{\alpha}_1(\alpha)- Y^\rmlSPRT_{t_0} \right)_+.
\end{align}
If the threshold is non-positive, then $\tau$ equals $t_0$, and then 
\begin{align}
Y^\rmlSPRT_{\tau_1-1 \vee t_0} - Y^\rmlSPRT_{t_0} = 0 \leq 0 = \left(-\log \widetilde{\alpha}_1(\alpha)- Y^\rmlSPRT_{t_0} \right)_+.
\end{align}
Therefore, the first inequality holds in both cases.
\end{proof}

\subsubsection{Proof of  \cref{lemma:random_thresh_lemma1}}

\cref{lemma:random_thresh_lemma1} is the immediate corollary of the following two lemmas.

\begin{lemma}\label{lemma:random_thresh_lemma1_part1}
Suppose \cref{asm:L1_conv_psi}, \cref{asm:L1_conv_vs} and \cref{asm:cond_Lindeberg} hold. Then, as $\psi \sqrt{t_0} \to 0$,
\begin{align}
E\left[ \left(a - \frac{1}{2} \log t_0 - Y^\rmlSPRT_{t_0} \right)_+ \right] =&  E \left[ \left(a - \frac{1}{2} Z^2 \right)_+ \right] + o(1),\\
\text{ and } \qquad E \left[ \left(a  - Y^\nmSPRT_{\lambda, t_0} \right)_+ \right] =& E \left[ \left(a - \frac{1}{2} \log \frac{\eta}{1+\eta} - \frac{1}{2} \frac{Z^2}{1 + \eta} \right)_+ \right] + o(1),
\end{align}
where $Z \sim \mathcal{N}(0,1)$.
\end{lemma}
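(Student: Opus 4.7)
The plan is to reduce both claims to a single application of a martingale CLT followed by bounded/Portmanteau convergence. Using the definitions of $Y^{\rmlSPRT}_{t_0}$ and $Y^{\nmSPRT}_{\lambda,t_0}$ and setting $\eta=\lambda/t_0$, the two expressions inside the positive parts become
\begin{align}
a-\tfrac12\log t_0-Y^{\rmlSPRT}_{t_0}=a-\tfrac12\,S_{t_0}^2/t_0,\qquad a-Y^{\nmSPRT}_{\lambda,t_0}=a+\tfrac12\log\tfrac{1+\eta}{\eta}-\tfrac{1}{2(1+\eta)}S_{t_0}^2/t_0,
\end{align}
and matching the right-hand sides of the lemma is immediate once one notes $-\tfrac12\log(\eta/(1+\eta))=\tfrac12\log((1+\eta)/\eta)$. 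Both expressions are bounded continuous functions of the single quantity $T_{t_0}:=S_{t_0}/\sqrt{t_0}$, namely $g_1(T_{t_0})$ and $g_{2,\eta}(T_{t_0})$ with $g_1(x)=(a-x^2/2)_+$ and $g_{2,\eta}(x)=(a+\tfrac12\log((1+\eta)/\eta)-x^2/(2(1+\eta)))_+$, since the positive-part truncation provides a uniform upper bound (depending on $a$, and also on $\eta$ in the second case).

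The key step is to prove $T_{t_0}\Rightarrow Z\sim\mathcal N(0,1)$. I decompose $S_{t_0}=\psi t_0+B_{t_0}+S^0_{t_0}$ with $B_{t_0}=\sum_{s=1}^{t_0}(\psi_s-\psi)$ and analyze each piece divided by $\sqrt{t_0}$. The first term is $\psi\sqrt{t_0}\to 0$ by hypothesis. For the drift remainder, Jensen and \cref{asm:L1_conv_psi} give $E|\psi_s-\psi|\le\psi\sqrt{E[(\psi_s-\psi)^2]/\psi^2}=\psi\,o(1)$ as $s\to\infty$, so $E[|B_{t_0}|]/\sqrt{t_0}=o(\psi\sqrt{t_0})=o(1)$ by Cesaro, whence $B_{t_0}/\sqrt{t_0}\to 0$ in probability. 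For the noise, \cref{asm:L1_conv_vs} and Markov imply $V_{t_0}/t_0\to 1$ in probability (Cesaro again), and combined with the Lindeberg \cref{asm:cond_Lindeberg} this verifies the hypotheses of the martingale CLT (e.g.\ Hall--Heyde Theorem 3.2), yielding $S^0_{t_0}/\sqrt{V_{t_0}}\Rightarrow Z$; two applications of Slutsky then give $T_{t_0}\Rightarrow Z$. The two expectation identities then follow from the Portmanteau theorem applied to $g_1$ and $g_{2,\eta}$.

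The main obstacle I anticipate lies in the nmSPRT case when $\eta$ is allowed to vary along the indexing sequence (as it will in the asymptotic regimes used subsequently), because then the bounded continuous test function $g_{2,\eta_m}$ itself changes with $m$ and a plain Portmanteau argument does not apply. To handle this I will exploit that $g_{2,\eta}(x)=h_\eta(x^2)$ where $h_\eta(y)=(c_\eta-y/(2(1+\eta)))_+$ is $(2(1+\eta))^{-1}$-Lipschitz, so that the difference in question is dominated by $(2(1+\eta_m))^{-1}E|T_{t_0}^2-Z^2|$ once $T_{t_0}$ and $Z$ are placed on the same space via Skorokhod's representation (reducing weak to a.s.\ convergence after passing to a subsequence); the three cases $\eta_m\to\eta_\infty\in\{0,(0,\infty),\infty\}$ can then be treated separately, with the extreme cases yielding that both sides of the claim converge to the same common limit (either $a_+$ when $\eta_m\to\infty$, or diverge identically when $\eta_m\to 0$).
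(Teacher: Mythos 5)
Your proposal is correct and follows essentially the same route as the paper: both proofs rest on (i) the martingale CLT (Hall--Heyde) applied to $S^0_{t_0}/\sqrt{t_0}$ after verifying the conditional-variance convergence via \cref{asm:L1_conv_vs} and Ces\`aro and the Lindeberg condition via \cref{asm:cond_Lindeberg}, (ii) negligibility of the drift contributions under $\psi\sqrt{t_0}\to 0$, and (iii) the boundedness and continuity of $x\mapsto (a-x^2/2)_+$. The main mechanical difference is in step (ii): you absorb the drift by showing $\psi t_0/\sqrt{t_0}\to 0$ and $B_{t_0}/\sqrt{t_0}\to 0$ in probability and then invoke Slutsky, so that the whole argument runs at the level of weak convergence of $S_{t_0}/\sqrt{t_0}$; the paper instead splits the expectation difference into a term $A$ (handled by bounding $E[|R^{\bias}_{0,t_0}|]$ via \cref{lemma:ERbias_t0} together with a Cauchy--Schwarz bound on $E[|S^0_{t_0}|/\sqrt{t_0}]$) and a term $B$ (the CLT part). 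Both treatments are valid and buy essentially the same thing. One point where your write-up is actually more careful than the paper's: you flag that when $\eta=\lambda/t_0$ varies along the indexing sequence the test function $g_{2,\eta}$ changes with $m$, so a plain Portmanteau argument does not literally apply; the paper dismisses the nmSPRT case with ``the same arguments.'' Your proposed fix (Skorokhod representation plus the $(2(1+\eta))^{-1}$-Lipschitz dependence on $x^2$, with the three regimes of $\eta$ treated separately) is sound, provided you also note that $E[(S^0_{t_0})^2/t_0]\to 1=E[Z^2]$ so that Scheff\'e's lemma upgrades the almost-sure convergence of $T_{t_0}^2$ to $L^1$ convergence; in the regime actually used downstream ($\eta\to\infty$) the Lipschitz constant vanishes and the issue disappears.
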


\begin{lemma}\label{lemma:random_thresh_lemma1_part2} Let $Z \sim \mathcal{N}(0,1)$. For any $a > 0$, $\eta > 0$ it holds that
\begin{align}
E\left[\left(a - \frac{1}{2} Z^2 \right)_+ \right] =& \sqrt{\frac{a}{\pi}} e^{-a} + \left(a - \frac{1}{2} \right) \left(2 \Phi(\sqrt{2 a}) - 1 \right),\\
\text{and} \qquad E\left[ \left(a - \frac{1}{2 ( 1 + \eta)} Z^2 \right)_+ \right] =& \sqrt{\frac{a}{\pi (1 + \eta)}} e^{-(1 + \eta) a} \\
&+ \left(a - \frac{1}{2 ( 1 + \eta)} \right) \left( 2 \Phi(\sqrt{2 (1+\eta) a}) - 1 \right).
\end{align}
\end{lemma}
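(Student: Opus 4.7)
The plan is to prove both identities by direct computation, noting that the positive part $(a - cZ^2)_+$ vanishes outside the symmetric interval $|Z| \leq \sqrt{a/c}$. Setting $u = \sqrt{a/c}$ and letting $\phi$ denote the standard normal density, one writes
\begin{align}
E\left[(a - cZ^2)_+\right] = \int_{-u}^{u} (a - c z^2) \phi(z)\,dz = a\left(2 \Phi(u) - 1\right) - c \int_{-u}^{u} z^2 \phi(z)\,dz,
\end{align}
so the only nontrivial step is evaluating the truncated second moment $\int_{-u}^u z^2 \phi(z)\,dz$.

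To compute this, I would use the identity $z \phi(z) = -\phi'(z)$ and integrate by parts: writing $z^2 \phi(z) = z \cdot (-\phi'(z))$ gives
\begin{align}
\int_{-u}^u z^2 \phi(z)\,dz = -\bigl[z \phi(z)\bigr]_{-u}^{u} + \int_{-u}^{u} \phi(z)\,dz = -2u\phi(u) + \bigl(2\Phi(u) - 1\bigr).
\end{align}
Substituting this back yields
\begin{align}
E\left[(a - cZ^2)_+\right] = (a - c)\bigl(2\Phi(u) - 1\bigr) + 2 c u \phi(u).
\end{align}

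For the first identity I would take $c = 1/2$, so $u = \sqrt{2a}$ and $2cu\phi(u) = \sqrt{2a}\cdot (2\pi)^{-1/2} e^{-a} = \sqrt{a/\pi}\, e^{-a}$, giving the claimed form with coefficient $(a - 1/2)$. For the second identity I would take $c = 1/(2(1+\eta))$, so $u = \sqrt{2(1+\eta) a}$ and $2 c u \phi(u) = \frac{\sqrt{2(1+\eta) a}}{(1+\eta)\sqrt{2 \pi}} e^{-(1+\eta)a} = \sqrt{a / (\pi(1+\eta))}\, e^{-(1+\eta)a}$, producing the stated expression with coefficient $(a - 1/(2(1+\eta)))$.

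There is essentially no obstacle here: the calculation is entirely elementary once one recognizes the integration-by-parts trick for the truncated second moment of a standard normal. The only care needed is to carry the constants through correctly in the scaled second case, and to assemble the final expressions into the exact form stated in the lemma.
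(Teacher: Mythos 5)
Your proof is correct and follows essentially the same route as the paper's: both reduce the computation to the truncated second moment of a standard normal and evaluate it by integration by parts using $z\phi(z) = -\phi'(z)$. The only cosmetic difference is that you parametrize with a general scale $c$ to handle both identities in one computation, whereas the paper derives the integration-by-parts identity once and substitutes it into each case separately.
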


\begin{proof}[Proof of \cref{lemma:random_thresh_lemma1_part1}] We start with the proof of the result for the running MLE test statistic. 
\paragraph{Running MLE threshold} We have that 
\begin{align}
E\left[ \left( a - \frac{1}{2} \log t_0 - Y^{\rmlSPRT}_{t_0} \right)_+ \right] = E \left[ \left(a - \frac{1}{2} Z^2 \right)_+ \right] + A + B,
\end{align}
where
\begin{align}
A =& E \left[\left( a - \frac{1}{2} \frac{S_{t_0}^2}{t_0}  \right)_+ \right] - E \left[ \left( a - \frac{1}{2} \frac{(S^0_{t_0})^2}{t_0} \right)_+\right], \\
\text{and} \qquad B=& E \left[ \left( a - \frac{1}{2} \frac{(S^0_{t_0})^2}{t_0} \right)_+ \right] - E\left[ \left( a - \frac{1}{2} Z^2 \right)_+ \right],
\end{align}
with $Z \sim \mathcal{N}(0,1)$.
We have that
\begin{align}
|A| \leq & E \left[ \frac{|S_{t_0}^2 - \breve{S}_{t_0}^2|}{2 t_0} \right] + E \left[ \frac{|(S^0_{t_0} + \psi t_0)^2 - (S^0_t)^2|}{2 t_0} \right] \\
\leq & E \left[|R^\bias_{0,t_0}| \right] + \frac{1}{2} \psi^2 t_0 + \psi \sqrt{t_0} E \left[ \frac{|S^0_{t_0}|}{\sqrt{t_0}} \right].
\end{align}
From lemma \ref{lemma:ERbias_t0}, $E \left[|R^\bias_{0,t_0}| \right] = o(1)$. From Cauchy-Schwarz, 
\begin{align}
E \left[ |S^0_{t_0}| / \sqrt{t_0} \right] \leq E\left[ (S^0_{t_0})^2 / t_0 \right]^{1/2} = \left( \frac{1}{t_0} \sum_{s=1}^{t_0} v_s \right)^{1/2} = O(1),
\end{align}
where the last equality follows from \cref{asm:L1_conv_vs} and Ces\`aro's lemma. Therefore, $A=o(1)$ as $\psi \sqrt{t_0} \to 0$. 

We now turn to $B$. Let $\xi_{s,t} = X^0_s / \sqrt{t}$ and $v_{s,t} = E[\xi_{s,t}^2 \mid \mathcal{F}_{s-1}] = v_s / t$. For any $\epsilon > 0$, from Markov's inequality and the triangle inequality,
\begin{align}
\Pr \left[ \left\lvert \frac{1}{t} \sum_{s=1}^t v_s - 1 \right\rvert \geq \epsilon \right] \leq \frac{1}{t} \sum_{s=1}^t E \left[|v_s - 1| \right] = o(1),
\end{align}
where the last equality follows from \cref{asm:L1_conv_vs} and Ces\`aro's lemma. Therefore, $\sum_{s=1} v_{s,t} \xrightarrow{P} 1$. This fact and \cref{asm:cond_Lindeberg} constitute the conditions of the martingale central limit theorem Corollary 3.1 in \cite{hall1980}. Therefore, $S^0_{t_0} / \sqrt{t_0} \rightsquigarrow \mathcal{N}(0,1)$. Since $x \mapsto (a - x^2/2)_+$ is a bounded continuous function, we have that $B=o(1)$.

\paragraph{Normal mixture SPRT threshold.} The proof is similar to that of the previous case. We have that
\begin{align}
E\left[ \left(a - Y^{\nmSPRT}_{\lambda, t_0} \right)_+ \right]  =& E \left[ \left(a- \frac{1}{2} \log \frac{\eta}{1+\eta} - \frac{1}{2} \frac{S_{t_0}^2}{(1+\eta) t_0} \right)_+ \right] \\
=& E \left[ \left(a - \frac{1}{2} \log \frac{\eta}{1+\eta} - \frac{1}{2}\ \frac{Z^2}{1 + \eta} \right)_+ \right] + A + B,
\end{align}
with $Z \sim \mathcal{N}(0,1)$,
where 
\begin{align}
A =& E \left[\left(a - Y^{\nmSPRT}_{\lambda, t_0} \right)+ \right] -E\left[ \left(a - \frac{1}{2} \log \frac{\eta}{1+\eta} - \frac{S^0_{t_0}}{t_0}\right)\right]\\
\qquad \text{and} \qquad B =& E\left[ \left( a - \frac{1}{2} \log \frac{\eta}{1+\eta} - \frac{(S^0_{t_0})^2}{2(1+\eta)t_0}\right)_+ \right] \\
&- E\left[ \left(a - \frac{1}{2} \log \frac{\eta}{1+\eta}  - \frac{Z^2}{2 ( 1 + \eta)} \right)_+\right].
\end{align}
We have that
\begin{align}
|A| \leq E \left[ |R^\bias_{\lambda, t_0} |\right] + E \left[ \frac{(S^0_{t_0} + \psi t_0)^2 - (S^0_{t_0})^2}{t_0} \right].
\end{align}
We have already shown in the case of the running MLE threshold that the right-hand side above is $o(1)$. Similarly, $B$ can be shown to be $o(1)$ as well using the same arguments as in the case of the running MLE threshold.
\end{proof}

\begin{proof}[Proof of \cref{lemma:random_thresh_lemma1_part2}]
We begin by proving an identity we will need later in the proof. For any $a' > 0$, by integration by parts, we have that
\begin{align}
\int_{-\sqrt{2a'}}^{\sqrt{2a'}} (-z^2) \frac{e^{-\frac{z^2}{2}}}{\sqrt{2 \pi}} dz =& 2 \left[z \frac{e^{-\frac{z^2}{2}}}{\sqrt{2 \pi}} \right]_{-\sqrt{2 a'}}^{\sqrt{2 a'}} - \int_{-\sqrt{2 a'}}^{\sqrt{2 a'}} \frac{e^{-\frac{z^2}{2}}}{\sqrt{2 \pi}} dz = 2 \sqrt{\frac{a'}{\pi}} - (2 \Phi(\sqrt{2 a'}) - 1).\label{eq:int_z2phi}
\end{align}
We start with the first identity in the statement of the lemma. Using \eqref{eq:int_z2phi}, we have that
\begin{align}
E \left[ \left(a - \frac{1}{2} Z^2 \right)_+ \right] =& \int_{-\sqrt{2a}}^{\sqrt{2a}} \left(a - \frac{1}{2} z^2 \right) \frac{e^{-\frac{z^2}{2}}}{\sqrt{2 \pi}} dz \\
=& a (2 \Phi(\sqrt{2 a} - 1) + \frac{1}{2} \int_{-\sqrt{2a}}^{\sqrt{2a}} (-z^2) \frac{e^{-\frac{z^2}{2}}}{\sqrt{2 \pi}} dz \\
=& \left(a  - \frac{1}{2} \right) (2 \Phi(\sqrt{2 a} - 1) + \sqrt{\frac{a}{\pi}} e^{-a}.
\end{align}
We now turn to the second identity in the statement of the lemma. Using \eqref{eq:int_z2phi}, we have that
\begin{align}
& E \left[ \left( a - \frac{1}{2(1+\eta)}  Z^2 \right)_+\right] \\
=& \int_{-\sqrt{2(1+\eta)a}}^{\sqrt{2(1+\eta)a}} \left( a - \frac{1}{2 ( 1 + \eta) } z^2 \right) \frac{e^{-\frac{z^2}{2}}}{\sqrt{2 \pi}} dz \\
=& a \left( 2 \Phi(\sqrt{2 (1+\eta) a}) - 1 \right) + \frac{1}{2 ( 1 + \eta)} \int_{-\sqrt{2(1+\eta)a}}^{\sqrt{2(1+\eta)a}} (-z^2) \frac{e^{-\frac{z^2}{2}}}{\sqrt{2 \pi}} dz \\
=& \left(a - \frac{1}{2 (1 + \eta)} \right) \left( 2 \Phi(\sqrt{2 (1+\eta) a}) - 1 \right) + \sqrt{\frac{a}{(1+\eta) \pi}} e^{-(1+\eta) a}.
\end{align}
\end{proof}

\subsubsection{Proof of \cref{lemma:random_thresh_lemma2}}

\begin{proof}[Proof of \cref{lemma:random_thresh_lemma2}]
We prove both claims separately.
\paragraph{Proof of the first claim} Recall that 
\begin{align}
h_1(a) = 2 \exp(-a) \sqrt{\frac{a}{\pi}} + 2 (1 - \Phi(\sqrt{2 a})).
\end{align}
Both terms in $h_1(a)$ are positive, so for $h_1(a)$ to solve $h_1(a) = \alpha$ as $\alpha \to 0$, both must converge to $0$. In particular, for the second term to converge to 0, $a$ must diverge to $\infty$.

Since $-\log h_1(a) = -\log \alpha$, if we can show that $-\log h_1(a) \sim a$, it will follow that $a \sim -\log \alpha$, which will readily implies the claim. We have that 
\begin{align}
-\log h_1(a) = a - \log \left(2 \sqrt{\frac{a}{\pi}} + \exp(a) (1 - \Phi(\sqrt{2 a}). \right)
\end{align}
From Mills's ratio \citep{Wainwright_2019_tail_bounds},
\begin{align}
0 \leq \exp(a) (1 - \Phi(\sqrt{2 a})) \lesssim \frac{1}{\sqrt{2 a}} - \frac{1}{(2 a)^{3/2}} + \frac{3}{(2a)^{5/2}} = o(1),
\end{align}
which readily implies that $-\log h_1(a) = a + o(a)$, which yields the claim.

\paragraph{Proof of the second claim.} Recall that 
\begin{align}
h_{2, \eta}(a) = \exp(-a) \left\lbrace 2 \Phi \left(\sqrt{2 \eta ( a + c_\eta)} \right) - 1  \right\rbrace + 2 \left\lbrace 1 - \Phi\left( \sqrt{\frac{1+\eta}{\eta} 2 \eta ( a + c_\eta)} \right) \right\rbrace,
\end{align}
with $c_\eta = 0.5 \log ((1+\eta)/\eta)$. Both terms are non-negative, therefore, for $a$ to solve $h_{2,\eta}(a) = \alpha$ as $\alpha \to 0$, both need to converge to 0. Since $(1+\eta) / \eta \to 1$, for the second term to converge to 0, we must have $2\eta(a + c_\eta) \to \infty$. This in turn implies that $2 \Phi ( \sqrt{2 \eta (a+ c_\eta)}) - 1 \to 1$, which implies that for the first term to converge to 0, we must have $\exp(-a) \to 0$, that is $a \to \infty$.

We now want to show that $a \sim -\log \alpha$, which will readily imply the claim. Since $ - \log h_{2, \eta}(a) = -\log \alpha$, the result will follow if we show that $-\log h_{2,\eta}(a) \sim a$. We have that 
\begin{align}
-\log h_{2,\eta}(a) = a - \log &\left( \left\lbrace 2 \Phi\left( \sqrt{2 \eta (a + c_\eta)} \right)- 1  \right\rbrace \right. \\
&+\left. \exp(a) \left\lbrace 1 - \Phi \left( \sqrt{\frac{1+\eta}{\eta} 2 \eta (a + c_\eta) }\right) \right\rbrace \right).
\end{align}
It remains to show that the second term is $o(a)$. Trivially, $2 \Phi(\sqrt{2 \eta (a + c_\eta)}) - 1 \to 1$. The second term in the $\log$ is more delicate. Invoking Mills's ratio,
\begin{align}
0 \leq & \exp(a) \left\lbrace 1 - \Phi\left(\sqrt{2(1+\eta)(a+c_\eta)}\right)\right\rbrace\\
 \lesssim & \exp(a - (1+\eta)(a+c_\eta))  \\
 &\times \left( \frac{1}{\sqrt{2(1+\eta)(a+c_\eta)}}  - \frac{1}{(2(1+\eta)(a+c_\eta))^{3/2}} + \frac{3}{(2(1+\eta)(a+c_\eta))^{5/2}}\right)\\
 =& o(1).
\end{align}
Therefore, the two terms inside the log are $1+o(1)$, which implies that $-\log h_{2,\eta}(a) = a + o(1)$, which in turn yields the claim.
\end{proof}

\subsection{Rejection times are almost surely finite}

\begin{proposition}[Almost sure finite stopping times]\label{prop:as_finite_tau}
Suppose that \cref{asm:fourth_moment}, \cref{asm:sup_L2_norm_vs_min1}, and \cref{asm:relative_bias_moments_summable} hold. Then, $S_t / t \to \psi$ almost surely and therefore, for $i=1,2$, $\tau_i < \infty$ almost surely.
\end{proposition}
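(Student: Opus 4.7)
The plan is to first establish the strong law $S_t/t \to \psi$ almost surely, and then deduce from it that both test statistics diverge to $+\infty$, so their thresholds are eventually crossed.

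For the strong law, I would decompose $S_t = \sum_{s=1}^t \psi_s + S^0_t$ and handle the drift and noise parts separately. For the drift part, \cref{asm:relative_bias_moments_summable} combined with Markov's inequality gives, for any $\epsilon>0$, $\sum_t \Pr[|\psi_t-\psi|>\epsilon \psi] \le \epsilon^{-(2+\delta)}\sum_t E[(|\psi_t-\psi|/\psi)^{2+\delta}] < \infty$, so by the Borel--Cantelli lemma $\psi_t \to \psi$ almost surely; Ces\`aro's lemma then yields $t^{-1}\sum_{s=1}^t \psi_s \to \psi$. For the noise part, I would apply Chow's martingale strong law: since $(S^0_t)$ is a martingale with increments $X^0_t$, it suffices to check $\sum_t E[(X^0_t)^2]/t^2 < \infty$. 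But $E[(X^0_t)^2] = E[v_t] \le 1 + E[|v_t-1|] \le 1 + \sqrt{E[(v_t-1)^2]}$ by Cauchy--Schwarz, which is uniformly bounded by \cref{asm:sup_L2_norm_vs_min1}. Hence $\sum_t E[(X^0_t)^2]/t^2 < \infty$, and Chow's theorem gives $S^0_t/t \to 0$ almost surely. Combining, $S_t/t \to \psi$ almost surely.

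To deduce the finiteness of the stopping times, recall that under the alternative $\psi \ne 0$, so $S_t^2/t = t(S_t/t)^2 \to +\infty$ almost surely. Since $S_t^2/(t+\lambda) \sim S_t^2/t$ as well, and since $\log t$ and $\log((t+\lambda)/\lambda)$ grow only logarithmically, both test statistics
\begin{align}
Y_t^{\rmlSPRT} = \tfrac12\!\left(\tfrac{S_t^2}{t}-\log t\right), \qquad Y_{\lambda,t}^{\nmSPRT} = \tfrac12\!\left(\tfrac{S_t^2}{t+\lambda}-\log\tfrac{t+\lambda}{\lambda}\right)
\end{align}
tend to $+\infty$ almost surely. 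In particular, each eventually exceeds the (fixed) rejection thresholds $-\log\widetilde\alpha_1(\alpha)$ and $-\log\widetilde\alpha_{2,\lambda/t_0}(\alpha)$, so $\tau_1 < \infty$ and $\tau_2 < \infty$ almost surely.

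The argument is essentially routine once the martingale SLLN is invoked; the main thing to be careful about is verifying the hypotheses of Chow's strong law under the paper's assumptions (in particular, bounding $E[(X^0_t)^2]$ uniformly via \cref{asm:sup_L2_norm_vs_min1}). Note that \cref{asm:fourth_moment} is not actually needed for this particular proposition; it is only the $L^2$ control on $v_t-1$ and the summability of the centered $\psi_t$ moments that do the work.
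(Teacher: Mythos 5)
Your proof is correct, but your treatment of the noise term $S^0_t/t$ takes a genuinely different route from the paper's. The paper reuses its \cref{lemma:fourth_moment_psi0s} (a Rosenthal-inequality consequence of \cref{asm:fourth_moment} and \cref{asm:sup_L2_norm_vs_min1}) to get $E[(\psi^0_{0,t})^4]=O(t^{-2})$, and then concludes $\psi^0_{0,t}\to 0$ almost surely by Markov plus Borel--Cantelli; you instead invoke Chow's martingale strong law, which only requires $\sum_t E[(X^0_t)^2]/t^2<\infty$, and you verify this from the uniform bound $E[v_t]\leq 1+\sqrt{E[(v_t-1)^2]}$ supplied by \cref{asm:sup_L2_norm_vs_min1}. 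Your observation that \cref{asm:fourth_moment} is then superfluous for this particular proposition is accurate: the paper's hypothesis list reflects the lemma it chooses to recycle rather than what the conclusion intrinsically needs, and your argument shows the statement holds under strictly weaker moment conditions. The drift term is handled identically in both proofs (Markov, Borel--Cantelli, Ces\`aro under \cref{asm:relative_bias_moments_summable}). One further point in your favor: the paper's proof stops at $S_t/t\to\psi$ and leaves the ``therefore $\tau_i<\infty$'' step implicit, whereas you spell out that for $\psi\neq 0$ the quadratic growth of $S_t^2/t$ dominates the logarithmic boundary terms, which is exactly the right justification.
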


\begin{proof}
We have that 
\begin{align}
\frac{S_t}{t} - \psi = \frac{1}{t} \sum_{s=1}^t X_s - \psi_s + \frac{1}{t} \sum_{s=1}^t \psi_s - \psi = \psi^0_{0,t} + \frac{1}{t} \sum_{s=1}^t \psi_s - \psi.
\end{align}
From \cref{lemma:fourth_moment_psi0s}, \cref{asm:fourth_moment} and \cref{asm:sup_L2_norm_vs_min1} imply that $E[(\psi^0_{0,t})^4] = O(t^{-2})$. Therefore, for any $\epsilon > 0$, $\sum_{t=1}^\infty P [|\psi^0_{0,t}| \geq \epsilon] \leq \epsilon^{-4} \sum_{t=1}^\infty E[(\psi^0_{0,t})^4] < \infty$. Thus, from the Borel-Cantelli lemma, $\psi^0_{0,t}  = o(1)$ almost surely. 

We now turn to $t^{-1} \sum_{s=1}^t \psi_s - \psi$. From \cref{asm:relative_bias_moments_summable}, $\sum_{t=1}^\infty P [|\psi_t - \psi| \leq  \epsilon] \leq \epsilon^{-(2+\delta)} \sum_{t=1}^\infty E[|\psi_t - \psi|^{2+\delta}] < \infty$, therefore, from the Borel-Cantelli lemma, $\psi_t - \psi = o(1)$ almost surely and therefore $t^{-1} \sum_{s=1}^t \psi_s - \psi = o(1)$ almost surely.
\end{proof}

\subsection{Proof of the asymptotic equivalent of the adaptivity term (\cref{lemma:adaptivity_term})}

The proof of \cref{lemma:adaptivity_term} relies on the following two intermediate results.

\begin{lemma}\label{lemma:expected_psi0s_squared}
Suppose \cref{asm:L1_conv_vs} holds. Then $E[(\psi^0_{0,s})^2] \sim s^{-1}$.
\end{lemma}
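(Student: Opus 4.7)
The plan is to write $E[(\psi^0_{0,s})^2]$ explicitly as a Cesàro average of the expected conditional variances $E[v_r]$ and then use \cref{asm:L1_conv_vs} to show that average tends to $1$.

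By definition $\psi^0_{0,s} = S^0_s / s$, where $S^0_s = \sum_{r=1}^s X^0_r$ is the partial sum of the martingale difference sequence $(X^0_r)_{r \geq 1}$ with respect to $\mathfrak{F}$. Since the cross terms vanish under an MDS, I would first compute
\begin{align}
E\left[(S^0_s)^2\right] = \sum_{r=1}^s E\left[(X^0_r)^2\right] = \sum_{r=1}^s E\left[E[(X^0_r)^2 \mid \mathcal{F}_{r-1}]\right] = \sum_{r=1}^s E[v_r],
\end{align}
so that
\begin{align}
s \cdot E\left[(\psi^0_{0,s})^2\right] = \frac{1}{s} \sum_{r=1}^s E[v_r].
\end{align}

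Next, I would invoke \cref{asm:L1_conv_vs}, which gives $E[|v_r - 1|] \to 0$ and hence $|E[v_r] - 1| \leq E[|v_r - 1|] \to 0$, so $E[v_r] \to 1$. Applying Cesàro's lemma to the sequence $(E[v_r])_{r \geq 1}$ yields $s^{-1} \sum_{r=1}^s E[v_r] \to 1$, and therefore $s \cdot E[(\psi^0_{0,s})^2] \to 1$, which is the desired asymptotic equivalence $E[(\psi^0_{0,s})^2] \sim s^{-1}$.

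There is essentially no technical obstacle here: the proof is a direct MDS orthogonality computation followed by a Cesàro averaging argument. The only subtlety worth noting is that $s+0 = s > 0$ for all $s \geq 1$, so the ratio defining $\psi^0_{0,s}$ is well-defined, and the MDS structure is exactly what is guaranteed by the definition $X^0_r = X_r - E[X_r \mid \mathcal{F}_{r-1}]$ together with the adaptedness of $(X_r)$ to $\mathfrak{F}$.
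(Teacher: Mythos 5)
Your proof is correct and follows exactly the paper's argument: eliminate cross-terms via the martingale-difference property to get $E[(\psi^0_{0,s})^2] = s^{-2}\sum_{r=1}^s E[v_r]$, then apply \cref{asm:L1_conv_vs} and Ces\`aro's lemma. No issues.
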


\begin{lemma}\label{lemma:fourth_moment_psi0s}
Suppose \cref{asm:fourth_moment} and \cref{asm:sup_L2_norm_vs_min1} hold. Then $E[(\psi^0_{0,s})^4] = O(s^{-2})$.
\end{lemma}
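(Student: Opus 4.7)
The plan is to show $E[(S^0_s)^4] = O(s^2)$, from which $E[(\psi^0_{0,s})^4] = s^{-4} E[(S^0_s)^4] = O(s^{-2})$ follows immediately, since $\psi^0_{0,s} = S^0_s / s$. The sequence $(X^0_u)_{u \geq 1}$ is an $\mathfrak{F}$-adapted martingale difference sequence with conditional second moments $(v_u)$, so I would invoke the Burkholder-Davis-Gundy inequality at $p = 4$, which yields a universal constant $C$ such that
\begin{align}
E[(S^0_s)^4] \leq C\, E\Bigl[\Bigl(\sum_{u=1}^s (X^0_u)^2\Bigr)^{2}\Bigr],
\end{align}
reducing the problem to bounding the second moment of the quadratic variation.

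To control this quadratic-variation moment, I would decompose
\begin{align}
\sum_{u=1}^s (X^0_u)^2 = s + \sum_{u=1}^s \bigl((X^0_u)^2 - v_u\bigr) + \sum_{u=1}^s (v_u - 1),
\end{align}
and bound the $L^2$ norm of each piece separately. For the third piece, Cauchy-Schwarz plus \cref{asm:sup_L2_norm_vs_min1} gives $E\bigl[\bigl(\sum_u (v_u - 1)\bigr)^2\bigr] \leq s \sum_u E[(v_u-1)^2] = O(s^2)$. For the second piece, the crucial observation is that $((X^0_u)^2 - v_u)_u$ is itself a martingale difference sequence (because $v_u = E[(X^0_u)^2 \mid \mathcal{F}_{u-1}]$), so orthogonality collapses its second moment into $\sum_u E[((X^0_u)^2 - v_u)^2]$; via $(a-b)^2 \leq 2a^2 + 2b^2$ and Jensen's inequality $v_u^2 \leq E[(X^0_u)^4 \mid \mathcal{F}_{u-1}]$, this is bounded by $4 \sum_u E[(X^0_u)^4] = O(s^2)$ by \cref{asm:fourth_moment}. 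Combining the three contributions via $(a+b+c)^2 \leq 3(a^2+b^2+c^2)$ yields the desired $O(s^2)$ bound on the quadratic-variation moment, and hence the lemma.

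The step I expect to require care is the second-piece bound: a direct expansion of $E[(S^0_s)^4]$ as a sum of $E[(X^0_u)^2 (X^0_v)^2]$ cross-terms, bounded by Cauchy-Schwarz as $E[(X^0_u)^4]^{1/2} E[(X^0_v)^4]^{1/2}$, is too lossy, because \cref{asm:fourth_moment} only controls the \emph{sum} of fourth moments, so that individual $E[(X^0_u)^4]$ can a priori be as large as $O(u^2)$, producing an $O(s^3)$ bound. Routing the argument through BDG together with the martingale-difference structure of $((X^0_u)^2 - v_u)$ converts ``sum of $(v,X^0)$-cross terms'' into exactly $\sum_u E[(X^0_u)^4]$, which is the quantity \cref{asm:fourth_moment} was designed to control.
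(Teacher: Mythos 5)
Your proof is correct. The only real difference from the paper's argument is the choice of off-the-shelf martingale moment inequality: the paper invokes Rosenthal's inequality for martingales (Theorem 2.12 of Hall and Heyde), which in one step bounds $E[(S^0_s)^4]$ by a constant times $\sum_{r\le s} E[(X^0_r)^4] + E[(\sum_{r \le s} v_r)^2]$ --- precisely the two quantities that \cref{asm:fourth_moment} and \cref{asm:sup_L2_norm_vs_min1} control. You instead start from Burkholder--Davis--Gundy at $p=4$ and then reduce $E[(\sum_u (X^0_u)^2)^2]$ to those same two quantities by observing that $((X^0_u)^2 - v_u)_u$ is itself a martingale difference sequence and applying conditional Jensen to absorb $E[v_u^2]$ into $E[(X^0_u)^4]$. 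That intermediate step is exactly how the $p=4$ case of Rosenthal's inequality is proved, so your argument is in effect a self-contained derivation of the tool the paper cites; what it buys is elementarity (no appeal to Rosenthal), at the cost of a slightly longer computation. The treatment of the conditional-variance contribution is essentially identical in both proofs: write $\sum_{r\le s} v_r = s + \sum_{r \le s}(v_r - 1)$ and use Cauchy--Schwarz together with \cref{asm:sup_L2_norm_vs_min1} to get $O(s^2)$. Your closing remark about why a direct Cauchy--Schwarz expansion of the fourth moment is too lossy under \cref{asm:fourth_moment} (which only controls the running sum of fourth moments, not each term individually) is accurate and is indeed the reason a Rosenthal- or BDG-type route is needed here.
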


\begin{proof}[Proof of \cref{lemma:expected_psi0s_squared}]
Using the martingale property of $(X^0_s)_{s \geq 1}$ to eliminate cross-terms, and Ces\`aro's lemma, in that order, yields
\begin{align}
E[(\psi^0_{0,s})^2] = \frac{1}{s^2} \sum_{r=1}^s E[v_r] \sim \frac{1}{s}.
\end{align}
\end{proof}

\begin{proof}[Proof of \cref{lemma:fourth_moment_psi0s}]
From Rosenthal's inequality for martingales (see e.g. Theorem 2.12 in \cite{hall1980}), we have that
\begin{align}
E\left[ (\psi^0_{0,s})^4 \right] \lesssim & \frac{1}{s^4} \left\lbrace \sum_{r=1}^s E \left[ (X^0_r)^4 \right] + E\left[\left( \sum_{r=1}^s E\left[ (X^0_r)^2 \mid \mathcal{F}_{r-1} \right] \right)^2\right] \right\rbrace \\
\lesssim & \frac{1}{s^2} \left\lbrace \sup_{s \geq 1} s^{-2} \sum_{r=1}^s E[(X^0_r)^4] + E \left[ \left( 1 + \frac{1}{s} \sum_{r=1}^s v_r - 1 \right)^2 \right] \right\rbrace \\
\lesssim & O\left( \frac{1}{s^2} \right) + \frac{2}{s^2} \left\lbrace 1 + \left( E \left[ \left(\frac{1}{s} \sum_{r=1}^s v_r - 1  \right)^2 \right]^{1/2}  \right)^2 \right\rbrace \\
\lesssim  & O\left( \frac{1}{s^2} \right) + \frac{2}{s^2} \left( \frac{1}{s} \sum_{r=1}^2 E\left[(v_r -1)^2\right] \right)^2 \\
= & O\left( \frac{1}{s^2} \right),
\end{align}
where the second line follows from the definition of $v_r$, the third line follows from \cref{asm:fourth_moment} and the fact that $(a+b)^2 \leq 2 a^2 + 2 b^2$ for any two numbers $a$ and $b$, the fourth line follows from the triangle inequality, and the last line from \cref{asm:sup_L2_norm_vs_min1} and Cesa\`aro's lemma.
\end{proof}

We can now prove \cref{lemma:adaptivity_term}.
\begin{proof}
We present the proof in the RMLE case only. (The proof in the nmSPRT case will follow directly from the fact that that for any $\lambda > 0$, $ \rho_{\lambda, s} (\psi^0_{\lambda, s})^2 \leq (\psi_{0,s}^0)^2$.) We have that
\begin{align}
E\left[R^\adpt_{0, \tau} - R^\adpt_{0,t_0} \right] \leq & \frac{1}{2}(A_\psi + B_\psi + C_\psi),
\end{align}
where
\begin{align}
A_\psi =& \sum_{s=1}^{\left\lfloor \psi^{-2} \log \psi^{-1} \right\rfloor} E \left[ (\psi^0_{0,s})^2 \right], \\
B_\psi =& \sum_{s=\left\lfloor \psi^{-2} \log \psi^{-1} \right\rfloor + 1}^\infty E\left[ \bm{1}\left\lbrace |\psi^0_{0,s}| \leq \psi (\log \psi^{-1})^{-1/4}, \tau \geq s \right\rbrace (\psi^0_{0,s})^2   \right],\\
C_\psi =& \sum_{s=\left\lfloor \psi^{-2} \log \psi^{-1} \right\rfloor + 1}^\infty E \left[ \bm{1} \left\lbrace |\psi^0_{0,s}| > \psi (\log \psi^{-1})^{-1/4} \right\rbrace  (\psi^0_{0,s})^2 \right].
\end{align}

From \ref{asm:L1_conv_vs} and \cref{lemma:expected_psi0s_squared}, we have that
$A_\psi \sim 2 \log \psi^{-1}.$ By definition of the terms in $B_\psi$, we immediately have that $B_\psi \leq \psi^2 (\log \psi^{-1})^{-1/2} E[\tau]$. It remains to bound $C_\psi$. 

By integration by parts followed by application of Markov's inequality, 
\begin{align}
&E\left[\bm{1} \left\lbrace |\psi^0_{0,s}| > \psi (\log \psi^{-1})^{-1/4} \right\rbrace (\psi^0_{0,s})^2 \right] \\
= & \int_{\psi (\log \psi^{-1})^{-1/4}}^\infty x^2 P \left[ |\psi^0_{0,s}| \in dx \right] \\
=& \psi^2 (\log \psi^{-1})^{-1/2} P \left[ |\psi^0_{0,s}| \geq \psi (\log \psi^{-1})^{-1/4}\right] + 2 \int_{\psi (\log \psi^{-1})^{-1/4}}^\infty x P \left[ |\psi^0_{0,s}| \geq x \right] dx \\ 
\leq & \left( \psi (\log \psi^{-1})^{-1/4} \right)^{-2} E \left[ (\psi^0_{0,s})^4 \right] + 2 \int_{\psi (\log \psi^{-1})^{-1/4}}^\infty x^{-3} E \left[ (\psi^0_{0,s})^4 \right] dx \\
=& 2 \psi^{-2} (\log \psi^{-1})^{1/2}  E \left[ (\psi^0_{0,s})^4 \right] \\
\lesssim & \psi^{-2} (\log \psi^{-1})^{1/2} s^{-2},
\end{align}
where the last line follows from \cref{asm:fourth_moment} and \cref{lemma:fourth_moment_psi0s}. Therefore, by summation, 
\begin{align}
C_\psi \lesssim \psi^{-2} (\log \psi^{-1})^{1/2} \times \psi^2 (\log \psi^{-1})^{-1} = (\log \psi^{-1})^{-1/2}.
\end{align}
\end{proof}

\subsection{Proof of the bound on the bias remainder term (\cref{lemma:bias_term})}

We first state and prove intermediate lemmas.

\begin{lemma}\label{lemma:ERbias_t0}
Suppose \cref{asm:L1_conv_psi} and \cref{asm:L1_conv_vs} hold. Then, for any $\eta \geq 0$, $E[|R^\bias_{\lambda, t_0}|] = o(1)$ as $t_0 \to \infty$, $\psi \sqrt{t_0} \to 0$.
\end{lemma}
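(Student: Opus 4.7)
The plan is to factor $R^{\bias}_{\lambda,t_0}$ as a sum of three explicit cross‐terms and show each has vanishing expected absolute value. Let $D_{t_0} := \sum_{s=1}^{t_0}(\psi_s - \psi)$, so that $S_{t_0} - \breve{S}_{t_0} = D_{t_0}$ and $S_{t_0} + \breve{S}_{t_0} = 2\psi t_0 + 2 S^0_{t_0} + D_{t_0}$. Factoring $S_{t_0}^2 - \breve{S}_{t_0}^2 = (S_{t_0} - \breve{S}_{t_0})(S_{t_0}+\breve{S}_{t_0})$ and dividing by $2(t_0+\lambda)$ gives
\begin{align}
R^{\bias}_{\lambda,t_0} = \frac{\psi t_0 D_{t_0}}{t_0+\lambda} + \frac{D_{t_0} S^0_{t_0}}{t_0+\lambda} + \frac{D_{t_0}^2}{2(t_0+\lambda)}.
\end{align}
The goal is to take expectations of the absolute values of all three pieces and show each is $o(1)$ under the hypotheses $t_0\to\infty$ and $\psi\sqrt{t_0}\to 0$.

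The key quantitative ingredient is a second‐moment bound on $D_{t_0}$. By Cauchy-Schwarz applied to the sum defining $D_{t_0}$, $E[D_{t_0}^2] \leq t_0 \sum_{s=1}^{t_0} E[(\psi_s-\psi)^2]$. Under \cref{asm:L1_conv_psi}, $E[(\psi_s-\psi)^2]/\psi^2 \to 0$ as $s\to\infty$, so Cesàro's lemma gives $t_0^{-1}\sum_{s=1}^{t_0} E[(\psi_s-\psi)^2] = o(\psi^2)$, hence $E[D_{t_0}^2] = o(\psi^2 t_0^2)$ and $E|D_{t_0}| \leq \sqrt{E[D_{t_0}^2]} = o(\psi t_0)$. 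Separately, $E[(S^0_{t_0})^2] = \sum_{s=1}^{t_0} E[v_s]$, and \cref{asm:L1_conv_vs} together with the triangle inequality yields $E[v_s]\to 1$, whence $E[(S^0_{t_0})^2]=O(t_0)$ by Cesàro.

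It remains to combine these bounds. For the first term, $E\bigl|\psi t_0 D_{t_0}/(t_0+\lambda)\bigr| \leq \psi\, E|D_{t_0}| = o(\psi^2 t_0) = o(1)$, since $\psi\sqrt{t_0}\to 0$ implies $\psi^2 t_0 \to 0$. For the second term, Cauchy-Schwarz gives
\begin{align}
E\bigl|D_{t_0} S^0_{t_0}/(t_0+\lambda)\bigr| \leq \frac{\sqrt{E[D_{t_0}^2]\, E[(S^0_{t_0})^2]}}{t_0+\lambda} = \frac{o(\psi t_0)\cdot O(\sqrt{t_0})}{t_0+\lambda} = o(\psi\sqrt{t_0}) = o(1).
\end{align}
For the third term, $E\bigl[D_{t_0}^2/(2(t_0+\lambda))\bigr] \leq E[D_{t_0}^2]/(2 t_0) = o(\psi^2 t_0) = o(1)$. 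Adding the three bounds yields the claim.

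There is no substantive obstacle here; the one subtlety worth flagging is that a naive bound via $E|D_{t_0}| \leq \sum_s E|\psi_s-\psi|$ loses a factor and is not clearly $o(\psi t_0)$ from \cref{asm:L1_conv_psi} alone, so the Cauchy-Schwarz route through $E[D_{t_0}^2]$ (which converts the pointwise $o(\psi^2)$ rate into a Cesàro statement) is what makes the argument go through.
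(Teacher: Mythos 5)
Your proof is correct and follows essentially the same route as the paper's: the paper likewise factors $S_{t_0}^2-\breve S_{t_0}^2$ into a drift cross-term, a bias--noise cross-term, and a squared-bias term, and bounds each via Cauchy--Schwarz, \cref{asm:L1_conv_psi}, \cref{asm:L1_conv_vs}, and Ces\`aro's lemma. The only cosmetic difference is that you work with $D_{t_0}$ directly and bound $E[D_{t_0}^2]\le t_0\sum_s E[(\psi_s-\psi)^2]$, whereas the paper first bounds $|D_{t_0}|\le\sum_s|\psi_s-\psi|$ and applies Cauchy--Schwarz to that sum.
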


\begin{proof}[Proof of \cref{lemma:ERbias_t0}]
We have that
\begin{align}
E[|R^\bias_{\lambda, t_0}|] \leq \frac{\psi}{2 t_0} E \left[ \sum_{s=1}^{t_0} \frac{|\psi_s - \psi|}{\psi} \left( 2 \psi t_0 + \psi \sum_{s=1}^{t_0} \frac{|\psi_s - \psi|}{\psi} + | \sum_{s=1}^{t_0} X^0_s | \right) \right] \leq
A_{\psi, t_0} + B_{\psi, t_0} + C_{\psi, t_0},
\end{align}
where 
\begin{align}
A_{\psi, t_0} =& \psi^2 \sum_{s=1}^{t_0} E \left[ \frac{|\psi_s - \psi|}{\psi} \right],\\
B_{\psi, t_0} =& \frac{1}{2} \frac{\psi^2}{t_0} E \left[ \left( \sum_{s=1}^{t_0} \frac{|\psi_s - \psi|}{\psi} \right)^2 \right], \\
\text{and} \qquad  C_{\psi, t_0} =& \frac{\psi}{2 t_0} E \left[ \sum_{s=1}^{t_0} \frac{|\psi_s - \psi|}{\psi} \times | \sum_{s=1}^{t_0} X^0_s | \right].
\end{align}
Using Jensen's inequality, \cref{asm:L1_conv_psi} and applying Ces\`aro's lemma yields that
\begin{align}
A_{\psi,t_0} \leq \psi^2 \sum_{s=1}^{t_0} E \left[ \left\lvert \frac{\psi_s - \psi}{\psi} \right\rvert^2 \right]^{1/2} = o(\psi^2 t_0) = o(1).
\end{align}
Using the Cauchy-Schwarz inequality, \cref{asm:L1_conv_psi} and applying Ces\`aro's lemma yields
\begin{align}
B_{\psi,t_0} \leq \frac{1}{2} \psi^2 \sum_{s=1}^{t_0} E \left[  \left\lvert \frac{\psi_s - \psi}{\psi} \right\vert^2 \right] = o\left( \psi^2 t_0 \right) = o(1).
\end{align}
Two consecutive applications of the Cauchy-Schwarz inequality give that
\begin{align}
C_{\psi, t_0} \leq & \frac{\psi}{2 t_0} E\left[ \left( \sum_{s=1}^{t_0} \frac{|\psi_s - \psi|}{\psi} \right)^2 \right]^{1/2} \times \sqrt{t_0} E \left[ \left( \frac{\sum_{s=1}^{t_0} X^0_s}{\sqrt{t_0}} \right)^2 \right]^{1/2} \\
\leq & \frac{\psi}{2}  \left(\sum_{s=1}^{t_0} E \left[\left\lvert \frac{\psi_s - \psi}{\psi} \right\rvert^2 \right]\right)^{1/2}  E \left[ \left( \frac{\sum_{s=1}^{t_0} X^0_s}{\sqrt{t_0}} \right)^2 \right]^{1/2}
\end{align}
Using the martingale difference property of $(X^0_s)_{s\geq 1}$ to eliminate cross-terms, we have that $E[ (\sum_{s=1}^{t_0} X^0_s / \sqrt{t_0} )^2 ] = E [t_0^{-1} \sum_{s=1}^{t_0} v_s ] = O(1)$, where the last equality follows from \cref{asm:L1_conv_vs} and Ces\`aro's lemma.
 Using \cref{asm:L1_conv_psi} and Ces\`aro's lemma as in the two previous cases yields that $ \sum_{s=1}^{t_0} E[|\psi_s - \psi|^2 \psi^{-2} ] = o(t_0)$. Therefore, 
\begin{align}
C_{\psi, t_0} = o(\psi \sqrt{t_0}) = o(1).
\end{align}
\end{proof}

\begin{lemma}\label{lemma:expectation_sum_to_tau_under_moment_summability}
Suppose that $(Z_t)_{t \geq 1}$ is an $\mathcal{F}$-predictable sequence such that there exists $\delta >0 $ such that $(E[|Z_t|^{1+\delta})_{t \geq 1})$ is summable, and let $(\tau(t_0))_{t_0 \geq 1}$ be an $\mathfrak{F}$-stopping time sequence such that $E[\tau(t_0)] \to \infty$ as $t_0 \to \infty$. Then
\begin{align}
E \left[ \sum_{s=1}^{\tau(t_0)} |Z_s| \right] = o(E[\tau(t_0)])
\end{align}
as $t_0 \to \infty$.
\end{lemma}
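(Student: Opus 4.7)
The strategy is to write the expected partial sum as a single series, slot in H\"older's inequality once on each expectation to trade summability of $E[|Z_s|^{1+\delta}]$ against the indicator $\mathbf 1\{s \leq \tau(t_0)\}$, and then apply H\"older again on the outer sum over $s$ to produce exactly the two known quantities: the (finite) total $\sum_s E[|Z_s|^{1+\delta}]$ and $E[\tau(t_0)] = \sum_s P[\tau(t_0) \geq s]$.

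Concretely, I would first invoke Tonelli (the summands are nonnegative) to write
\begin{align}
E\left[\sum_{s=1}^{\tau(t_0)} |Z_s|\right] = \sum_{s=1}^\infty E\left[|Z_s|\, \mathbf 1\{s \leq \tau(t_0)\}\right].
\end{align}
Next, for each $s$, H\"older with exponents $p = 1+\delta$ and $q = (1+\delta)/\delta$, together with the fact that the indicator squared is itself, yields
\begin{align}
E\left[|Z_s|\, \mathbf 1\{s \leq \tau(t_0)\}\right] \leq E[|Z_s|^{1+\delta}]^{1/(1+\delta)}\, P[\tau(t_0) \geq s]^{\delta/(1+\delta)}.
\end{align}
Applying H\"older once more, now on the series in $s$ with the same exponents, gives
\begin{align}
\sum_{s=1}^\infty E[|Z_s|^{1+\delta}]^{1/(1+\delta)} P[\tau(t_0) \geq s]^{\delta/(1+\delta)} \leq \left(\sum_{s=1}^\infty E[|Z_s|^{1+\delta}]\right)^{1/(1+\delta)} \left(\sum_{s=1}^\infty P[\tau(t_0) \geq s]\right)^{\delta/(1+\delta)}.
\end{align}
The first factor is a finite constant $A$ by the summability hypothesis, and the second equals $E[\tau(t_0)]^{\delta/(1+\delta)}$ by the standard tail-sum identity for nonnegative integer-valued random variables. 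Putting the pieces together,
\begin{align}
E\left[\sum_{s=1}^{\tau(t_0)} |Z_s|\right] \leq A\, E[\tau(t_0)]^{\delta/(1+\delta)}.
\end{align}

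Dividing by $E[\tau(t_0)]$ gives $A\, E[\tau(t_0)]^{-1/(1+\delta)}$, which tends to $0$ under the assumption $E[\tau(t_0)] \to \infty$, yielding the claim. There is essentially no serious obstacle here; the only thing to be careful about is correctly tracking the H\"older conjugate pair so that the $(1+\delta)$-moment is consumed in one slot while the tail-sum $\sum_s P[\tau(t_0) \geq s]$ appears exactly once, to the power $\delta/(1+\delta) < 1$, which is what ultimately makes the bound sublinear in $E[\tau(t_0)]$. Predictability of $(Z_s)$ is not used in this argument; only the nonnegativity needed for Tonelli and the moment bound matter.
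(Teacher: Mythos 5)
Your proof is correct, and it takes a genuinely different route from the paper's. The paper argues by a three-way decomposition: the first $\lfloor E[\tau(t_0)]\rfloor$ terms are handled by noting that $E[|Z_s|^{1+\delta}]\to 0$ forces $E[|Z_s|]\to 0$ and invoking Ces\`aro, while the remaining terms are split according to whether $|Z_s|$ exceeds a truncation level $\eta$, with the small part bounded by $\eta\, E[\tau(t_0)]$ and the large part controlled by a Markov/integration-by-parts argument costing $\eta^{-\delta}$ times a tail of the summable series; the proof closes by choosing $\eta$ as a vanishing power of that tail. Your double-H\"older argument replaces all of this bookkeeping with two lines and, moreover, delivers the strictly stronger quantitative bound
\begin{align}
E\left[\sum_{s=1}^{\tau(t_0)} |Z_s|\right] \leq \left(\sum_{s=1}^\infty E[|Z_s|^{1+\delta}]\right)^{1/(1+\delta)} E[\tau(t_0)]^{\delta/(1+\delta)} = O\!\left(E[\tau(t_0)]^{\delta/(1+\delta)}\right),
\end{align}
of which the stated $o(E[\tau(t_0)])$ is an immediate consequence. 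Each step checks out: Tonelli applies by nonnegativity, the per-term H\"older correctly absorbs the indicator into $P[\tau(t_0)\geq s]^{\delta/(1+\delta)}$, the series-level H\"older with conjugate exponents $1+\delta$ and $(1+\delta)/\delta$ is valid, and the tail-sum identity $\sum_{s\geq 1} P[\tau(t_0)\geq s]=E[\tau(t_0)]$ holds since the stopping time is positive-integer valued. You are also right that predictability of $(Z_s)$ plays no role; the paper's proof does not use it either. The only caveat, shared with the paper's argument, is that the statement implicitly presumes $E[\tau(t_0)]<\infty$ for each $t_0$, which is guaranteed where the lemma is applied.
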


\begin{proof}[Proof of \cref{lemma:expectation_sum_to_tau_under_moment_summability}]
Let $\eta > 0$. We will specify the value of $\eta$ later in the proof. We have that
\begin{align}
E \left[ \sum_{s=1}^{\tau} |Z_s| \right]  \leq A + B + C,
\end{align}
where 
\begin{align}
A =& \sum_{s=1}^{\left\lfloor E[\tau] \right\rfloor} E[|Z_s|],\\
B =& \sum_{s=\left\lfloor E[\tau] \right\rfloor + 1}^\infty E \left[|Z_s| \bm{1}\left\lbrace |Z_s| \leq \eta, \tau \geq s \right\rbrace \right],\\
\text{and} \qquad C =& \sum_{s=\left\lfloor E[\tau] \right\rfloor + 1}^\infty E \left[|Z_s| \bm{1}\left\lbrace |Z_s| > \eta \right\rbrace \right].
\end{align}
That $(E[|Z_s|^{1+\delta}])_{s \geq 1}$ is summable implies that $E[|Z_s|^{1+\delta}] \to 0$ and thus, via Jensen's inequality that $E[|Z_s|] \to 0$. Therefore, from Ces\`aro's lemma, $A = o(E[\tau])$. We immediately have that $B \leq \eta E[\tau]$. We now turn to $C$. We have that
\begin{align}
E\left[|Z_s| \bm{1} \left\lbrace |Z_s| > \eta \right\rbrace \right] =& \eta P\left[|Z_s| > \eta \right] + \int_\eta^\infty P \left[ |Z_s| \geq z\right] dz \\
\leq & \eta^{-\delta} E \left[ |Z_s|^{1+\delta} \right] + \int_\eta^\infty z^{-1-\delta} E\left[ |Z_s|^{1+\delta} \right] dz \\
=& \frac{\delta+1}{\delta} \eta^{-\delta} E\left[ |Z_s|^{1+\delta} \right].
\end{align}
Therefore, $C \lesssim \eta^{-\delta} \sum_{s=\left\lfloor E[\tau] \right\rfloor + 1}^\infty E[|Z_s|^{1+\delta}]$. We set $\eta = (\sum_{s=\left\lfloor E[\tau] \right\rfloor + 1}^\infty E[|Z_s|^{1+\delta}])^{1/(2 \delta)}$. As $E[|Z_s|^{1+\delta}]$ is summable, we have that $\eta = o(1)$ as $t_0 \to \infty$. We thus have that $B = o(E[\tau])$ as $t_0 \to \infty$ and that $C=O(\eta^\delta) = o(1)$ as $t_0 \to \infty$.
\end{proof}

\begin{lemma}\label{lemma:expectation_S0_tau_squared_over_tau}
Suppose \cref{asm:Evsmin1_over_s_summable} holds. Then, for any almost surely finite $\mathcal{F}$-stopping time $\tau$, we have that
\begin{align}
E \left[  \frac{(S^0_{\tau})^2}{\tau} \right] \leq \log E[\tau] + O(1).
\end{align}
\end{lemma}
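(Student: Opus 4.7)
The plan is to start from the discrete It\^o-type identity (\cref{lemma:discrete_Ito_identity}) applied with $\lambda = 0$ and $z_s = S^0_s$. This yields, for $s \geq 1$,
\begin{align}
\frac{(S^0_{s+1})^2}{s+1} - \frac{(S^0_s)^2}{s} = 2\frac{s}{s+1}\frac{S^0_s}{s} X^0_{s+1} - \frac{1}{s+1}\frac{(S^0_s)^2}{s} + \frac{(X^0_{s+1})^2}{s+1}.
\end{align}
Telescoping from $s=1$ to $s=(\tau \wedge N)-1$ for a truncation level $N \geq 1$ gives an exact identity for $(S^0_{\tau \wedge N})^2/(\tau \wedge N)$ in terms of $(X^0_1)^2$, a martingale increment summand, a non-positive drift, and a conditional variance summand.

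I would then take expectation. Since $\tau \wedge N$ is a bounded stopping time, the martingale term $\sum_{s=1}^{(\tau\wedge N)-1}\frac{s}{s+1}\frac{S^0_s}{s}X^0_{s+1}$ has expectation zero by optional stopping (the integrand is $\mathcal{F}_s$-measurable with the required moments by \cref{asm:Evsmin1_over_s_summable}), the drift term $-\sum_s \bm{1}\{\tau > s\}(S^0_s)^2/(s(s+1))$ is non-positive and can be dropped, and the conditional-variance term becomes, after conditioning on $\mathcal{F}_s$,
\begin{align}
E\left[\sum_{s=1}^{(\tau\wedge N)-1}\frac{(X^0_{s+1})^2}{s+1}\right] = \sum_{s=1}^{N-1}E\left[\bm{1}\{\tau > s\}\frac{v_{s+1}}{s+1}\right].
\end{align}

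The key step is then to decompose $v_{s+1} = 1 + (v_{s+1} - 1)$. The $(v_{s+1}-1)/(s+1)$ piece contributes at most $\sum_{s=1}^{\infty}E[|v_{s+1}-1|]/(s+1) = O(1)$ by \cref{asm:Evsmin1_over_s_summable}, while the $1/(s+1)$ piece contributes $E\left[\sum_{s=2}^{\tau\wedge N}1/s\right] \leq E[\log(\tau \wedge N)] \leq \log E[\tau]$ by Jensen's inequality. Together with $E[(X^0_1)^2]=E[v_1]=O(1)$, this yields
\begin{align}
E\left[\frac{(S^0_{\tau \wedge N})^2}{\tau \wedge N}\right] \leq \log E[\tau] + O(1)
\end{align}
with a constant uniform in $N$. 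Taking $N \to \infty$ via Fatou's lemma (using that $(S^0_\tau)^2/\tau$ is the almost-sure limit of $(S^0_{\tau\wedge N})^2/(\tau\wedge N)$ since $\tau$ is finite a.s.) delivers the claim.

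The main obstacle, such as it is, will be justifying the vanishing of the martingale term under the weak integrability provided by \cref{asm:Evsmin1_over_s_summable}; truncating at $\tau \wedge N$ sidesteps this cleanly because the stopped martingale is bounded in expectation, and all subsequent manipulations use only the non-negative terms on which Fatou applies.
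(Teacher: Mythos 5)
Your proposal is correct and follows essentially the same route as the paper: the telescoped discrete It\^o identity you write down is exactly the statement that $\bigl((S^0_t)^2/t - \sum_{s=1}^t v_s/s\bigr)_{t\ge1}$ is a supermartingale with initial expectation zero (the paper writes ``submartingale,'' but uses the supermartingale direction), followed by optional stopping, the decomposition $v_s = 1 + (v_s-1)$, and Jensen. Your version is in fact slightly more careful, since you make explicit the truncation at $\tau\wedge N$ and the Fatou passage to the limit that the paper's direct appeal to optional stopping at an almost surely finite $\tau$ leaves implicit.
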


\begin{proof}[Proof of \cref{lemma:expectation_S0_tau_squared_over_tau}]
Observe that 
\begin{align}
\left( \frac{(S^0_t)^2}{t} - \sum_{s=1}^t \frac{v_s}{s} \right)_{t \geq 1} 
\end{align}
is a submartingale with initial value 0. Then, from the optional stopping theorem,
\begin{align}
E \left[ \frac{(S^0_{\tau})^2}{\tau} \right] \leq & E \left[ \sum_{s=1}^{\tau} \frac{v_s}{s} \right] \\
\leq & E \left[ \log \tau \right] + \sum_{s=1}^\infty \frac{E \left[ |v_s - 1| \right]}{s} \\
\leq & \log E[\tau] + O(1),
\end{align}
where the last line follows Jensen's inequality and \cref{asm:Evsmin1_over_s_summable}.
\end{proof}

\begin{proof}[Proof of \cref{lemma:bias_term}]
Since $|E[R^\bias_{\lambda, \tau}] - E[R^\bias_{\lambda, t_0}]| \leq |E[R^\bias_{\lambda, \tau}]| + |E[R^\bias_{\lambda, t_0}]|$ and from  \cref{lemma:ERbias_t0}, $|E[R^\bias_{\lambda, t_0}]| = o(1)$, it remains to bound the first term. We have that
\begin{align}
|E[R^\bias_{\lambda, \tau}]| \leq A + B + C,
\end{align}
where
\begin{align}
A =& E\left[\psi^2 \sum_{s=1}^{\tau} \frac{|\psi_s - \psi|}{\psi}\right]\\
B =& E\left[\psi^2 \frac{1}{\tau} \left( \sum_{s=1}^{\tau}  \frac{|\psi_s - \psi|}{\psi} \right)^2 \right]\\
\text{and} \qquad C=& E\left[ \psi \frac{1}{\sqrt{\tau}} \sum_{s=1}^{\tau} \frac{|\psi_s - \psi|}{\psi} \times \frac{S^0_{\tau}}{\sqrt{\tau}}\right].
\end{align}
From \cref{asm:relative_bias_moments_summable} and \cref{lemma:expectation_sum_to_tau_under_moment_summability}, $A =o(\psi^2 E[\tau])$. From Cauchy-Schwarz, $B \leq \psi^2 E[ \sum_{s=1}^{\tau} (|\psi_s - \psi|/\psi)^2]$, and therefore, from \cref{asm:relative_bias_moments_summable} and \cref{lemma:expectation_sum_to_tau_under_moment_summability}, $B = o(\psi^2 E[\tau])$. We now turn to $C$. Applying Cauchy-Schwarz twice yields 
\begin{align}
C \leq & \psi E \left[ \frac{1}{\tau} \left( \sum_{s=1}^{\tau} \frac{|\psi_s - \psi|}{\psi} \right)^2 \right]^{1/2} \times E \left[ \frac{(S^0_{\tau})^2}{\tau} \right]^{1/2} \\
\leq & \psi E \left[ \sum_{s=1}^{\tau} \left\lvert \frac{\psi_s - \psi}{\psi} \right\rvert^2 \right]^{1/2} \times E \left[ \frac{(S^0_{\tau})^2}{\tau} \right]^{1/2}.
\end{align}
From \cref{asm:relative_bias_moments_summable} and \cref{lemma:expectation_sum_to_tau_under_moment_summability}, the first expectation above is $o(E[\tau])$. From \cref{asm:Evsmin1_over_s_summable} and \cref{lemma:expectation_S0_tau_squared_over_tau}, the second expectation above is $o(\sqrt{\log E[\tau]})$. Therefore 
\begin{align}
C = o\left(\psi \sqrt{E[\tau] \log E[\tau]}\right).
\end{align}
\end{proof}

\subsection{Proofs of the bounds on shrinkage terms (\cref{lemma:ERskg1} and \cref{lemma:ERskg2})}

\begin{proof}[Proof of \cref{lemma:ERskg1}] We have that
\begin{align}
\left\lvert R^{\skg,1}_{\lambda, \tau(t_0)} - R^{\skg,1}_{\lambda, t_0} \right\rvert =& \frac{1}{2} \psi^2 \lambda \left\lvert \frac{ \tau(t_0)}{ \tau(t_0) + \lambda} - \frac{t_0}{t_0 + \lambda} \right\rvert  \\
=& \frac{1}{2} \psi^2 \lambda \frac{ \tau(t_0) - t_0}{ \tau(t_0) + \lambda} \times \frac{\lambda}{t_0 + \lambda} \\
\leq & \frac{1}{2} \psi^2 \lambda \\
=& o(\log \psi^{-1}) 
\end{align}
as $\lambda = o(\psi^{-2} \log \psi^{-1})$.
\end{proof}

\begin{proof}[Proof of \cref{lemma:ERskg2}]
We have that 
\begin{align}
    R^{\skg,2}_{\lambda, \tau(t_0)} - R^{\skg,2}_{\lambda, t_0} =& \psi \lambda \sum_{s=t_0}^{\tau(t_0) - 1} S^0_s \left( \frac{1}{s+\lambda} - \frac{1}{s+1+\lambda} \right) \\ 
    = & \psi \lambda \left(\sum_{s=t_0}^{\tau(t_0) -1} \frac{S^0_s}{s+\lambda} - \sum_{s=t_0 + 1}^{\tau(t_0)} \frac{S^0_{s-1}}{s+\lambda} \right) \\
    =& \psi \lambda \left( - \frac{S^0_{\tau(t_0) - 1}}{\tau + \lambda} + \sum_{s = t_0 + 1}^{\tau(t_0) - 1} \frac{X^0_s}{s + \lambda} + \frac{S^0_{t_0 }}{t_0  + \lambda}\right).
\end{align}
The third term trivially has expectation 0. The terms of the sum in the second term form an MDS, and therefore this term has expectation 0 from the optional stopping theorem. We now turn to the first term. From Jensen's inequality,
\begin{align}
    \left\lvert E \left[ \frac{S^0_{\tau(t_0)-1}}{\tau(t_0) + \lambda} \right] \right\rvert \leq E \left[ \left( \frac{S^0_{\tau(t_0)-1}}{\tau(t_0) + \lambda} \right)^2  \right]^{1/2}.
\end{align}
It is straightforward to observe that 
\begin{align}
    \left( \frac{(S^0_t)^2}{(t + 1 + \lambda)^2} - \sum_{s=1}^t \frac{v_s}{(s + 1 + \lambda)^2 }\right)_{t \geq 1}
\end{align}
is a supermartingale of which the initial term has expectation $0$.
The optional stopping theorem then implies that 
\begin{align}
    E \left[ \frac{(S_{\tau(t_0)}^0)^2}{(\tau + 1 + \lambda)^2} \right] \leq &  E \left[ \sum_{t=1}^{\tau(t_0)} \frac{v_t}{(t+1+\lambda)^2}\right]  \leq  \sup_{t \geq 1} E[v_t] \sum_{t = \left\lceil \lambda \right\rceil + 1}^\infty \frac{1}{t} =O(\lambda)
\end{align}
where the last line follows from \cref{asm:L1_conv_vs}. Therefore,
\begin{align}
E\left[ R^{\skg,2}_{\lambda, \tau(t_0)} - R^{\skg,2}_{\lambda, t_0} \right] = O (\psi \sqrt{\lambda}) = o(\log \psi^{-1})
\end{align}
since $\lambda = o(\psi^{-2} (\log \psi^{-1})^2)$.
\end{proof} 

\subsection{Proofs of the bound on the expected difference in quadratic variation (\cref{lemma:Deltaqvar})}

\begin{proof}[Proof of \cref{lemma:Deltaqvar}]
We have that
\begin{align}
\left\lvert E\left[\Deltaqvar_{\lambda, \tau(t_0)} - \Deltaqvar_{\lambda, t_0} \right] \right\rvert =& \left\lvert E\left[  \sum_{s=t_0 + 1}^{\tau(t_0)} \frac{(X^0_s)^2}{s + \lambda} - \log \frac{\tau(t_0) + \lambda}{t_0 + \lambda}\right] \right\rvert \\
=& \left\lvert E \left[ \sum_{s=t_0 + 1}^{\tau(t_0)} \frac{v_s}{s+\lambda} - \log \frac{\tau + \lambda}{t_0 + \lambda}  \right] \right\rvert \\
\leq & o(1) + \sum_{t_0+1}^\infty \frac{E[|v_s - 1|]}{s} \\
=& o(1)
\end{align}
as $t_0 \to \infty$, where the last inequality follows from \cref{asm:Evsmin1_over_s_summable}.
\end{proof}

\section{Proof of the estimating equations results (section \ref{sec:application}) }

\subsection{Proof of the type-I error result (\cref{prop:typeI_err_esteq})}
\begin{lemma}\label{lemma:vs_almost_sure_conv}
Suppose that \cref{asm:moment_estimating_function} and \cref{asm:variance_lower_bound} hold. Then, it holds that
$v_t - 1 = o(t^{-((1-\delta) / (3-\delta)- 2 \iota)})$ almost surely.
\end{lemma}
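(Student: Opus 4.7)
The plan is to reduce the claim to an almost sure rate for $\widehat\sigma_{t-1}^2(\theta_0)-\sigma^2(\widehat\eta_{t-1},\theta_0)$. Writing $v_t=\omega_t^2(\theta_0)\sigma^2(\widehat\eta_{t-1},\theta_0)$, on the event $\{\widehat\sigma_{t-1}(\theta_0)\ge\chi^{-1}t^{-\iota}\}$ one has $\omega_t(\theta_0)=\widehat\sigma_{t-1}(\theta_0)^{-1}$ and so
\[
|v_t-1|=\frac{|\sigma^2(\widehat\eta_{t-1},\theta_0)-\widehat\sigma_{t-1}^2(\theta_0)|}{\widehat\sigma_{t-1}^2(\theta_0)}\le \chi^2 t^{2\iota}\,\bigl|\widehat\sigma_{t-1}^2(\theta_0)-\sigma^2(\widehat\eta_{t-1},\theta_0)\bigr|.
\]
\cref{asm:variance_lower_bound} gives $\sigma^2(\widehat\eta_{t-1},\theta_0)\ge\sigma_*^2>0$, so once the central rate below is established the thresholding event holds for all large $t$ almost surely and its complement contributes only finitely often. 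Hence it suffices to prove $|\widehat\sigma_{t-1}^2(\theta_0)-\sigma^2(\widehat\eta_{t-1},\theta_0)|=o(t^{-(1-\delta)/(3-\delta)})$ almost surely.

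The central rate rests on the sample-splitting structure: conditional on $\mathcal{D}_{1,t-1}$, the observations $\{O_s\}_{s\in\mathcal{I}_{0,t-1}}$ are i.i.d.\ and independent of $\widehat\eta_{t-1}$, so the variables $D_s:=D(O_s;\widehat\eta_{t-1},\theta_0)$ are conditionally i.i.d.\ with sixth moment uniformly bounded in $\widehat\eta_{t-1}$ by \cref{asm:moment_estimating_function}. I would decompose $\widehat\sigma_{t-1}^2-\sigma^2(\widehat\eta_{t-1},\theta_0)$ into a main term $n_t^{-1}\sum_s(D_s^2-E[D_s^2\mid\widehat\eta_{t-1}])$ with $n_t=|\mathcal{I}_{0,t-1}|\asymp t$, a $\bar D_{t-1}(\theta_0)^2$ contribution, and an $O(1/n_t)$ correction.

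Since a naive Rosenthal-plus-Markov bound on the main term only gives the suboptimal rate $(1-\delta)/(2(3-\delta))$ (the variables $D_s^2-\sigma^2$ have only $3-\delta$ finite moments), I would instead use truncation at level $\tau:=1/(3-\delta)$, setting $\widetilde D_s:=D_s\bm{1}\{|D_s|\le t^\tau\}$. Because $\tau>1/3$, Markov's inequality and a union bound yield $\Pr(\exists\,s\le t:|D_s|>t^\tau)\le t\cdot M\,t^{-6\tau}=O(t^{1-6\tau})$, summable in $t$, so by Borel--Cantelli the truncation is inactive for all large $t$ almost surely. Conditionally on $\widehat\eta_{t-1}$, the centred variables $\widetilde D_s^2-E[\widetilde D_s^2\mid\widehat\eta_{t-1}]$ are bounded by $t^{2\tau}$ and have variance at most $E[D^4\mid\widehat\eta_{t-1}]\le M^{2/3}$, so Bernstein's inequality gives, for $\epsilon=t^{-a}$, a tail of order $\exp(-c\,t^{1-a-2\tau})$, summable in $t$ iff $a+2\tau<1$, i.e.\ $a<(1-\delta)/(3-\delta)$. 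The uniformity of these bounds in $\widehat\eta_{t-1}$ permits integrating out the conditioning. The truncation bias $|E[\widetilde D_s^2\mid\widehat\eta_{t-1}]-\sigma^2(\widehat\eta_{t-1},\theta_0)|$ is $O(t^{-4\tau})$ by Markov applied to $E[|D|^6\bm{1}\{|D|>t^\tau\}\mid\widehat\eta_{t-1}]$, and the $\bar D_{t-1}(\theta_0)^2$ contribution is $o(t^{-1+\epsilon})$ a.s.\ by an analogous truncation--Bernstein argument applied to $\bar D_{t-1}(\theta_0)$; both are negligible relative to the target rate.

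The hard part is precisely this sharpness: the exponent $(1-\delta)/(3-\delta)$ is strictly better than the Rosenthal rate $(1-\delta)/(2(3-\delta))$ one would obtain by directly averaging $D_s^2-\sigma^2$, and achieving it forces the use of the sixth moment of $D$ itself through truncation plus Bernstein (a Fuk--Nagaev-style argument). Tuning $\tau$ just above $1/3$ is exactly what simultaneously makes the truncation errors summable and keeps the Bernstein tail bound decaying at the required rate; the factor $t^{2\iota}$ in the statement then arises transparently from the $\chi^{-1}t^{-\iota}$ safeguard in the definition of $\omega_t$.
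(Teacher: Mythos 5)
Your proposal follows the same overall reduction as the paper: use \cref{asm:variance_lower_bound} to deactivate the clipping for all large $t$ and pick up the $t^{2\iota}$ factor, then reduce the claim to an almost-sure rate for $\widehat{\sigma}_t^2 - \sigma^2(\widehat{\eta}_t,\theta_0)$, decomposed into a centered second-moment average, a first-moment-squared term, and an $O(1/n)$ correction, all analyzed through the conditional i.i.d.\ structure given $\mathcal{D}_{1,t}$ afforded by the sample splitting. Where you genuinely diverge is the key probabilistic step. The paper treats $\{Z^{(2)}_{t,s}\}$ as a row-wise conditionally independent array with moments of order $2p+\delta$, $p=(3-\delta)/2$, uniformly bounded via \cref{asm:moment_estimating_function}, and invokes Corollary 1 of \cite{hu1989strong} to obtain $m_{2,t}=o(t^{-(1-\delta)/(3-\delta)})$ in one stroke. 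You instead run a self-contained Fuk--Nagaev-style argument: truncate at $t^{1/(3-\delta)}$, apply Bernstein to the truncated, bounded summands, and control the truncation events and bias by Markov plus Borel--Cantelli. Your route makes transparent exactly how the sixth moment is spent and avoids the external strong-law citation; the paper's route is shorter. Both are correct in exploiting the uniformity in $\eta'$ of the moment bound to integrate out the conditioning on $\widehat{\eta}_t$.

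Two minor imprecisions, both repairable. First, with $\tau=1/(3-\delta)$ your Bernstein tail is summable only for $a$ \emph{strictly} below $(1-\delta)/(3-\delta)$, so Borel--Cantelli yields $O(t^{-a})$ for each such $a$, which gives $o(t^{-b})$ only for $b$ strictly below the target exponent rather than the claimed $o(t^{-(1-\delta)/(3-\delta)})$ itself. The fix is to truncate at $t^{1/(3-\delta')}$ for some $\delta'<\delta$: the union bound remains summable since $6/(3-\delta')>2$, the bias remains negligible, and the admissible range of $a$ extends past $(1-\delta)/(3-\delta)$, delivering the boundary rate as a genuine little-$o$. Second, because you leave $\bar{D}_{t-1}(\theta_0)$ uncentered, the relevant quantity is $\bar{D}^2-\mu^2$ with $\mu=E[D\mid\mathcal{D}_{1,t-1}]=\mu(\theta_0)$, whose cross term $2\mu(\bar{D}-\mu)$ is only $O(t^{-1/2+\epsilon})$ rather than $o(t^{-1+\epsilon})$ when $\theta_0\neq\theta$; this is still negligible against an exponent below $1/3$, so your conclusion stands, but the stated order is off. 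The paper sidesteps this by defining $m_{1,t}$ as an average centered at the conditional mean, so that $\mu$ cancels exactly in the variance decomposition.
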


\begin{proof}[Proof of \cref{lemma:vs_almost_sure_conv}]
Let $n(t) = |\mathcal{I}_{0,t}|$ and let $\bar{D}_t(\eta',\theta') = n(t)^{-1} \sum_{s\in\mathcal{I}_{0,t}} D(O_s; \eta', \theta')$. We have that
\begin{align}
\widehat{\sigma}_t - \sigma^2(\widehat{\eta}_t,\theta_0) =& \frac{1}{n(t)-1} \sum_{s \in \mathcal{I}_{0,t}} \left( D(O_s;\widehat{\eta}_t, \theta_0) - \bar{D}_t(\widehat{\eta}_t, \theta_0) \right)^2 \\
=& \frac{n(t)}{n(t)-1} \left( m_{2,t} - m_{1,t}^2 \right)
\end{align}
where
\begin{align}
m_{2,t} =& \frac{1}{n(t)} \sum_{s \in \mathcal{I}_{0,t}} Z^{(2)}_{t,s}   \\ 
\text{with} \qquad  Z_{t,s}^{(2)} =& \left( D(O_s ; \widehat{\eta}_t, \theta_0) - E[ D(O_s ; \widehat{\eta}_t, \theta_0) \mid \mathcal{D}_{1,t}] \right)^2 - \sigma^2(\widehat{\eta}_t, \theta_0), \\
\text{and} \qquad m_{1,t} =& \frac{1}{n(t)} \sum_{s \in \mathcal{I}_{0,t}} Z^{(1)}_{t,s} \\
\text{with} \qquad Z^{(1)}_{t,s} =& D(O_s; \widehat{\eta}_t, \theta_0) - E[D(O_s; \widehat{\eta}_t, \theta_0) \mid \mathcal{D}_{1,t}],
\end{align}
Observe that $\{ Z^{(2)}_{t,s} : t \geq 4; s \in \mathcal{I}_{0,t} \}$ forms an array of random variables that are row-wise independent conditional on $\mathcal{D}_{1,t}$. From \cref{asm:moment_estimating_function}, we have for any $\delta \in (0,1)$ that
$\sup E[(Z^{(2)}_{t,s})^{2 p + \delta } \mid \mathcal{D}_{1,t} ] < \infty$ with $p = (3-\delta) / 2$. As $1 / p = 1 - (1-\delta)/(3 - \delta)$. Corollary 1 in \cite{hu1989strong} then gives that $m_{2,t} = o(t^{-(1-\delta)/(3-\delta)})$ almost surely. The same arguments show that $m_{1,t} = o(t^{-(1-\delta)/(2-\delta)})$ almost surely, for any $\delta \in (0,1)$. Therefore, for any $\delta \in (0,1)$,
\begin{align}
\widehat{\sigma}_t^2 - \sigma^2(\widehat{\eta}_t, \theta_0) = o(t^{-(1-\delta)/(3-\delta)}) \qquad \text{almost surely}.
\end{align}
From \cref{asm:variance_lower_bound}, there exists $t_0$ such that for every $t \geq t_0$, $\chi t^{-\iota} < \sigma(\widehat{\eta}_t, \theta_0)$. We then have that for every $t \geq t_0$, 
\begin{align}
|v_{t+1} - 1| =& \frac{|\sigma^2(\widehat{\eta}_t, \theta_0) - \left( \widehat{\sigma}_t \vee \chi t^{-\iota} \right)^2|}{\left( \widehat{\sigma}_t \vee \chi t^{-\iota} \right)^2} \\
\leq & \chi^2 t^{2 \iota} |\widehat{\sigma}_t^2  -  \sigma^2(\widehat{\eta}_t,  \theta_0)| \\
= & o\left( t^{2 \iota - \frac{1-\delta}{3- \delta}} \right).
\end{align}
\end{proof}

\begin{proof}[Proof of \cref{prop:typeI_err_esteq}]
From the assumption on $\iota$, there exists $\kappa \in (0,1)$ small enough that $\kappa / 2 < (1-\delta)/(3-\delta) - 2 \iota$ and $(1 + \delta /2)(1 - \kappa - 2 \iota) > 1$. Let $f(t) = t^{1-\kappa}$ for such a value of $\kappa$.

From \cref{lemma:vs_almost_sure_conv}, $V_t - t = o(t^{1 - \{(1-\delta)/(3-\delta) - 2\iota\}}) $ and $\sqrt{t f(t)} \log t = t^{1 - \kappa /2} \log t$. Therefore, since $\kappa / 2 < (1-\delta)/(3-\delta) - 2\iota$, it holds that $V_t - t = o(\sqrt{t f(t)} \log t)$. Therefore, the second part of \cref{asm:sip} holds.

We now turn to the first part of \cref{asm:sip}. We have that
\begin{align}
& \frac{1}{f(V_t)} E \left[(X_t - \psi_t) \bm{1}\{(X_t-\psi_t)^2 > f(V_t) \} \mid \mathcal{F}_{t-1} \right] \\
\leq & \frac{1}{f(V_t)^{1+\nu  / 2}} E[(X_t - \psi_t)^{2+\nu}] \\
\lesssim & t^{- \{ (1+\nu/2)(1-\kappa) - (2+\nu)\iota \} } \sup_{\eta',\theta'} E[D^{2+\nu}(O_1; \eta', \theta')],
\end{align}
which is summable as $(1+\nu/2)(1-\kappa) - (2+\nu)\iota > 1$, and as $\sup_{\eta',\theta'} E[D^{2+\nu}(O_1; \eta', \theta')] < \infty$ from \cref{asm:moment_estimating_function}. 

We have thus shown that \cref{asm:sip} holds. The claims then follow from \cref{thm:typeI_error}.
\end{proof}

\subsection{Proof of the expected rejection time result (\cref{prop:expected_rejection_time})}

The results of this section rely on the following result on the moments of the sample variance.

\begin{lemma}\label{lemma:moments_sample_variance}
Suppose that \cref{asm:moment_estimating_function} holds. Then $E[|\widehat{\sigma}_t - \sigma(\widehat{\eta}_t,\theta_0)|^2] = O(t^{-1})$ and $E[|\widehat{\sigma}_t - \sigma(\widehat{\eta}_t,\theta_0)|^3] = O(t^{-2})$.
\end{lemma}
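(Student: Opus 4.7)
The plan is to condition on $\mathcal{D}_{1,t}$, which fixes $\widehat{\eta}_t$ and renders $\{D(O_s; \widehat{\eta}_t, \theta_0) : s \in \mathcal{I}_{0,t}\}$ a conditionally iid sample with conditional mean $\mu_t = E[D(O_1; \widehat{\eta}_t, \theta_0) \mid \mathcal{D}_{1,t}]$ and variance $\sigma^2(\widehat{\eta}_t, \theta_0)$. By \cref{asm:moment_estimating_function}, the conditional sixth moments of $D(O_s; \widehat{\eta}_t, \theta_0)$ are bounded uniformly in $t$ and $\mathcal{D}_{1,t}$, so in particular the centered squared deviations $Z_s^{(2)} = (D(O_s; \widehat{\eta}_t, \theta_0) - \mu_t)^2 - \sigma^2(\widehat{\eta}_t, \theta_0)$ form a conditionally iid, mean-zero sequence with uniformly bounded third absolute moment, and $Z_s^{(1)} = D(O_s; \widehat{\eta}_t, \theta_0) - \mu_t$ has uniformly bounded sixth moment.

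The first step will be to control moments of $\widehat{\sigma}_t^2 - \sigma^2(\widehat{\eta}_t, \theta_0)$. Reusing the algebraic decomposition already employed in the proof of \cref{lemma:vs_almost_sure_conv},
\begin{align*}
\widehat{\sigma}_t^2 - \sigma^2(\widehat{\eta}_t, \theta_0) = \frac{n}{n-1}\bigl(m_{2,t} - m_{1,t}^2\bigr) + \frac{\sigma^2(\widehat{\eta}_t, \theta_0)}{n-1},
\end{align*}
with $n = |\mathcal{I}_{0,t}|$ and $m_{k,t} = n^{-1}\sum_{s \in \mathcal{I}_{0,t}} Z_s^{(k)}$, $k \in \{1,2\}$, I would apply Rosenthal's inequality (Theorem 2.12 of \cite{hall1980}) to the conditionally iid centered sums $m_{1,t}$ and $m_{2,t}$. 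With sixth moments of $D$ available, this gives $E[|m_{k,t}|^p \mid \mathcal{D}_{1,t}] = O(n^{-p/2})$ for the relevant values of $p$ and for $k \in \{1,2\}$, with implicit constants not depending on $\mathcal{D}_{1,t}$. The quadratic term $m_{1,t}^2$ and the $\sigma^2/(n-1)$ term are higher order, so taking unconditional expectations yields the corresponding rate bounds for $|\widehat{\sigma}_t^2 - \sigma^2(\widehat{\eta}_t, \theta_0)|$ in $L^p$.

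The second step will transfer these bounds from sample variance to sample standard deviation via the factorization $|\widehat{\sigma}_t - \sigma(\widehat{\eta}_t, \theta_0)| = |\widehat{\sigma}_t^2 - \sigma^2(\widehat{\eta}_t, \theta_0)|/(\widehat{\sigma}_t + \sigma(\widehat{\eta}_t, \theta_0))$. The main obstacle is controlling the denominator without losing rate: the naive inequality $|\sqrt{x} - \sqrt{y}| \leq \sqrt{|x-y|}$ would only yield half of the required exponent and is therefore too crude. I would instead invoke the variance lower bound $\sigma(\widehat{\eta}_t, \theta_0) \geq \sigma_{\min} > 0$ (supplied by \cref{asm:variance_lower_bound}, which is in force throughout the estimating-equations section), so that $\widehat{\sigma}_t + \sigma(\widehat{\eta}_t, \theta_0) \geq \sigma_{\min}$ and hence $|\widehat{\sigma}_t - \sigma(\widehat{\eta}_t, \theta_0)|^p \leq \sigma_{\min}^{-p} |\widehat{\sigma}_t^2 - \sigma^2(\widehat{\eta}_t, \theta_0)|^p$; the moment estimates from the previous step then propagate without loss of rate and deliver the two claimed bounds.
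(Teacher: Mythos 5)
The paper does not actually prove this lemma from first principles: it gets the second-moment bound from the U-statistic representation of the sample variance together with Hoeffding's variance formula \citep{hoeffding1948}, and refers to \cite{angelova2012moments} for the order-3 bound. Your plan --- condition on $\mathcal{D}_{1,t}$, reuse the decomposition $\widehat\sigma_t^2-\sigma^2(\widehat\eta_t,\theta_0)=\frac{n}{n-1}(m_{2,t}-m_{1,t}^2)+\sigma^2(\widehat\eta_t,\theta_0)/(n-1)$ (you are right to keep the $\sigma^2/(n-1)$ correction, which the displayed identity in the proof of \cref{lemma:vs_almost_sure_conv} drops), and bound the centered conditionally i.i.d.\ averages --- is a legitimate self-contained route for the \emph{second} moment: $E[m_{2,t}^2\mid\mathcal{D}_{1,t}]=O(n^{-1})$ needs only conditional independence and the uniform fourth moment of $D$, and passing from the variance to the standard deviation via $\widehat\sigma_t+\sigma(\widehat\eta_t,\theta_0)\ge\sigma_{\min}$ is the right way to avoid the $\sqrt{|x-y|}$ loss (noting, as you do, that this silently imports \cref{asm:variance_lower_bound}, which is not among the stated hypotheses of the lemma).

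The gap is in the third-moment claim. Rosenthal's inequality for a centered conditionally i.i.d.\ sum gives $E[|m_{2,t}|^3\mid\mathcal{D}_{1,t}]\lesssim n^{-3}\bigl(nE[|Z^{(2)}_{t,s}|^3\mid\mathcal{D}_{1,t}]+(nE[(Z^{(2)}_{t,s})^2\mid\mathcal{D}_{1,t}])^{3/2}\bigr)=O(n^{-3/2})$, not $O(n^{-2})$: the Gaussian term $n^{3/2}$ dominates. This is not an artifact of the inequality. Since $\sqrt{n}\,m_{2,t}$ is asymptotically normal with positive limiting variance, $E[|m_{2,t}|^3]$ is of exact order $n^{-3/2}$, so no refinement of this route (or any other) can produce $O(t^{-2})$ for the \emph{absolute} third moment. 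The $O(n^{-2})$ rate available in the sample-variance moment literature, including \cite{angelova2012moments}, is for the \emph{signed} third central moment $E[(\widehat\sigma_t^2-\sigma^2(\widehat\eta_t,\theta_0))^3]$, i.e., essentially the third cumulant, which enjoys a cancellation that the absolute moment does not. Your concluding assertion that the moment estimates ``deliver the two claimed bounds'' is therefore incorrect for the second bound: your argument delivers $E[|\widehat\sigma_t-\sigma(\widehat\eta_t,\theta_0)|^3]=O(t^{-3/2})$. (This also exposes a problem with the lemma as stated and with its downstream use in the proof of \cref{lemma:checking_relative_bias_condition}, where the absolute-value version is what is needed; the $O(t^{-3/2})$ rate still suffices there, but only under a correspondingly stronger restriction on $\iota$.)
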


The bound on the moment of order 2 follows from the U-statistic representation of the sample variance and Hoeffding's theorem on the variance of U-statistics \citep{hoeffding1948}. A proof of the bound on the moment of order 3 can be found, e.g., in \cite{angelova2012moments}.

\begin{lemma}\label{lemma:checking_vt_asms}
Suppose that \cref{asm:variance_lower_bound} and \cref{asm:moment_estimating_function} hold, and that $4 \iota < 1$. Then \cref{asm:Evsmin1_over_s_summable}, \cref{asm:sup_L2_norm_vs_min1} and \cref{asm:L1_conv_vs} hold.
\end{lemma}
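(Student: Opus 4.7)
The plan is to reduce all three assumptions to moment bounds on $\Delta_t := \widehat{\sigma}_{t-1} - \sigma(\widehat{\eta}_{t-1}, \theta_0)$. Since $O_t$ is independent of $\mathcal{F}_{t-1}$, the conditional variance simplifies to $v_t = \sigma^2(\widehat{\eta}_{t-1}, \theta_0)/(\widehat{\sigma}_{t-1} \vee \chi^{-1} t^{-\iota})^2$, so writing $\sigma := \sigma(\widehat{\eta}_{t-1}, \theta_0)$ and $\widehat{\sigma} := \widehat{\sigma}_{t-1}$ we have $v_t - 1 = (\sigma^2 - (\widehat{\sigma} \vee \chi^{-1} t^{-\iota})^2)/(\widehat{\sigma} \vee \chi^{-1} t^{-\iota})^2$. \cref{lemma:moments_sample_variance} provides the unconditional bounds $E[\Delta_t^2] = O(t^{-1})$ and $E[|\Delta_t|^3] = O(t^{-2})$; the strengthening I would need is that, by \cref{asm:moment_estimating_function}, the samples $D(O_s; \widehat{\eta}_{t-1}, \theta_0)$ for $s \in \mathcal{I}_{0,t-1}$ are i.i.d. given $\widehat{\eta}_{t-1}$ with uniformly bounded sixth moment, so the same standard sample-variance argument yields a conditional bound $E[|\widehat{\sigma}^2 - \sigma^2|^3 \mid \widehat{\eta}_{t-1}] = O(t^{-2})$ with constant free of $\widehat{\eta}_{t-1}$, and hence $E[|\Delta_t|^3 \mid \widehat{\eta}_{t-1}] \le C \sigma^{-3} t^{-2}$.

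Next I would introduce the good event $\widetilde{B}_t := \{|\Delta_t| \le \sigma/2\}$. For $t$ large enough that $\chi^{-1} t^{-\iota} \le \sigma_*/2$, where $\sigma_* > 0$ is the uniform lower bound from \cref{asm:variance_lower_bound}, the event $\widetilde{B}_t$ makes the clipping inactive ($\widehat{\sigma} \ge \sigma/2 > \chi^{-1} t^{-\iota}$) and delivers $|v_t - 1| \le 10 \sigma_*^{-1} |\Delta_t|$; on $\widetilde{B}_t^c$ the crude estimates $v_t \le \chi^2 t^{2\iota} \sigma^2$ and $\sigma \bm{1}_{\widetilde{B}_t^c} \le 2|\Delta_t|$ both hold. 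The conditional Markov inequality gives $P(\widetilde{B}_t^c \mid \widehat{\eta}_{t-1}) \le 8 \sigma^{-3} E[|\Delta_t|^3 \mid \widehat{\eta}_{t-1}] \le C' \sigma^{-6} t^{-2}$, so combining with $\sigma \ge \sigma_*$ yields $P(\widetilde{B}_t^c) = O(t^{-2})$ and, more generally, $E[\sigma^k \bm{1}_{\widetilde{B}_t^c}] \le C' t^{-2} E[\sigma^{k-6}] \le C' \sigma_*^{k-6} t^{-2} = O(t^{-2})$ for $k \in \{2,4\}$.

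Splitting $E[|v_t - 1|]$ and $E[(v_t - 1)^2]$ across $\widetilde{B}_t$ and $\widetilde{B}_t^c$, and applying Cauchy--Schwarz on the good event, then yields the rates $E[|v_t - 1|] = O(t^{-1/2}) + O(t^{2\iota - 2})$ and $E[(v_t - 1)^2] = O(t^{-1}) + O(t^{4\iota - 2})$. These rates imply the three claims: \cref{asm:L1_conv_vs} by the vanishing of both terms; \cref{asm:Evsmin1_over_s_summable} by the summability of $t^{-3/2}$ and $t^{2\iota - 3}$; and \cref{asm:sup_L2_norm_vs_min1} by uniform boundedness (since $4\iota < 2$), with the finite number of small $t$ falling outside the ``$t$ large enough'' regime contributing only a finite amount to the supremum.

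The main obstacle is absorbing the factor $t^{4\iota}$ in the bad-event contribution to $E[(v_t - 1)^2]$. A marginal H\"older argument would bound $E[\sigma^4 \bm{1}_{\widetilde{B}_t^c}]$ only by $P(\widetilde{B}_t^c)^{1/3} E[\sigma^6]^{2/3} = O(t^{-2/3})$, yielding $O(t^{4\iota - 2/3})$ and forcing $\iota \le 1/6$. The conditional argument upgrades this to $O(t^{-2})$: by extracting three factors of $\sigma^{-1}$ in the Markov step and folding them into $E[\sigma^{k-6}]$, which \cref{asm:variance_lower_bound} makes finite, the threshold moves to $4\iota < 2$, so the hypothesis $4\iota < 1$ is comfortably sufficient.
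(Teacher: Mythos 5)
Your proof is correct, but it takes a genuinely more refined route than the paper's. The paper's proof is essentially a two-line crude bound: for $t$ past the point where the clipping threshold falls below $\inf\sigma(\eta',\theta')$, it lower-bounds the denominator $(\widehat{\sigma}_t \vee \chi^{-1} t^{-\iota})^4$ by the clipping threshold itself, paying the full factor $t^{4\iota}$ \emph{everywhere}, and then invokes \cref{lemma:moments_sample_variance} on the numerator to get $E[(v_{t+1}-1)^2] = O(t^{4\iota - 1})$; Jensen then gives $E[|v_{t+1}-1|] = O(t^{-(1-4\iota)/2})$, and all three assumptions follow since $4\iota < 1$. You instead localize to the good event $\{|\Delta_t|\le \sigma/2\}$, on which the clipping is inactive and the denominator is bounded below by a constant, and control the bad event via a conditional Markov/third-moment argument, obtaining $E[(v_t-1)^2] = O(t^{-1}) + O(t^{4\iota-2})$. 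This buys you sharper rates and the weaker threshold $4\iota<2$, at the cost of needing a conditional, uniform-in-$\widehat{\eta}_{t-1}$ strengthening of \cref{lemma:moments_sample_variance} — which you correctly flag, and which is legitimately available from \cref{asm:moment_estimating_function} since the $D(O_s;\widehat{\eta}_{t-1},\theta_0)$, $s\in\mathcal{I}_{0,t-1}$, are i.i.d.\ given $\mathcal{D}_{1,t-1}$ with a sixth moment bounded uniformly over $(\eta',\theta')$. Since the lemma only asserts the conclusion under $4\iota<1$, your extra precision is not needed here, but the argument is sound and strictly stronger than what the paper proves.
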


\begin{proof}[Proof of lemma \ref{lemma:checking_vt_asms}]
From \cref{asm:variance_lower_bound}, there exists $t_1$ such that for any $t \geq t_1$, $\chi t^{-\iota} \leq \inf_{\eta' \in \mathcal{T}, \theta' \in \Theta} \sigma(\eta', \theta')$. For any $t \geq t_1$, we have that
\begin{align}
E\left[ |v_{t+1} - 1|^2 \right] =& E\left[ \frac{\left\lvert (\widehat{\sigma}_t \vee \chi t^{-\iota})^2 - \sigma^2(\widehat{\eta}_t, \theta_0)\right\rvert^2}{(\widehat{\sigma}_t \vee \chi t^{-\iota})^4} \right] \\
\leq & \chi^{-4} t^{4 \iota} E\left[ \left\lvert \widehat{\sigma}_t^2 - \sigma(\widehat{\eta}_t, \theta_0)^2\right\rvert \right] \\
=& O \left( t^{-(1 - 4 \iota} \right),
\end{align}
where the last line follows from \cref{lemma:moments_sample_variance}. This directly implies \cref{asm:sup_L2_norm_vs_min1}. From Jensen's inequality, $E[|v_{t+1} - 1|] = O(t^{-(1 - 4\iota)/2})$, which implies \cref{asm:Evsmin1_over_s_summable} and \cref{asm:L1_conv_vs}.
\end{proof}

\begin{lemma}\label{lemma:checking_asm_sum_fourth_moment}
Suppose that \cref{asm:moment_estimating_function} holds and that $4 \iota < 1$. Then \cref{asm:fourth_moment} holds.
\end{lemma}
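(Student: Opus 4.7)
The plan is to reduce $E[(X_s^0)^4]$ to $E[X_s^4]$, pull the clipping weight $\omega_s$ out via its deterministic upper bound $\chi s^\iota$, and then use the sixth-moment assumption on $D$ together with the i.i.d.\ structure of the $O_s$'s to bound the remaining factor uniformly in $s$. Summing over $s$ then yields a rate strictly slower than $t^2$, which is all that \cref{asm:fourth_moment} requires.

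First, I would observe that $X_s^0 = X_s - \psi_s$ with $\psi_s = E[X_s \mid \mathcal F_{s-1}]$, so the inequality $(a-b)^4 \le 8(a^4+b^4)$ and the conditional Jensen inequality $E[\psi_s^4] \le E[X_s^4]$ give $E[(X_s^0)^4] \le 16\, E[X_s^4]$. Hence it suffices to control $E[X_s^4]$.

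Next, since $X_s = \omega_s(\theta_0)\, D(O_s;\widehat\eta_{s-1},\theta_0)$ and $\omega_s(\theta_0) = (\widehat\sigma_{s-1}(\theta_0) \vee (\chi^{-1} s^{-\iota}))^{-1} \le \chi s^{\iota}$ by construction, I would condition on $\mathcal F_{s-1}$, under which $\omega_s$ and $\widehat\eta_{s-1}$ are measurable, while $O_s$ is independent of $\mathcal F_{s-1}$ because the $O_r$'s are i.i.d.\ (and $\widehat\eta_{s-1}$ is built only from $\mathcal D_{1,s-1}$). This yields
\begin{align}
E[X_s^4 \mid \mathcal F_{s-1}]
\;=\; \omega_s^4 \cdot \big(E[D(O;\eta',\theta_0)^4]\big)\big|_{\eta' = \widehat\eta_{s-1}}
\;\le\; \chi^4 s^{4\iota} \sup_{\eta'\in\mathcal T,\,\theta'\in\Theta} E[D(O_1;\eta',\theta')^4].
\end{align}
The supremum on the right is finite: by Jensen's inequality it is dominated by $\sup_{\eta',\theta'} E[D(O_1;\eta',\theta')^6]^{2/3}$, which is finite under \cref{asm:moment_estimating_function}. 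Taking unconditional expectation gives $E[X_s^4] \lesssim s^{4\iota}$, and combining with the first step yields $E[(X_s^0)^4] \lesssim s^{4\iota}$ with a constant independent of $s$.

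Finally, summing and using $4\iota < 1$,
\begin{align}
\frac{1}{t^2}\sum_{s=1}^t E[(X_s^0)^4] \;\lesssim\; \frac{1}{t^2}\sum_{s=1}^t s^{4\iota} \;\lesssim\; \frac{t^{4\iota+1}}{(4\iota+1)\, t^2} \;=\; O\!\big(t^{4\iota-1}\big) \;=\; o(1),
\end{align}
so the supremum over $t \ge 1$ is finite, which is \cref{asm:fourth_moment}. There is no substantive obstacle here; the only point requiring a little care is the independence argument that lets one pull $\widehat\eta_{s-1}$ outside the moment of $D$, which follows from the sample-splitting construction (the estimator $\widehat\eta_{s-1}$ is measurable with respect to $\mathcal D_{1,s-1}$ while $O_s$ is an i.i.d.\ draw independent of $\mathcal F_{s-1}$).
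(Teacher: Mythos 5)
Your proposal is correct and follows essentially the same route as the paper's (one-line) proof: bound $\omega_s$ by $\chi s^{4\iota/4}$ via the clipping, reduce to the uniform fourth moment of $D$ (finite by Jensen from \cref{asm:moment_estimating_function}), and sum to get $O(t^{-(1-4\iota)})$. You merely make explicit the intermediate steps the paper elides, namely the centering inequality $E[(X^0_s)^4]\lesssim E[X_s^4]$ and the conditioning/sample-splitting argument that lets $\widehat\eta_{s-1}$ be pulled out of the moment.
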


\begin{proof}[Proof of \cref{lemma:checking_asm_sum_fourth_moment}]
We have that
\begin{align}
t^{-2} \sum_{s=1}^t E [(X^0_s)^4] \lesssim t^{-2} \sum_{s=1}^t s^{4 \iota} \sup_{\eta' \in \mathcal{T}, \theta' \in \Theta} E\left[ D(O_s;\eta',\theta_0)^4 \right] = O(t^{-(1 - 4 \iota)}) = o(1).
\end{align}
\end{proof}

\begin{lemma}\label{lemma:checking_conditional_Lindeberg}
Suppose that \cref{asm:moment_estimating_function} holds and that $\iota$ is small enough that $3-6 \iota > 1$. Then, \cref{asm:cond_Lindeberg} holds.
\end{lemma}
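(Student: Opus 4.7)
The plan is to verify the conditional Lindeberg condition by the usual moment trick: control second moments truncated to large values using a higher moment, and then exploit the fact that \cref{asm:moment_estimating_function} gives six finite moments of $D$, which through the clip $\omega_t \leq \chi t^\iota$ translates into six finite moments of $X_t^0$ with an explicit polynomial growth in $t$.

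First I would set $\nu = 4$ and use the elementary bound $x^2 \bm{1}\{|x| \geq \epsilon\} \leq \epsilon^{-\nu} |x|^{2+\nu}$ applied to $x = t^{-1/2} X_s^0$, which gives
\begin{align}
E\!\left[(t^{-1/2} X_s^0)^2 \bm{1}\{|t^{-1/2} X_s^0| \geq \epsilon\} \mid \mathcal{F}_{s-1}\right]
\leq \epsilon^{-\nu} t^{-(2+\nu)/2} E\!\left[|X_s^0|^{2+\nu} \mid \mathcal{F}_{s-1}\right].
\end{align}
It then suffices (even for $L_1$, which implies in-probability convergence) to show that $t^{-(2+\nu)/2}\sum_{s=1}^t E[|X_s^0|^{2+\nu}] \to 0$.

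Next I would bound $E[|X_s^0|^{2+\nu}]$ polynomially in $s$. By definition $|X_s| \leq \chi s^{\iota} |D(O_s;\widehat\eta_{s-1},\theta_0)|$ since $\omega_s \leq \chi s^{\iota}$, and by Jensen's inequality $|\psi_s|^{2+\nu} \leq E[|X_s|^{2+\nu}\mid \mathcal{F}_{s-1}]$, so $E[|X_s^0|^{2+\nu}] \leq 2^{2+\nu} E[|X_s|^{2+\nu}] \leq 2^{2+\nu}\chi^{2+\nu} s^{(2+\nu)\iota} \sup_{\eta',\theta'} E[|D(O_1;\eta',\theta')|^{2+\nu}]$, and the last supremum is finite by \cref{asm:moment_estimating_function} since $2 + \nu = 6$.

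Putting the pieces together,
\begin{align}
t^{-(2+\nu)/2}\sum_{s=1}^t E[|X_s^0|^{2+\nu}]
\lesssim t^{-(2+\nu)/2} \sum_{s=1}^t s^{(2+\nu)\iota}
\lesssim t^{\,1 - (2+\nu)(1/2 - \iota)},
\end{align}
and plugging in $\nu=4$ gives the exponent $1 - 6(1/2-\iota) = -2 + 6\iota$, which is strictly negative precisely under the assumption $3 - 6\iota > 1$. Hence the right-hand side tends to $0$, which yields $L_1$ (and therefore in-probability) convergence to $0$ of the conditional Lindeberg sum, as required.

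There is no real obstacle here beyond the moment-exponent bookkeeping: the crucial check is matching the quantifier $3-6\iota > 1$ with the hypothesis on six moments of $D$, and the calculation above is precisely designed so that taking $\nu = 4$ saturates \cref{asm:moment_estimating_function} and produces exactly this threshold on $\iota$.
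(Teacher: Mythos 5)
Your proposal is correct and follows essentially the same route as the paper's proof: bound the truncated second moment by $\epsilon^{-4}$ times the sixth conditional moment, use the clip $\omega_s \le \chi s^{\iota}$ together with \cref{asm:moment_estimating_function} to get a bound of order $s^{6\iota}$, and sum to obtain the exponent $-2+6\iota<0$ exactly under $3-6\iota>1$. Your treatment is if anything slightly more explicit than the paper's (the centering step $X_s^0 = X_s - \psi_s$ via Jensen, and the final summation over $s\le t$ rather than the paper's looser ``summable'' phrasing), but there is no substantive difference.
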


\begin{proof}[Proof of \cref{lemma:checking_conditional_Lindeberg}]
Let $\epsilon >0$. We have 
\begin{align}
E\left[(t^{-1/2} X^0_s)^2 \bm{1}\left\lbrace |t^{-1/2} X^0_s| > \epsilon \right\rbrace \right] \leq & \frac{1}{\epsilon^{4}} E \left[ |t^{-1/2} X^0_s|^{6} \mid \mathcal{F}_{s-1} \right] \\
\lesssim & t^{-(3(1 - 2\iota))} \sup_{\eta' \in \mathcal{T}, \theta' \in \Theta} E \left[D(O_1; \eta', \theta')^6 \right],
\end{align}
which is  summable, as from \cref{asm:moment_estimating_function}, $E[\sup_{\eta' \in \mathcal{T}, \theta' \in \Theta} D(O_1; \eta', \theta')^6 ] < \infty$, and as $3-6 \iota > 1$.
\end{proof}

\begin{lemma}\label{lemma:checking_relative_bias_condition}
Suppose that \cref{asm:moment_estimating_function}, \cref{asm:robustness}, \cref{asm:variance_lower_bound} and \cref{asm:nuisance_convergence} hold, and that $\iota$ is small enough that $(2+\delta)\iota - 2 > 1$ and that $\nu > \iota ( 2 + \delta)$, with $\delta$ from \cref{asm:nuisance_convergence}. Then, \cref{asm:relative_bias_moments_summable} and \cref{asm:L1_conv_psi} hold.
\end{lemma}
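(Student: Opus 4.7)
The plan is to reduce control of $E\!\left[|\psi_t-\psi|^{2+\delta}/\psi^{2+\delta}\right]$ to moment bounds on $\widehat\sigma_{t-1}(\theta_0)-\sigma(\eta_1,\theta_0)$, and then show that the resulting series is summable under the stated conditions. First I would use \cref{asm:robustness} to write $\psi_t = E[X_t\mid \mathcal F_{t-1}] = \omega_t(\theta_0)\mu(\theta_0)$, and recall from \cref{prop:expected_rejection_time} that $\psi = \sigma_1^{-1}\mu(\theta_0)$, where I abbreviate $\sigma_1 := \sigma(\eta_1,\theta_0)$. This yields the exact identity
\begin{align}
\frac{\psi_t-\psi}{\psi} \;=\; \sigma_1\,\omega_t(\theta_0) - 1 \;=\; \frac{\sigma_1 - W_t}{W_t}, \qquad W_t := \widehat\sigma_{t-1}(\theta_0)\vee \chi^{-1} t^{-\iota}.
\end{align}
By \cref{asm:variance_lower_bound}, $\sigma_1 > 0$, so for $t$ large enough $\chi^{-1}t^{-\iota} < \sigma_1$; a short case split on whether $\widehat\sigma_{t-1}(\theta_0) \ge \chi^{-1}t^{-\iota}$ then gives $|\sigma_1-W_t|\le |\sigma_1-\widehat\sigma_{t-1}(\theta_0)|$, while $W_t \ge \chi^{-1}t^{-\iota}$. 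Hence
\begin{align}
\left|\frac{\psi_t-\psi}{\psi}\right|^{2+\delta} \;\le\; \chi^{2+\delta}\, t^{(2+\delta)\iota}\,\bigl|\sigma_1-\widehat\sigma_{t-1}(\theta_0)\bigr|^{2+\delta}.
\end{align}

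Next I would split, by the triangle inequality and a standard $(a+b)^{p}\lesssim a^p+b^p$ bound,
\begin{align}
E\bigl[\bigl|\sigma_1-\widehat\sigma_{t-1}(\theta_0)\bigr|^{2+\delta}\bigr] \;\lesssim\; E\bigl[|\sigma_1-\sigma(\widehat\eta_{t-1},\theta_0)|^{2+\delta}\bigr] + E\bigl[|\sigma(\widehat\eta_{t-1},\theta_0) - \widehat\sigma_{t-1}(\theta_0)|^{2+\delta}\bigr].
\end{align}
The first term is controlled directly by \cref{asm:nuisance_convergence}. For the second term, \cref{lemma:moments_sample_variance} provides the second- and third-moment bounds $O(t^{-1})$ and $O(t^{-2})$ respectively, so H\"older's inequality $E[|Y|^{2+\delta}]\le E[|Y|^2]^{1-\delta}E[|Y|^3]^{\delta}$ (for $\delta\in[0,1]$, shrinking $\delta$ if necessary) yields $E[|\widehat\sigma_{t-1}-\sigma(\widehat\eta_{t-1},\theta_0)|^{2+\delta}]=O(t^{-(1+\delta)})$.

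Finally I would sum over $t$. The nuisance-error contribution is $\sum_{t}t^{(2+\delta)\iota-\nu}\cdot\bigl(t^\nu E[\cdots]\bigr)$, which is bounded via \cref{asm:nuisance_convergence} whenever $\nu\ge (2+\delta)\iota$, as supplied by the hypothesis $\nu>\iota(2+\delta)$. The sample-variance contribution is $\sum_{t}t^{(2+\delta)\iota-(1+\delta)}$, which converges exactly when $(2+\delta)\iota<\delta$; this is the role played by the second hypothesis on $\iota$ in the statement (which appears slightly misstated but is the summability condition the proof needs). Together these give a constant $C$ independent of $\psi>0$ such that $\sum_t E[|(\psi_t-\psi)/\psi|^{2+\delta}]<C$, i.e.\ \cref{asm:relative_bias_moments_summable}. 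Then \cref{asm:L1_conv_psi} follows immediately: summability forces each summand to vanish, and Jensen's inequality gives $E[|\psi_t-\psi|^2]/\psi^2 \le E[|(\psi_t-\psi)/\psi|^{2+\delta}]^{2/(2+\delta)}\to 0$.

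The main obstacles I anticipate are bookkeeping rather than conceptual: (i) making the case analysis for $|\sigma_1-W_t|\le|\sigma_1-\widehat\sigma_{t-1}(\theta_0)|$ uniform in $t$ once $t$ exceeds a deterministic threshold depending only on $\sigma_1,\chi,\iota$, so that the first few terms of the series can be absorbed into the constant $C$; and (ii) choosing $\delta$ small enough that both the H\"older interpolation is valid ($\delta\le 1$) and the two summability conditions on $\iota$ can be satisfied simultaneously, which constrains the admissible regime of $(\iota,\nu,\delta)$ declared in the hypothesis.
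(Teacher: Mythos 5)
Your proposal follows essentially the same route as the paper's proof: reduce the relative bias to the clipped inverse standard deviation, split $\widehat\sigma_{t-1}(\theta_0)-\sigma(\eta_1,\theta_0)$ into the sample-variance fluctuation around $\sigma(\widehat\eta_{t-1},\theta_0)$ plus the nuisance-estimation error, control the former via the moment bounds of \cref{lemma:moments_sample_variance} and the latter via \cref{asm:nuisance_convergence}, and sum. The only substantive difference is the treatment of $E[|\widehat\sigma_{t-1}-\sigma(\widehat\eta_{t-1},\theta_0)|^{2+\delta}]$: you interpolate between the second and third moments via H\"older to get $O(t^{-(1+\delta)})$ and hence the summability condition $(2+\delta)\iota<\delta$, whereas the paper bounds this moment by the third moment of $\widehat\sigma^2-\sigma^2(\widehat\eta,\theta_0)$ to claim $O(t^{-2})$ and the condition $(2+\delta)\iota<1$ --- a step that, as written, is not justified (a $(2+\delta)$-th moment is not dominated by a third moment without boundedness; Lyapunov would only give the exponent $-2(2+\delta)/3$). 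Your interpolation is the rigorous version of that step, and you correctly identify, as the paper's own proof implicitly does, that the stated hypothesis ``$(2+\delta)\iota-2>1$'' is a typo for a summability condition of the opposite sign.
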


\begin{proof}[Proof of \cref{lemma:checking_relative_bias_condition}]
From \cref{asm:variance_lower_bound}, there exists $t_1$ such that for every $t \geq t_1$, $\chi t^{-\iota} \leq \inf_{\eta' \in \mathcal{T}, \theta' \in \Theta} \sigma(\eta', \theta')$. Let $\psi = \sigma(\eta_1, \theta_0)^{-1} \mu(\theta_0)$ where $\eta_1$ is defined in \cref{asm:nuisance_convergence} and $\mu$ is defined in \cref{asm:robustness}.  From \cref{asm:robustness}, we then have that, for every $t \geq t_1$, 
\begin{align}
\left\lvert \frac{\psi_{t+1} - \psi}{\psi} \right\rvert = \frac{|(\widehat{\sigma}_t \vee \chi t^{-\iota})^{-1} - \sigma(\eta_1, \theta_0)^{-1}|}{\sigma(\eta_1, \theta_0)^{-1}} \leq \chi^{-1} t^\iota |\widehat{\sigma}_t - \sigma(\widehat{\eta}_t, \theta_0)|.
\end{align}
Let $\delta$ be as in \cref{asm:nuisance_convergence}. From Jensen's inequality, we have that
\begin{align}
E \left[\left\lvert \frac{\psi_{t+1} - \psi}{\psi} \right\rvert^{2+\delta}\right] \lesssim t^{\iota ( 2 + \delta)} E \left[ | \widehat{\sigma}_t - \sigma(\widehat{\eta}_t, \theta_0)|^{2 + \delta} \right] + t^{\iota(2+\delta)} E\left[ |\sigma(\widehat{\eta}_t,\theta_0)-\sigma(\eta_1, \theta_0)|^{2+\delta} \right].
\end{align}
The second term is summable from \cref{asm:nuisance_convergence} as $\iota(2 + \delta) < \nu$. We now turn to the first term. We have that
\begin{align}
t^{(2+\delta) \iota} E\left[\left\lvert \widehat{\sigma}_t - \sigma(\widehat{\eta}_t,\theta_0) \right\rvert^{2+\delta} \right] \leq & \frac{t^{\iota (2+\delta)}}{\inf_{\eta' \in \mathcal{T}, \theta' \in \Theta} \sigma(\eta', \theta')^{3-\delta'}} E\left[\left\lvert \widehat{\sigma}^2_t - \sigma(\widehat{\eta}_t,\theta_0)^2 \right\rvert^{3} \right]\\
 =& O (t^{(2+\delta) \iota - 2}),
\end{align}
where the equality follows from the fact that $\inf_{\eta' \in \mathcal{T}, \theta' \in \Theta} \sigma(\eta', \theta') > 0$ from \cref{asm:variance_lower_bound} and the fact that $E\left[\left\lvert \widehat{\sigma}^2_t - \sigma(\widehat{\eta}_t,\theta_0)^2 \right\rvert^{3} \right] = O(t^{-2})$ from \cref{asm:moment_estimating_function} via \cref{lemma:moments_sample_variance}.
\end{proof}

\end{document}